\def\E{ {\cal E} }
\def\F{ {\cal F} }
\def\Z{ {\cal Z} }
\def\>{\rangle}
\def\<{\langle}
\newcommand{\ket}[1]{| {#1} \rangle}
\newcommand{\abs}[1]{\left| {#1} \right|} 
\newcommand{\ketbra}[2]{\ensuremath{\left|#1\right\rangle\!\!\left\langle#2\right|}}
\newcommand{\matrixel}[3]{\ensuremath{\left\langle #1 \vphantom{#2#3} \right| #2 \left| #3 \vphantom{#1#2} \right\rangle}}
\newcommand{\tr}[1]{\mathrm{Tr}\left( #1 \right)}
\newcommand{\trr}[2]{\mathrm{Tr}_{#1}\left( #2 \right)}
\newcommand{\iden}{\mathbbm{1}}
\renewcommand{\v}[1]{\ensuremath{\boldsymbol #1}}
\DeclareMathOperator*{\argmin}{arg\,min}
\DeclareMathOperator*{\argmax}{arg\,max}
\definecolor{ppblue}{RGB}{46,117,182}
\definecolor{ppred}{RGB}{197, 90, 17}
\newcommand{\bmtilde}[1]{\bm{\tilde{#1}}}
\newcommand{\bmhat}[1]{\bm{\hat{#1}}}
\newcommand{\norm}[1]{\left\|#1\right\|}
\DeclareMathOperator{\Var}{Var}
\newlength{\figheight}
\newlength{\figwidth}
\theoremstyle{plain}
\newtheorem{thm}{Theorem}
\newtheorem{lem}[thm]{Lemma}
\newtheorem{prop}[thm]{Proposition}
\newtheorem{cor}[thm]{Corollary}
\theoremstyle{definition}
\newtheorem{defn}[thm]{Definition}
\newcommand{\includeTikz}[1]{\includegraphics{#1}}
\newcommand{\includeTikzz}[1]{\includeTikz{#1}}
\begin{document}

\title{Beyond the thermodynamic limit: finite-size corrections to state interconversion rates}

\author{Christopher T.~Chubb}
\email{paper@christopherchubb.com}
\affiliation{Centre for Engineered Quantum Systems, School of Physics, University of Sydney, Sydney NSW 2006, Australia.}
\orcid{0000-0002-2668-1567}

\author{Marco Tomamichel}
\affiliation{Centre for Quantum Software and Information, School of Software, University of Technology Sydney, Sydney NSW 2007, Australia.}
\orcid{0000-0001-5410-3329}

\author{Kamil Korzekwa}
\affiliation{Centre for Engineered Quantum Systems, School of Physics, University of Sydney, Sydney NSW 2006, Australia.}
\orcid{0000-0002-0683-5469}

\begin{abstract}
	Thermodynamics is traditionally constrained to the study of macroscopic systems whose energy fluctuations are negligible compared to their average energy. Here, we push beyond this thermodynamic limit by developing a mathematical framework to rigorously address the problem of thermodynamic transformations of finite-size systems. More formally, we analyse state interconversion under thermal operations and between arbitrary energy-incoherent states. We find precise relations between the optimal rate at which interconversion can take place and the desired infidelity of the final state when the system size is sufficiently large. These so-called second-order asymptotics provide a bridge between the extreme cases of single-shot thermodynamics and the asymptotic limit of infinitely large systems. We illustrate the utility of our results with several examples. We first show how thermodynamic cycles are affected by irreversibility due to finite-size effects. We then provide a precise expression for the gap between the distillable work and work of formation that opens away from the thermodynamic limit. Finally, we explain how the performance of a heat engine gets affected when the working body is small. We find that while perfect work cannot generally be extracted at Carnot efficiency, there are conditions under which these finite-size effects vanish. In deriving our results we also clarify relations between different notions of approximate majorisation.
\end{abstract}

\maketitle

\section{Introduction}

\paragraph*{Background.}

Thermodynamics forms an integral part of contemporary physics, providing us with invaluable rules that govern which transformations between macroscopic states are possible and which are not~\cite{giles1964mathematical}. In modern parlance thermodynamics is an example of a resource theory~\cite{coecke2016mathematical}. These provide us with a general framework to study the question of state interconversion by leveraging the structure imposed by the free states and operations within a given theory. The resource-theoretic approach has recently garnered renewed attention in the field of quantum information (see~\cite{horodecki2013quantumness} for a recent review) and allows us to quantify notions like entanglement~\cite{horodecki2009quantum}, coherence~\cite{baumgratz2014quantifying,marvian2016quantify,winter16} and asymmetry~\cite{marvian2013asymmetry,marvianthesis}. In the study of entanglement, for example, local operations and classical communication constitute the free operations and separable states are treated as free states. In an analogous way, a rigorous resource-theoretic formulation of quantum thermodynamics was provided in Refs.~\cite{janzing2000thermodynamic,horodecki2013fundamental,brandao2013resource}, with thermal Gibbs states being free and the laws of thermodynamics being captured by the restricted set of free operations known as \emph{thermal operations}.

One of the main problems studied within the resource theory of thermodynamics is single-shot state interconversion, i.e., identifying when it is possible to convert a given state to another using only free operations (or, alternatively, identifying what extra resources are necessary to enable such transformations). Although for general quantum states only partial results are known~\cite{lostaglio2015quantum,cwiklinski2015limitations,narasimhachar2015low} (see also Ref.~\cite{gour2017quantum} for the most recent progress), the full solution was found for the restricted case of transforming states with no coherence between distinct energy eigenspaces. For such \emph{energy-incoherent states} the necessary and sufficient conditions for single-shot state interconversion are given by a thermomajorisation relation between the initial and final states~\cite{horodecki2013fundamental}. As a result the allowed transformations are in general irreversible, which is captured by the fact that the amount of work needed to create a given state, the \emph{work of formation}, is larger than the amount of work one can extract from it, the \emph{distillable work}. We note that formally the thermomajorisation condition is strongly related to the majorisation condition appearing within the resource theory of entanglement when studying transformations between pure bipartite states~\cite{nielsen1999conditions}.

An important variant of the interconversion problem, lying on the opposite extreme of the single-shot case, is asymptotic state interconversion. In this case one considers having access to arbitrarily many copies of the initial state, and asks for the maximal conversion rate at which it is possible to transform instances of one state to another with asymptotically vanishing error. It was found that this rate is given by the ratio of non-equilibrium free energies of the initial and target states~\cite{brandao2013resource}. Thus, the asymptotic interconversion rate is directly linked with the amount of useful work that can be extracted on average from a given state. Moreover, in this regime all transformations become fully reversible, as work of formation and distillable work asymptotically coincide. Again, this result closely resembles that obtained in pure state entanglement theory, where the optimal interconversion rate is given by the ratio of the entanglement entropies of the initial and target states~\cite{bennett1996concentrating}.

In this work we study the interconversion problem in an intermediate regime, between the single-shot and asymptotic cases described above. We thus consider transformations of a finite number $n$ of instances of the input state and tolerate a non-zero error, which affects the optimal conversion rate. By developing the notions of approximate majorisation~\cite{horodecki2017approximate} and thermomajorisation, we extend a formal relationship between the resource theories of pure state entanglement and thermodynamics of incoherent states to the approximate case. This allows us to adapt recently developed tools for approximate entanglement transformations in Ref.~\cite{kumagai2017second} to study corrections to the asymptotic rates of thermodynamic transformations, the so-called second-order asymptotics (see, e.g., Refs~\cite{tomamichel12,li12,tomamicheltan14,datta14,datta15,wilde15b,tomamichel16,wildetomamichel17} for other recent studies of second-order asymptotics in quantum information). The crucial technical difference between Ref.~\cite{kumagai2017second} and our work is the reversed direction of the majorisation relation, resulting in free states being given by uniform states rather than pure states. The second-order corrections were known to scale as $1/\sqrt{n}$~\cite{brandao2013resource}, and our main technical contribution is identifying the exact constant (including its dependence on the error) and interpreting its thermodynamic meaning.

\medskip
\paragraph*{Motivation.}

Probably the most famous thermodynamic result concerns the irreversible nature of thermodynamic transformations, and is often captured by the oversimplified statement ``entropy has to grow''. The dynamics of a system interacting with a thermal bath is irreversible since transformations performed at finite speed lead to heat dissipation, resulting in a loss of information about the system. One thus often studies idealised scenarios, when the system undergoes changes so slowly that it stays approximately in thermal equilibrium at all times. In this quasi-static limit one recovers reversible dynamics. However, thermodynamic reversibility actually requires one more assumption that is usually made implicitly. Namely, the thermodynamic description is only valid when applied to systems whose energy fluctuations are much smaller than their average energy. This is true for macroscopic systems composed of $n\rightarrow\infty$ particles, in the so-called thermodynamic limit, and is reflected by the reversibility of the interconversion problem in the asymptotic limit. However, for finite $n$ the macroscopic results do not hold anymore, leading to another source of irreversibility.

In the emerging field of quantum thermodynamics (see Ref.~\cite{goold2016role} and references therein) a focus is placed on possible transformations of small quantum systems interacting with a thermal environment. The necessity to go beyond classical thermodynamics is motivated by the fact that at the nanoscale quantum effects, like coherence~\cite{aberg2014catalytic,lostaglio2015description,lostaglio2015quantum,cwiklinski2015limitations,uzdin2015equivalence,korzekwa2016extraction} and entanglement~\cite{alicki2013entanglement,perarnau2015extractable,brask2015autonomous}, start playing an important role. However, beyond these phenomena, in the quantum regime one also deals with systems composed of a finite number $n$ of particles. Hence, thermodynamic transformations of such systems are affected by the effective irreversibility discussed in the previous paragraph. The results we present in this paper provide a mathematical framework to rigorously address this problem. We thus provide a bridge between the extreme case of single-shot thermodynamics with $n=1$ and the asymptotic limit of $n\rightarrow\infty$, allowing us to study the irreversibility of thermodynamic processes in the intermediate regime of large but finite $n$. 

\medskip
\paragraph*{Main Results.}

In order to state our main results we first need to introduce some concepts that will be defined more formally in Sections~\ref{sec:setting} and~\ref{sec:prelim}. Let us consider a finite-dimensional quantum system, characterised by its Hamiltonian $H$, in the presence of a thermal bath at fixed temperature $T$. The  initial state of the system is in general out of thermal equilibrium, and the bath can be governed by an arbitrary Hamiltonian. Energy-conserving operations that interact the system with a bath in thermal equilibrium are then known as thermal operations. A simple example of a thermal operation just swaps the system with a bath (governed by the same Hamiltonian), replacing the initial state with a thermal Gibbs state. Gibbs states for $H$, denoted by $\gamma$, are thus free states in the resource-theoretic formulation of thermodynamics.

In the following we focus on a system comprised of a finite number $n$ of non-interacting subsystems, each governed by the Hamiltonian $H$. Let us consider a pair of subsystem states $\rho$ and $\sigma$ that both commute with $H$.\footnote{In a particular case of trivial Hamiltonian $H\propto\iden$, $\rho$ and $\sigma$ can be arbitrary quantum states.} Our results will be expressed in terms of two information quantities: the relative entropy~\cite{umegaki62} with the Gibbs state, $D(\cdot\|\gamma)$, and the relative entropy variance~\cite{tomamichel12,li12} with the Gibbs state, $V(\cdot\|\gamma)$. These quantities can also be interpreted thermodynamically: $k_B T \cdot D(\rho\|\gamma)$ is the difference between the generalised free energies of $\rho$ and $\gamma$ (with $k_B$ denoting the Boltzmann constant); and $V(\rho\|\gamma)$ is proportional to a generalised heat capacity of the system. The latter interpretation is justified since for \mbox{$\rho=\gamma'$} being a Gibbs state at temperature $T'\neq T$, the quantity $V(\gamma'\|\gamma)$ is proportional to the heat capacity at $T'$. We also note that $D(\rho\|\gamma)$ vanishes if and only if $\rho=\gamma$, whereas $V(\rho\|\gamma)$ vanishes whenever $\rho$ is proportional to the Gibbs state on the support of $\rho$, e.g., when $\rho$ is pure.

Let us now consider the problem of thermodynamic state interconversion between a finite number of instances of a state and for a fixed inverse temperature $\beta$ of the background bath. Formally, we are looking for the maximal rate $R$ for which there exists a thermal operation $\mathcal{E}^\beta$ such that $\mathcal{E}^\beta(\rho^{\otimes n}) = \tilde{\sigma}$ for some state $\tilde{\sigma}$ on $R n$ subsystems that is sufficiently close to $\sigma^{\otimes R n}$. To measure the proximity of two quantum states we will use infidelity, i.e.\ we require that $F(\sigma^{\otimes Rn}, \tilde{\sigma}) \geq 1 - \epsilon$ for some accuracy parameter \mbox{$\epsilon \in (0,1)$}, where $F(\cdot,\cdot)$ denotes Uhlmann's fidelity~\cite{uhlmann85}. The maximal conversion rate, denoted by $R^*(n, \epsilon)$, depends on both the number of subsystems~$n$ and the accuracy~$\epsilon$. We can assume that neither the initial state $\rho$ nor the target state $\sigma$ are the thermal state $\gamma$, as otherwise the interconversion problem is trivial. We then find the following expansions of $R^*(n, \epsilon)$ in $n$:
\begin{subequations}
\begin{align}
\label{eq:intro/general}	
&R^*(n, \epsilon) \simeq \frac{D(\rho\|\gamma)}{D(\sigma\|\gamma)} \left( 1 \!+\! \sqrt{\frac{V(\rho\|\gamma)}{n\, D(\rho\|\gamma)^2}}\, Z_{1/\nu}^{-1}(\epsilon) \right) \\
&\quad\simeq \frac{D(\rho\|\gamma)}{D(\sigma\|\gamma)} \left( 1 \!+\! \sqrt{\frac{V(\sigma\|\gamma)}{n\, D(\rho\|\gamma) D(\sigma\|\gamma)}}\, Z_{\nu}^{-1}(\epsilon) \right) \!,
\label{eq:intro/creation}
\end{align}
\end{subequations}
where $Z^{-1}_{\nu}$ is the inverse of the cumulative function of Rayleigh-normal distribution $Z_{\nu}$ introduced in Ref.~\cite{kumagai2017second} with $\nu$ given by
\begin{align}
\nu = \frac{V(\rho\|\gamma)/D(\rho\|\gamma)}{V(\sigma\|\gamma)/D(\sigma\|\gamma)}  \,,
\end{align}
and $\simeq$ denotes equality up to terms of order \mbox{$o(1/\sqrt{n})$}. We note that $Z_0 = \Phi$ is the cumulative normal distribution function and $Z_1$ is the cumulative Rayleigh distribution function. The inverse of the cumulative Rayleigh-normal distribution is typically negative for small values of $\epsilon$ (unless $\nu = 1$), and thus the finite-size correction term that scales as $1/\sqrt{n}$ is generally negative. For the special case \mbox{$V(\rho\|\gamma)=V(\sigma\|\gamma)=0$} (when $\nu$ is undefined) we provide an exact formula for \mbox{$R^*(n, \epsilon)$}, up to all orders in $n$.

In deriving our results we also prove an important relation between two different notions of approximate majorisation~\cite{horodecki2017approximate}. More precisely, we show that \emph{pre-} and \emph{post-majorisation}, which hold when the majorisation relation holds up to the smoothing of the major\emph{ising} or major\emph{ised} distribution, are equivalent. We further extend these concepts to thermomajorisation, which allows us to rigorously address the problem of approximate thermodynamic transformations.

\medskip
\paragraph*{Discussion.}

One of the main applications of our result is to the study of thermodynamic irreversibility. In the asymptotic limit, $n\rightarrow\infty$, the optimal conversion rate $R^*$ from $\rho$ to $\sigma$ is equal to the inverse of the conversion rate from $\sigma$ to $\rho$~\cite{brandao2013resource}. We can thus transform $\rho^{\otimes n}$ through $\sigma^{\otimes R^*n}$ back to $\rho^{\otimes n}$, so that the rate of concatenated transformations $R_r^*$ is equal to 1 and the process can be performed reversibly. However, using Eq.~\eqref{eq:intro/general} twice, one finds the correction term to reversibility rate $R_r^*$, which is proportional to $1/\sqrt{n}$. Moreover, if $\nu\neq 1$ this correction term is negative for small errors. In fact, it diverges when the error approaches zero, preventing a perfect reversible cycle. However, pairs of states with equal ratios of relative entropy and relative entropy variance with respect to the thermal state (such that $\nu=1$) are reversibly interconvertible up to second-order asymptotic corrections, mirroring a recent result in entanglement theory~\cite{ito2015asymptotic}. Thus, $\nu$ can be interpreted as the irreversibility parameter that quantifies the amount of infidelity of an approximate cyclic process.

One particular consequence of the discussed irreversibility is the difference between the distillable work, $W_D$, and the work of formation, $W_F$, for a given state $\rho$~\cite{horodecki2013fundamental}. The former is defined as the maximal amount of free energy in the form of pure energy eigenstates $\psi$ that can be obtained per copy of $\rho$; the latter as the minimal amount of free energy in the form of pure energy eigenstates $\psi$ needed per copy to create the target state $\rho$. We note that in the special case when $\psi$ is chosen to be the ground state, the distillation process can be considered as Landauer erasure (resetting to zero energy pure states), whereas the formation process can be seen as the action of a Szilard engine (creating states out of information). In single-shot thermodynamics $W_D$ and $W_F$ were shown to be proportional to max- and min-relative entropies with respect to the thermal state~\cite{horodecki2013fundamental}; while in the asymptotic scenario they are both equal to \mbox{$W = k_B T \cdot D(\rho\|\gamma)$}, the non-equilibrium free energy of a state~\cite{brandao2013resource}. Here, using appropriately modified Eqs.~\eqref{eq:intro/general} and~\eqref{eq:intro/creation}, we show that for large~$n$ the values of distillable work and work of formation per particle lie symmetrically around the asymptotic value, $W_D \simeq W - \Delta W$ and $W_F \simeq W + \Delta W$, and provide the exact expression for the gap $\Delta W$. Moreover, in the special case when the investigated state $\rho$ is itself a thermal state at some temperature different from the background temperature, $\Delta W$ can be directly related to the relative strength of energy fluctuations of the system.

Finally, we investigate how one can investigate the performance of heat engines with finite working bodies using an appropriately chosen interconversion scenario. This allows us to study finite-size corrections to the efficiency of a heat engine and the quality of work it performs. More precisely, we consider a heat engine operating between two thermal baths at temperatures $T_{\mathrm{h}}$ and $T_{\mathrm{c}}$, and a finite working body composed of $n$ particles initially at temperature $T_{\mathrm{c}}$. We show that, unless the irreversibility parameter satisfies $\nu=1$, near-perfect work can be performed only with efficiency lower than the Carnot efficiency $\eta_{\mathrm{C}}$. However, allowing for imperfect work allows one to achieve and even surpass $\eta_{\mathrm{C}}$~\cite{ng2017surpassing}. Moreover, we find that it is possible for a finite working body to have two thermal states at different temperatures, $T_{\mathrm{c}}$ and $T_{\mathrm{c}'}$, such that the irreversibility parameter for them is equal to 1. Thus, in a particularly engineered setting, it is possible to achieve Carnot efficiency and perform perfect work, while the finite working body changes temperature from $T_{\mathrm{c}}$ to $T_{\mathrm{c}'}$.

\medskip
\paragraph*{Overview.}

The remainder of this paper is organised as follows. We first describe the resource-theoretic approach to thermodynamics in Section~\ref{sec:setting} and introduce necessary mathematical concepts used within the paper in Section~\ref{sec:prelim}. In Section~\ref{sec:main_result} we state our main result concerning state interconversion under thermal operations, and discuss its thermodynamic interpretation and possible applications. We then proceed to Section~\ref{sec:aux_results}, where we present auxiliary results concerning approximate majorisation and thermomajorisation, which we believe may be of independent interest. The technical proof of the main result can be found in Section~\ref{sec:proof}. We conclude with an outlook in Section~\ref{sec:outlook}.

\section{Thermodynamic setting}
\label{sec:setting}

\subsection{Thermal operations}

We begin by describing the resource-theoretic approach to the thermodynamics of finite-dimensional quantum systems in the presence of a single heat bath at temperature $T$~\cite{horodecki2013fundamental,brandao2015second}. The investigated system is described by a Hamiltonian \mbox{$H=\sum_i E_i\ketbra{E_i}{E_i}$} and prepared in a general state $\rho$, whereas the bath, with a Hamiltonian $H_B$, is in a thermal equilibrium state,
\begin{equation}
\label{eq:gibbs_state}
\gamma_B=\frac{e^{-\beta H_B}}{\tr{e^{-\beta H_B}}},
\end{equation}
where $\beta=1/k_B T$ is the inverse temperature with $k_B$ denoting the Boltzmann constant.\footnote{Note that within the paper we will refer interchangeably to systems at temperature $T_{\mathrm{x}}$ or inverse temperature $\beta_{\mathrm{x}}$.} The evolution of the joint system is assumed to be closed, so that it is described by a unitary operator $U$, which additionally conserves the total energy,
\begin{equation}
\label{eq:energy_conservation}
[U,H+H_B]=0.
\end{equation} 
The central question now is: what are the possible final states that a given initial state $\rho$ can be transformed into?

More formally, one defines the set of \emph{thermal operations}~\cite{janzing2000thermodynamic}, which describes the free operations of the resource theory of thermodynamics, i.e., all possible transformations of the system that can be performed without the use of additional resources (beyond the single heat bath). These are defined as follows:
\begin{defn}[Thermal operations] 
	\label{def:thermal_ops}
	Given a fixed inverse temperature $\beta$, the set of thermal operations $\{\E^\beta\}$ consists of completely positive trace-preserving (CPTP) maps that act on a system $\rho$ with Hamiltonian $H$ as
\begin{equation}
\label{eq:thermal_ops}
\E^\beta(\rho)=\trr{B}{U\left(\rho\otimes\gamma_B\right)U^{\dagger}},
\end{equation}	
with $U$ satisfying Eq.~\eqref{eq:energy_conservation}, $\gamma_B$ given by Eq.~\eqref{eq:gibbs_state}, and $H_B$ being arbitrary.
\end{defn}
Note that energy conservation condition, Eq.~\eqref{eq:energy_conservation}, can be interpreted as encoding the first law of thermodynamics; whereas the fact that the bath is in thermal equilibrium leads to $\E^\beta(\gamma)=\gamma$, with $\gamma$ being the thermal Gibbs state of the system (i.e., given by Eq.~\eqref{eq:gibbs_state} with $H_B$ replaced by $H$), thus encoding the second law. 

\subsection{Thermodynamic state interconversion}

The thermodynamic interconversion problem is stated as follows: given a system (i.e., fixing $H$) together with initial and target states, $\rho$ and $\sigma$, does there exist a thermal operation $\E^\beta$ (for a fixed $\beta$) such that $\E^\beta(\rho)=\sigma$? The general answer for such a question is not known beyond the simplest qubit case~\cite{lostaglio2015quantum,cwiklinski2015limitations} (however, we note that the problem has very recently been solved for a larger class of free operations given by \emph{generalised thermal processes}~\cite{gour2017quantum} for coherent state interconversion). Nevertheless, for a restricted problem involving only \emph{energy-incoherent states}, i.e., $\rho$ and $\sigma$ commuting with $H$, the set of necessary and sufficient conditions was found~\cite{horodecki2013fundamental}. First, note that within this incoherent subtheory a quantum state can be equivalently represented by a probability distribution. For a non-degenerate Hamiltonian $H$, the initial and target states, $\rho$ and $\sigma$, that commute with $H$ are diagonal in the energy eigenbasis, so we can identify them with probability distributions $\v{p}$ and $\v{q}$, with \mbox{$p_i=\matrixel{E_i}{\rho}{E_i}$} and \mbox{$q_i=\matrixel{E_i}{\sigma}{E_i}$}. For degenerate Hamiltonians we note that unitaries within a degenerate energy subspace are thermal operations, so one can always diagonalise a state within such subspace for free. Therefore, in a general case the components of $\v{p}$ and $\v{q}$ representing $\rho$ and $\sigma$ are simply given by the eigenvalues of $\rho$ and $\sigma$. Next, in Ref.~\cite{janzing2000thermodynamic} (see also Refs.~\cite{horodecki2013fundamental,korzekwa2016coherence} for an expanded discussion) the existence of a thermal operation between incoherent states was linked to the existence of a particular stochastic map via the following theorem.
\begin{thm}[Theorem~5 of Ref.~\cite{janzing2000thermodynamic}]
	\label{thm:TO_GP_equiv}
	Let $\rho$ and $\sigma$ be quantum states commuting with the system Hamiltonian $H$, and $\gamma$ its thermal equilibrium state. Denote their eigenvalues by $\v{p}$, $\v{q}$ and $\v{\gamma}$, respectively. Then there exists a thermal operation $\E^\beta$ such that $\E^\beta(\rho)=\sigma$ if and only if there exists  a stochastic map $\Lambda^\beta$ such that
		\begin{equation}
		\Lambda^\beta\v{\gamma}=\v{\gamma},\quad
		\Lambda^\beta\v{p}=\v{q} \label{eq:GP_conversion} \,.
		\end{equation}
\end{thm}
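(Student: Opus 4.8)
The plan is to prove the two implications separately. The ``only if'' direction is a short bookkeeping argument, whereas the ``if'' direction requires an explicit physical construction and carries essentially all of the content.

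\textbf{Thermal operation $\Rightarrow$ Gibbs-stochastic map.} Suppose $\E^\beta(\rho)=\sigma$ for a thermal operation as in Definition~\ref{def:thermal_ops}. Since $\rho$ and $\sigma$ commute with $H$, I would choose orthonormal eigenbases $\{\ket{f_i}\}$ of $\rho$ and $\{\ket{g_j}\}$ of $\sigma$ consisting of energy eigenstates, so that $\rho=\sum_i p_i\ketbra{f_i}{f_i}$ and $\sigma=\sum_j q_j\ketbra{g_j}{g_j}$, and then define $\Lambda^\beta_{ji}:=\matrixel{g_j}{\E^\beta(\ketbra{f_i}{f_i})}{g_j}$. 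Complete positivity of $\E^\beta$ gives $\Lambda^\beta_{ji}\ge 0$, trace preservation gives $\sum_j\Lambda^\beta_{ji}=1$, so $\Lambda^\beta$ is stochastic. Linearity together with $\E^\beta(\rho)=\sigma$ yields $\Lambda^\beta\v{p}=\v{q}$ on reading off diagonal matrix elements in the $\{\ket{g_j}\}$ basis, and the identity $\E^\beta(\gamma)=\gamma$ — which holds because $[U,H+H_B]=0$ forces $U(\gamma\otimes\gamma_B)U^\dagger=\gamma\otimes\gamma_B$, whence the partial trace returns $\gamma$ — yields $\Lambda^\beta\v{\gamma}=\v{\gamma}$ by the same computation applied to $\gamma$ (using that $\gamma$, being a function of $H$, is diagonal with weights $\v{\gamma}$ in every energy eigenbasis).

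\textbf{Gibbs-stochastic map $\Rightarrow$ thermal operation.} Now suppose a stochastic $\Lambda^\beta$ with $\Lambda^\beta\v{\gamma}=\v{\gamma}$ and $\Lambda^\beta\v{p}=\v{q}$ exists. Using that a unitary supported within a single (degenerate) energy eigenspace is itself a thermal operation, I may assume $\rho$ and $\sigma$ are diagonal in a common energy eigenbasis with diagonals $\v{p}$ and $\v{q}$; then it suffices to implement the \emph{transformation} $\v{p}\mapsto\v{q}$ rather than the full matrix $\Lambda^\beta$, which is weaker and hence easier. The route I would take is to invoke the Gibbs-weighted analogue of the Hardy--Littlewood--P\'olya lemma: the existence of a Gibbs-stochastic $\Lambda^\beta$ with $\Lambda^\beta\v{p}=\v{q}$ is equivalent to $\v{q}$ being obtainable from $\v{p}$ by a finite sequence of elementary two-level Gibbs-stochastic maps (``$\beta$-swaps''), each acting non-trivially only on one pair of energy levels while preserving their two Gibbs weights. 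Each $\beta$-swap on levels separated by a gap $\Delta$ is then realised explicitly as a thermal operation: adjoin a two-level bath of gap $\Delta$ in its Gibbs state, apply the energy-conserving unitary given by a partial swap on the two-dimensional degenerate subspace spanned by $\ket{\text{lower system}}\otimes\ket{\text{excited bath}}$ and $\ket{\text{upper system}}\otimes\ket{\text{ground bath}}$, and trace out the bath; varying the swap angle sweeps out the whole one-parameter family of $\beta$-swaps on that pair, including the identity. Composing the thermal operations corresponding to the chain of $\beta$-swaps produces a thermal operation sending $\rho$ to $\sigma$, which is the claim. (An alternative bypasses the Hardy--Littlewood--P\'olya input by rationally approximating the Gibbs weights, lifting $\Lambda^\beta$ through the level-splitting ``Gibbs-rescaling'' embedding to a doubly stochastic matrix on a larger space, decomposing that matrix into permutations via Birkhoff--von Neumann, realising each permutation by an energy-conserving unitary on a suitably engineered bath, and passing to a limit; it works but is considerably more involved.)

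\textbf{Main obstacle.} The nontrivial work is entirely in the second direction: establishing that a Gibbs-stochastic transformation decomposes into a finite chain of two-level $\beta$-swaps — the step where the hypothesis $\Lambda^\beta\v{\gamma}=\v{\gamma}$ is genuinely used — and exhibiting the concrete energy-conserving dilation of a single $\beta$-swap. With those in hand the theorem follows by composition, the first direction being immediate.
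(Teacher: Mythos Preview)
The paper does not supply its own proof of this statement: it is quoted verbatim as Theorem~5 of Ref.~\cite{janzing2000thermodynamic} and used as an input, with pointers to Refs.~\cite{horodecki2013fundamental,korzekwa2016coherence} for further discussion. There is therefore nothing in the paper to compare your argument against.

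That said, your proposal is a standard and correct route. The ``only if'' direction is exactly the expected diagonal read-off. For the ``if'' direction, the decomposition into two-level $\beta$-swaps followed by explicit dilation of each swap with a resonant two-level bath is one of the known constructive proofs (indeed this is essentially the content of Refs.~\cite{horodecki2013fundamental,perry2015sufficient}); your parenthetical alternative via rational approximation, embedding to a bistochastic matrix, and Birkhoff--von Neumann is closer in spirit to the original argument of Ref.~\cite{janzing2000thermodynamic}. Either line works. The only point worth flagging is that the ``Gibbs-weighted Hardy--Littlewood--P\'olya'' step---that thermomajorisation of $\v q$ by $\v p$ implies a finite chain of $\beta$-swaps connecting them---is itself a nontrivial lemma that you would need to prove or cite, not merely invoke; you correctly identify it as the main obstacle.
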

As a result, studying thermodynamic interconversion problem between energy-incoherent states, one can replace CPTP maps and density matrices with stochastic matrices and probability vectors. We will fully address this simplified problem in Section~\ref{sec:prelim}.

In this paper we study a particular variant of the general interconversion problem: the limit of asymptotically many copies of input and output states. Informally, we want to find the optimal rate $R^*$ allowing one to transform $n$ copies of an energy-incoherent state $\rho$ into $R^*n$ copies of another energy-incoherent state $\sigma$, as $n$ becomes large. Since the dimension of the input and output spaces must be the same\footnote{Strictly speaking this is only true because of our choice of Definition~\ref{def:thermal_ops}. More generally the partial trace in Eq.~\eqref{eq:thermal_ops} may be taken over arbitrary subsystem, not necessarily $B$. However, as it does not affect our analysis, we believe that it is more compelling to keep the dimension of the system under study fixed.}, we note that one can append any number of states in thermal equilibrium, $\gamma^{\otimes m}$, to both the initial state $\rho^{\otimes n}$, and target state $\sigma^{\otimes Rn}$. Physically, it is motivated by the fact that thermal states are free resources; mathematically, it comes from the fact that the bath Hamiltonian $H_B$ is arbitrary (so, in particular, it may contain $m$ copies of the system, effectively adding $\gamma^{\otimes m}$ to the initial state) and that transforming any copy of the system into $\gamma$ is a thermal operation (so that the part of the final state beyond $\sigma^{\otimes R^*n}$ can always be replaced by $\gamma^{\otimes m}$). Therefore, we ask for the maximal value of $R^*$ (in the limit $n\rightarrow\infty$) for which there exists a thermal operation $\E^\beta$ satisfying
\begin{equation}
\label{eq:asymp_conv}
\mathcal{E}^\beta(\rho^{\otimes n}\otimes\gamma^{\otimes R^*n})\approx\sigma^{\otimes R^*n}\otimes\gamma^{\otimes n},
\end{equation}
where $\approx$ denotes closeness in some distance measure, e.g.\ infidelity or trace norm.

As already mentioned in the Introduction, our focus here is on $R^*$ for large but finite $n$, i.e., we look for corrections of order $1/\sqrt{n}$ to the optimal conversion rate coming from the finite number of systems involved in the thermodynamic process.

\section{Mathematical preliminaries}
\label{sec:prelim}

\subsection{Majorisation and embedding}
\label{sec:majorisation}

Unless otherwise stated we consider $d$-dimensional probability distributions and their products. We define the uniform state~$\v{\eta}$ and the thermal Gibbs state $\v{\gamma}$ at inverse temperature~$\beta$ as 
\begin{subequations}
	\begin{eqnarray}
		\v{\eta}&:=&\frac{1}{d}[1,\dots,1],\\
		\v{\gamma}&:=&\frac{1}{\Z}\left[e^{-\beta E_1},\dots,e^{-\beta E_d}\right],
	\end{eqnarray}
\end{subequations}
with $E_i$ denoting the eigenvalues of $H$ and \mbox{$\Z=\sum_i e^{-\beta E_i}$} being the partition function of the system. Moreover, we we call a distribution $\v{f}$ \emph{flat} if all its non-zero entries are equal. Note that in the infinite and zero temperature limits, $\beta\to 0$ and $\beta \to \infty$ respectively, the thermal state $\bm\gamma$ becomes flat. Specifically, in the former case $\bm\gamma\to\bm \eta$, and in the latter \mbox{$\bm\gamma\to\bm s:=[1,0,\dots,0]$} for Hamiltonians with non-degenerate ground spaces.

The most general transformation between two probability distributions is given by a stochastic matrix $\Lambda$ satisfying $\Lambda_{ij}\geq 0$ and $\sum_i \Lambda_{ij} =1$. We denote by $\Lambda^\beta$ a \emph{Gibbs-preserving} stochastic matrix with a thermal fixed point, i.e., $\Lambda^\beta\v{\gamma}=\v{\gamma}$. In particular, a matrix $\Lambda^0$ that preserves the uniform distribution $\v{\eta}$ is also called \emph{bistochastic}. A probability vector $\v{p}$ is said to majorise $\v{q}$, denoted by $\v{p}\succ\v{q}$, if and only if
\begin{equation}
\sum_{i=1}^k p_i^\downarrow\geq \sum_{i=1}^k q_i^\downarrow,
\end{equation}
for all $k\in\{1,\dots,d\}$, with $\v{p}^\downarrow$ denoting a probability vector with entries of $\v{p}$ arranged in a non-increasing order. We then have the following central result that is used in the study of state interconversion: 

\begin{thm}[Theorem~II.1.10 of Ref.~\cite{bhatia2013matrix}]
	\label{thm:major}
	There exists a bistochastic matrix mapping from $\v{p}$ to $\v{q}$ if and only if $\v{p} \succ \v{q}$, i.e.\
	\begin{align}
	\exists \Lambda^0:~\Lambda^0\v{\eta}=\v{\eta}~\mathrm{and}~\Lambda^0\bm p=\bm q\quad \Longleftrightarrow\quad\bm p\succ \bm q.
	\end{align}
\end{thm}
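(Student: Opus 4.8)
The plan is to prove the two implications separately. The ``only if'' direction (a bistochastic map $\v{p}\mapsto\v{q}$ forces $\v{p}\succ\v{q}$) is a short direct computation; the ``if'' direction (majorisation implies existence of such a map) is the substantive part, which I would handle by exhibiting $\v{q}$ as an average of permutations of $\v{p}$.

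\emph{The ``only if'' direction.} Suppose $\Lambda^0$ is bistochastic and $\Lambda^0\v{p}=\v{q}$. Pre- and post-composing $\Lambda^0$ with permutation matrices keeps it bistochastic and only reorders $\v{p}$ and $\v{q}$, so I may assume $\v{p}=\v{p}^\downarrow$ and $\v{q}=\v{q}^\downarrow$. Fix $k\in\{1,\dots,d\}$ and put $t_j:=\sum_{i=1}^k\Lambda^0_{ij}$; column-stochasticity gives $0\le t_j\le 1$ and row-stochasticity gives $\sum_j t_j=k$. Writing $q_i=\sum_j\Lambda^0_{ij}p_j$ and splitting the sum,
\begin{align}
\sum_{i=1}^k (p_i-q_i)
 &= \sum_{i=1}^k (1-t_i)\,p_i - \sum_{j>k} t_j\,p_j \nonumber\\
 &\ge p_k\Big(k - \sum_j t_j\Big) = 0 , \nonumber
\end{align}
where the inequality uses $p_i\ge p_k$ for $i\le k$, $p_j\le p_k$ for $j>k$, and $p_k\ge 0$. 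Taking $k=d$ shows $\sum_i p_i=\sum_i q_i$, so $\v{p}\succ\v{q}$.

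\emph{The ``if'' direction.} Assume $\v{p}\succ\v{q}$. It suffices to show that $\v{q}$ lies in $\mathrm{conv}\{P\v{p}:P\text{ a permutation matrix}\}$, for then $\v{q}=\sum_m\lambda_m P_m\v{p}$ with $\lambda_m\ge0$, $\sum_m\lambda_m=1$, and $\Lambda^0:=\sum_m\lambda_m P_m$ is a convex combination of matrices with unit row and column sums, hence bistochastic, with $\Lambda^0\v{p}=\v{q}$. I would prove the convex-hull membership by contradiction via a separating hyperplane for this compact convex set: if $\v{q}$ were outside it, some vector $\v{c}$ would satisfy $\langle\v{c},\v{q}\rangle>\max_P\langle\v{c},P\v{p}\rangle$. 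By the rearrangement inequality the right-hand side equals $\langle\v{c}^\downarrow,\v{p}^\downarrow\rangle$, and $\langle\v{c},\v{q}\rangle\le\langle\v{c}^\downarrow,\v{q}^\downarrow\rangle$; but Abel summation gives
\begin{align}
&\langle\v{c}^\downarrow,\v{q}^\downarrow\rangle-\langle\v{c}^\downarrow,\v{p}^\downarrow\rangle \nonumber\\
&\qquad= \sum_{k=1}^{d-1}(c_k^\downarrow-c_{k+1}^\downarrow)\Big(\sum_{i=1}^k q_i^\downarrow-\sum_{i=1}^k p_i^\downarrow\Big)\le 0 , \nonumber
\end{align}
since $c_k^\downarrow\ge c_{k+1}^\downarrow$ and $\sum_{i\le k}p_i^\downarrow\ge\sum_{i\le k}q_i^\downarrow$, contradicting the strict inequality.

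\emph{An alternative, and the main obstacle.} One can instead avoid the hyperplane argument: when $\v{p}\ne\v{q}$ (both sorted), let $j$ be the least index with $p_j>q_j$ and $\ell$ the greatest with $p_\ell<q_\ell$ (majorisation forces $j<\ell$), and apply a $T$-transform on coordinates $\{j,\ell\}$ — the bistochastic matrix equal to the identity except for the block $\left(\begin{smallmatrix}1-\tau&\tau\\\tau&1-\tau\end{smallmatrix}\right)$ — with $\tau\in[0,1]$ chosen so the image coincides with $\v{q}$ in one more coordinate while still majorising $\v{q}$; iterating at most $d-1$ times and multiplying the $T$-transforms yields the required matrix. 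In either route the only delicate step is showing that the majorisation relation is \emph{preserved} under the relevant operation — the Abel-summation estimate above, or the claim that $T_{j,\ell}(\tau)\v{p}$ still majorises $\v{q}$ with a strictly smaller discrepancy set — while everything else is standard bookkeeping.
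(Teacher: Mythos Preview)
The paper does not supply its own proof of this theorem; it is quoted as a known result from Bhatia's \emph{Matrix Analysis} (Theorem~II.1.10) and used as a black box. There is therefore nothing in the paper to compare your argument against.

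That said, your proof is correct. The ``only if'' direction via the partial-column-sum quantities $t_j$ is the standard computation. For the ``if'' direction, both routes you outline are classical: the separating-hyperplane argument identifies the convex hull of the permutations of $\v{p}$ with the set of vectors majorised by $\v{p}$ (this is Rado's theorem), and the $T$-transform decomposition is essentially the Hardy--Littlewood--P\'olya proof that Bhatia presents. Your Abel-summation step is fine (it uses $\sum_i p_i^\downarrow=\sum_i q_i^\downarrow$ to drop the boundary term at $k=d$, which holds since both are probability vectors). The only cosmetic point is that ``taking $k=d$ shows $\sum_i p_i=\sum_i q_i$'' is superfluous here, as normalisation is already assumed.
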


The above theorem can be generalised from bistochastic matrices $\Lambda^0$ to Gibbs-preserving matrices $\Lambda^\beta$ with arbitrary $\beta$. To achieve this we first need to introduce the following embedding map.
\begin{defn}[Embedding map]
	\label{def:embedding}
	Given a thermal distribution $\v{\gamma}$ with rational entries, \mbox{$\gamma_i=D_i/D$} and \mbox{$D_i,D\in\mathbb{N}$}, the embedding map $\Gamma^\beta$ sends a $d$-dimensional probability distribution $\v{p}$ to a $D$-dimensional probability distribution $\hat{\v{p}}:=\Gamma^\beta(\v{p})$ as follow:
	\begin{equation}
	\hat{\v{p}}=\left[\phantom{\frac{i}{i}}\!\!\!\!\right.\underbrace{\frac{p_1}{D_1},\dots,\frac{p_1}{D_1}}_{D_1\mathrm{~times}},
	\,\dots\,,\underbrace{\frac{p_d}{D_d},\dots,\frac{p_d}{D_d}}_{D_d\mathrm{~times}}\left.\phantom{\frac{i}{i}}\!\!\!\!\right]. \label{eq:embedding}
	\end{equation}
	The potentially irrational values of $\gamma_i$ can be approached with arbitrarily high accuracy by choosing $D$ large enough.
\end{defn}
\noindent Note that $\hat{\v{\gamma}}=\v{\eta}^D$, i.e., embedding maps a thermal distribution into a uniform distribution over $D$ entries, and that $\Gamma^\beta$ is injective, implying the existence of a left inverse $(\Gamma^\beta)^{-1}$. The action of $(\Gamma^\beta)^{-1}$ on a $D$-dimensional vector $\v{r}$ is given by summing up all the entries belonging to the same block of $D_i$ entries. Moreover, $\Gamma^\beta(\Gamma^\beta)^{-1}$ is a bistochastic map, that transforms each block of $D_i$ entries into a uniform distribution, i.e., given an index $j$ belonging to a block $D_i$, denoted by $j\in[D_i]$, we have 
\begin{equation}
	[\Gamma^\beta(\Gamma^\beta)^{-1}\v{r}]_j=\frac{\sum_{k\in [D_i]} r_k}{D_i}.
\end{equation}
We can also introduce the embedded version of a matrix~$\Lambda^\beta$,
\begin{equation}
\hat{\Lambda}^\beta:=\Gamma^\beta\Lambda^\beta(\Gamma^\beta)^{-1}.
\end{equation}
Notice that $\hat{\Lambda}^\beta$ is a bistochastic matrix, as it clearly maps the set of $D$-dimensional probability distributions into itself, and it preserves the uniform state $\v{\eta}^D$,
\begin{equation}
\hat{\Lambda}^\beta\v{\eta}^D=\Gamma^\beta\Lambda^\beta(\Gamma^\beta)^{-1}\hat{\v{\gamma}}=\Gamma^\beta\Lambda^\beta\v{\gamma}=\Gamma^\beta\v{\gamma}=\v{\eta}^D.
\end{equation}

Using the notion of embedding we can define \emph{thermomajorisation} relation~\cite{horodecki2013fundamental} (originally introduced in Ref.~\cite{ruch1978mixing} as $d$-majorisation). A probability vector $\v{p}$ is said to thermomajorise $\v{q}$, denoted by $\v{p}\succ^\beta\v{q}$, if and only if the majorisation relation holds between the embedded versions of $\v{p}$ and $\v{q}$, i.e.,
\begin{equation}
\v{p}\succ^\beta\v{q} \quad \Longleftrightarrow\quad \hat{\v{p}}\succ\hat{\v{q}}.
\end{equation}
Note that for $\beta=0$ thermomajorisation becomes standard majorisation, as the embedding map is the identity matrix. Now, we have the following equivalence
\begin{equation}
\Lambda^\beta\v{p}=\v{q} \quad \Longleftrightarrow\quad \hat{\Lambda}^\beta\hat{\v{p}}=\hat{\v{q}},
\end{equation}
which, since $\hat{\Lambda}^\beta$ is bistochastic, allows us to use Theorem~\ref{thm:major} to obtain
\begin{cor}[Thermodynamic interconversion]
	\label{cor:thermo_major}
	There exists a Gibbs-preserving matrix mapping from $\v{p}$ to $\v{q}$ if and only if $\v{p}\succ^\beta \v{q}$, i.e.\
	\begin{align}
	\exists \Lambda^\beta:~\Lambda^\beta\v{\gamma}=\v{\gamma}~\mathrm{and}~\Lambda^\beta\v{p}=\v{q}\quad \Longleftrightarrow\quad \v{p}\succ^\beta \v{q}.
\end{align}
\end{cor}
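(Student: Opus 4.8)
The plan is to obtain Corollary~\ref{cor:thermo_major} by transporting Theorem~\ref{thm:major} through the embedding map $\Gamma^\beta$, along the lines already sketched in the paragraph preceding the statement. Throughout I would assume that $\v\gamma$ has rational entries, so that $\Gamma^\beta$ of Definition~\ref{def:embedding} is well defined; the general case then follows by approximating $\v\gamma$ with rational thermal distributions and appealing to continuity together with compactness of the set of stochastic matrices.

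For the ``only if'' direction, suppose a Gibbs-preserving $\Lambda^\beta$ with $\Lambda^\beta\v p=\v q$ exists. Its embedded version $\hat\Lambda^\beta=\Gamma^\beta\Lambda^\beta(\Gamma^\beta)^{-1}$ has already been shown to be bistochastic and to fix $\v\eta^D$. Using the left-inverse identity $(\Gamma^\beta)^{-1}\Gamma^\beta=\iden$ one gets $\hat\Lambda^\beta\hat{\v p}=\Gamma^\beta\Lambda^\beta(\Gamma^\beta)^{-1}\Gamma^\beta\v p=\Gamma^\beta\Lambda^\beta\v p=\Gamma^\beta\v q=\hat{\v q}$, so there is a bistochastic matrix sending $\hat{\v p}$ to $\hat{\v q}$. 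Theorem~\ref{thm:major} then yields $\hat{\v p}\succ\hat{\v q}$, which is by definition $\v p\succ^\beta\v q$.

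For the ``if'' direction, suppose $\v p\succ^\beta\v q$, i.e.\ $\hat{\v p}\succ\hat{\v q}$. Theorem~\ref{thm:major} supplies a bistochastic matrix $B$ on $D$-dimensional vectors with $B\hat{\v p}=\hat{\v q}$ (and, automatically, $B\v\eta^D=\v\eta^D$). The key move is to ``de-embed'' $B$ into a genuine map on $d$-dimensional vectors; the natural candidate is $\Lambda^\beta:=(\Gamma^\beta)^{-1}B\,\Gamma^\beta$. I would then check: (i) $\Lambda^\beta$ is stochastic --- nonnegativity is immediate since $(\Gamma^\beta)^{-1}$, $B$ and $\Gamma^\beta$ all have nonnegative entries, while $\v 1_d^{\top}\Lambda^\beta=\v 1_d^{\top}$ follows from $\v 1_d^{\top}(\Gamma^\beta)^{-1}=\v 1_D^{\top}$, $\v 1_D^{\top}B=\v 1_D^{\top}$ and $\v 1_D^{\top}\Gamma^\beta=\v 1_d^{\top}$, all read off the explicit block structure recorded after Definition~\ref{def:embedding}; (ii) $\Lambda^\beta\v\gamma=(\Gamma^\beta)^{-1}B\,\hat{\v\gamma}=(\Gamma^\beta)^{-1}B\,\v\eta^D=(\Gamma^\beta)^{-1}\v\eta^D=\v\gamma$, using $\hat{\v\gamma}=\v\eta^D$; and (iii) $\Lambda^\beta\v p=(\Gamma^\beta)^{-1}B\,\hat{\v p}=(\Gamma^\beta)^{-1}\hat{\v q}=\v q$. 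Items (ii) and (iii) then give exactly the Gibbs-preserving and interconversion properties required.

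There is no serious obstacle here: the only step that is not a one-line substitution is (i), the verification that the de-embedded matrix $(\Gamma^\beta)^{-1}B\,\Gamma^\beta$ is again stochastic, and I expect even this to be routine given the block structure of $\Gamma^\beta$ and $(\Gamma^\beta)^{-1}$. It is worth noting that one does not in general recover $\hat\Lambda^\beta=B$ --- only its block-averaged version $\Gamma^\beta(\Gamma^\beta)^{-1}B\,\Gamma^\beta(\Gamma^\beta)^{-1}$ --- but this is irrelevant, since the corollary only claims the existence of \emph{some} Gibbs-preserving map, and $\Lambda^\beta=(\Gamma^\beta)^{-1}B\,\Gamma^\beta$ already works.
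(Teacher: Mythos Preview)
Your proposal is correct and follows essentially the same route as the paper: the corollary is obtained by transporting Theorem~\ref{thm:major} through the embedding map, using that $\hat\Lambda^\beta$ is bistochastic for the forward direction and constructing $\Lambda^\beta=(\Gamma^\beta)^{-1}B\,\Gamma^\beta$ from a bistochastic $B$ for the converse. The paper treats this as an immediate consequence of the displayed equivalence $\Lambda^\beta\v p=\v q\Leftrightarrow\hat\Lambda^\beta\hat{\v p}=\hat{\v q}$ and does not spell out the stochasticity check (i) or the rational-approximation caveat, so your write-up is simply a more detailed version of the same argument.
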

Due to Theorem~\ref{thm:TO_GP_equiv}, Corollary~\ref{cor:thermo_major} specifies the necessary and sufficient conditions for energy-incoherent state interconversion under thermal operations. In Fig.~\ref{fig:interconversion_equiv} we present the chain of equivalence relations leading to this result.

\begin{figure}
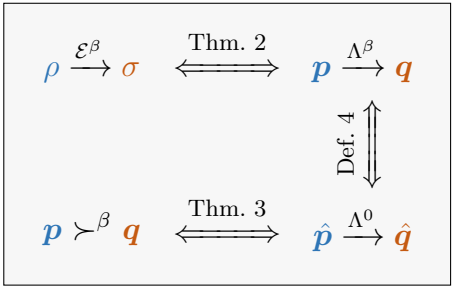

	\centering
	\includeTikz{interconversionfig}
	\caption{\label{fig:interconversion_equiv} \emph{Interconversion equivalence.} Quantum states $\rho$ and $\sigma$ are energy-incoherent, and their eigenvalues are given by $\v{p}$ and $\v{q}$, respectively. The arrow between states (distributions) symbolises the existence of a given map.}  
\end{figure}

\subsection{Information-theoretic notions and their thermodynamic interpretation}

The \emph{relative entropy} or Kullback-Leibler divergence of a probability distribution $\v{p}$ with $\v{q}$ is defined as
\begin{align}
	D(\v{p} \| \v{q}) := \sum_{i=1}^d p_i \log \frac{p_i}{q_i}  
	\label{eq:defD}
\end{align}
whenever the support of $\v{q}$ contains the support of $\v{p}$ (otherwise the divergence is set to $+\infty$). Denoting the average of a random variable $X$ in a state $\v{p}$ by
\begin{equation}
\langle X\rangle_{\v{p}}=\sum_i p_i X_i,
\end{equation}
we can introduce a random variable $L$ with
\begin{align}
\Pr\left[L=\log \frac{p_i}{q_i}\right] = p_i,
\end{align}
so that the divergence can be interpreted as the expectation value of the log-likelihood ratio, \mbox{$D(\v{p}\|\v{q})=\langle L\rangle_{\v{p}}$}. Similarly, we define the corresponding variance, the \emph{relative entropy variance}, as
\begin{align}
V(\v{p} \| \v{q}) := \Var_{\v{p}}(L),
\label{eq:defV}
\end{align}
where
\begin{align}
	\Var_{\v{p}}(X) := \langle (X-\langle X\rangle_{\v{p}})^2\rangle_{\v{p}}.
	\label{eq:defVar}
\end{align}
The following equalities are an immediate consequence of the embedding map introduced in Eq.~\eqref{eq:embedding}:
\begin{align}
	D(\hat{\v{p}} \| \hat{\v{q}}) = D(\v{p} \| \v{q})  \quad \textrm{and} \quad
	V(\hat{\v{p}} \| \hat{\v{q}}) = V(\v{p} \| \v{q}) \,. \label{eq:embedD}
\end{align}

In this work we will mostly encounter these quantities in the special case when $\v{q} = \v{\gamma}$ is the thermal distribution corresponding to inverse temperature $\beta$,
\begin{equation}
\gamma_i=\frac{e^{-\beta E_i}}{\Z},\quad \Z=\sum_{i=1}^d e^{-\beta E_i}.
\end{equation}
Then, both $D(\v{p}\|\v{\gamma})$ and $V(\v{p}\|\v{\gamma})$ can be interpreted thermodynamically. First note that
\begin{equation}
\label{eq:rel_ent_gibbs}
D(\v{p}\|\v{\gamma})=\beta\langle E\rangle_{\v{p}}-H(\v{p})+\log \Z,
\end{equation}
with $\langle E\rangle_{\v{p}}$ being the average energy and
\begin{equation}
H(\v{p}):=-\sum_{i=1}^dp_i\log p_i
\end{equation}
denoting the Shannon entropy of $\v{p}$ (as a function of a distribution $\v{p}$ it should not to be confused with the Hamiltonian $H$). Now, recall that the classical expression for free energy reads $U-TS$, with $U$ being the average energy of the system, $T$ the background temperature and $S$ the thermodynamic entropy; and that the free energy of the thermal state is $-k_B T \log \Z$. We thus see that $D(\v{p}||\v{\gamma})/\beta$ can be interpreted as a non-equilibrium generalisation of free energy difference between an incoherent state $\rho$ (represented by a probability distribution $\v{p}$) and a thermal state~$\gamma$. 

Now, to interpret $V(\v{p}\|\v{\gamma})$ let us first introduce a covariance matrix $M$ for the log-likelihood $\log p$ and energy in the units of temperature $\beta E$:
\begin{equation}
M=\left[\begin{array}{cc}
\mathrm{Cov}_{\v{p}}(\beta E,\beta E) & \mathrm{Cov}_{\v{p}}(\beta E,\log p)\\
\mathrm{Cov}_{\v{p}}(\log p,\beta E) & \mathrm{Cov}_{\v{p}}(\log p,\log p)
\end{array}
\right],
\end{equation}
where $\mathrm{Cov}_{\v{p}}(X,Y)=\langle XY \rangle_{\v{p}}-\langle X \rangle_{\v{p}}\langle Y \rangle_{\v{p}}$.
The relative entropy variance can then be expressed as
\begin{equation}
V(\v{p}\|\v{\gamma})=\sum_{i,j=1}^2 M_{ij}.
\end{equation}
In a particular case when the distribution $\v{p}$ is a thermal distribution $\v{\gamma}'$ at some different temperature \mbox{$T'\neq T$}, the expression becomes
\begin{equation}
\label{eq:var_capacity}
V(\v{\gamma}'||\v{\gamma})=\left(1-\frac{T'}{T}\right)^2 \cdot\frac{c_{T'}}{k_B},
\end{equation}
where
\begin{equation}
c_{T'}=\frac{\partial \langle E\rangle_{\v{\gamma}'}}{\partial T'}
\end{equation}
is the specific heat capacity of the system in a thermal state at temperature $T'$.

Finally, let us note that one can define quantum generalisations of both the relative entropy, $D(\rho||\sigma)$~\cite{umegaki62}, and the relative entropy variance, $V(\rho||\sigma)$~\cite{tomamichel12,li12}. Moreover, for a quantum state $\rho$ commuting with the Hamiltonian, these relative quantities with respect to the thermal state~$\gamma$ coincide with classical expressions,
\begin{align}
&D(\rho||\gamma)=D(\v{p}||\v{\gamma}),\quad V(\rho||\gamma)=V(\v{p}||\v{\gamma}),
\end{align}
where $\v{p}$ denotes the vector of eigenvalues of $\rho$.

\subsection{Approximate interconversion}
\label{sec:approximate}

As already mentioned in Section~\ref{sec:setting} we will focus on approximate interconversions, allowing the final state $\tilde{\v{q}}$ to differ from the target state $\v{q}$, as long as it is close enough. We measure distance between states using the \emph{infidelity},
\begin{equation}
	\delta(\v{p},\v{q}):=1-F(\v{p},\v{q}) \,,
\end{equation}
where the \emph{fidelity} (or Bhattacharyya coefficient) is
\begin{equation}
F(\v{p},\v{q}):= \left( \sum_{i=1}^d \sqrt{p_i q_i} \right)^2 \,.
\end{equation}
We will also use the fidelity $\F$ between two (continuous) probability density functions $f(x)$ and $g(x)$, defined as
\begin{equation}
\F(f,g):= \left( \int_{-\infty}^{\infty}\sqrt{f(x)g(x)}\, \mathrm{d} x  \right)^2.
\end{equation}
The two important properties of the infidelity that we will use throughout the paper are as follows. First, since fidelity is non-decreasing under stochastic maps 
we have
\begin{equation}
\delta(\Lambda^0\v{p},\Lambda^0\v{q})\leq\delta(\v{p},\v{q}).
\end{equation}
Second, the distance $\delta$ between two probability vectors is the same as between their embedded versions, i.e.,
\begin{equation}
\delta(\v{p},\v{q})=\delta(\hat{\v{p}},\hat{\v{q}}),
\end{equation}
which can be verified by direct calculation.

Although we will be mainly concerned with ``smoothing'' the final distribution (allowing it to differ from the desired target one), it is useful to introduce two dual definitions of approximate majorisation and thermomajorisation.
\begin{defn}[Pre- and post-thermomajorisation]
	\label{def:pre_post}
	A distribution $\v{p}$ \emph{$\epsilon$-pre-thermomajorises} a distribution $\v{q}$, which we denote $\v{p} \prescript{}{\epsilon}{\succ^\beta}~ \v{q}$, if there exists a $\tilde{\v{p}}$ such that 
	\begin{align}
	\tilde{\v{p}}\succ^\beta \v{q}\qquad\text{and}\qquad \delta(\v{p},\tilde{\v{p}})\leq \epsilon.
	\end{align}
	A distribution $\v{p}$ \emph{$\epsilon$-post-thermomajorises} a distribution $\v{q}$, which we denote $\v{p} \succ_\epsilon^\beta \v{q}$, if there exists a $\tilde{\v{q}}$ such that
	\begin{align}
	\v{p}\succ^\beta \tilde{\v{q}}\qquad\text{and}\qquad \delta(\v{q},\tilde{\v{q}})\leq \epsilon.
	\end{align}
	In a particular case of $\beta=0$, when thermomajorisation coincides with majorisation, we will speak of pre- and post-majorisation, denoted by $\prescript{}{\epsilon}{\succ}$ and $\succ_\epsilon$, respectively.
\end{defn}
Let us make a few comments about the above definition. First, notice that due to Corollary~\ref{cor:thermo_major}, $\v{p} \prescript{}{\epsilon}{\succ^\beta}~ \v{q}$ means that in the vicinity of $\v{p}$ there exists a state $\tilde{\v{p}}$ and $\Lambda^\beta$ that maps it to $\v{q}$. Similarly, $\v{p} \succ_\epsilon^\beta \v{q}$ means that there exists $\Lambda^\beta$ that maps $\v{p}$ to $\tilde{\v{q}}$, which lies in the vicinity of $\v{q}$. We illustrate this in Fig.~\ref{fig:pre_post}. Next note that both $\succ^\beta_0$ and $\prescript{}{0}{\succ^\beta}$ reduce to thermomajorisation $\succ^\beta$, specifically $\succ^0_0$ and $\prescript{}{0}{\succ^0}$ are equivalent to the standard majorisation relation~$\succ$. Let us also mention that the concept of majorisation between smoothed distributions has been recently studied in Refs.~\cite{renes2016relative,horodecki2017approximate,van2017smoothed}. Moreover, as with exact thermomajorisation, approximate thermomajorisation specifies the existence of a thermal operation between two energy-incoherent states. More precisely, due to Theorem~\ref{thm:TO_GP_equiv}, Corollary~\ref{cor:thermo_major} and Definition~\ref{def:pre_post}, we have the following:
\begin{cor}
	\label{cor:approx_thermal}
	Let $\rho$ and $\sigma$ be quantum states commuting with the system Hamiltonian $H$. Denote their eigenvalues by $\v{p}$ and $\v{q}$, respectively. Then there exists a thermal operation $\E^\beta$ such that $\delta(\E^\beta(\rho), \sigma) \leq \epsilon$ if and only if \mbox{$\v{p}\succ^\beta_\epsilon\v{q}$}. 
\end{cor}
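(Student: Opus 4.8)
The plan is to obtain the corollary by concatenating \cref{thm:TO_GP_equiv} and \cref{cor:thermo_major} with \cref{def:pre_post}, the only real work being to reconcile the \emph{quantum} infidelity $\delta(\E^\beta(\rho),\sigma)$ between density matrices with the \emph{classical} infidelity between probability vectors that appears in the definition of $\succ^\beta_\epsilon$. Before starting I would fix the eigenbasis of $H$ so that it simultaneously diagonalises $\sigma$ (legitimate since $[\sigma,H]=0$: within each degenerate energy block one is free to rotate to the eigenbasis of $\sigma$ restricted to that block), and label the eigenvalues of $\sigma$ as $\v q$ accordingly, so that $\sigma=\sum_i q_i\ketbra{E_i}{E_i}$. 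This alignment of bases and index labels is what makes the quantum and classical infidelities literally coincide on energy-diagonal states.

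For the ``only if'' direction I would start from a thermal operation $\E^\beta$ with $\delta(\E^\beta(\rho),\sigma)\le\epsilon$ and compose it with the pinching $\mathcal{P}$ onto the chosen energy eigenbasis. Since $\mathcal{P}$ is a mixture of the free unitaries $e^{-\mathrm{i}Ht}$ (together, in the degenerate case, with random phases acting inside each energy block), $\mathcal{P}$ is itself a thermal operation, hence so is $\mathcal{P}\circ\E^\beta$, and $\tilde\sigma:=\mathcal{P}(\E^\beta(\rho))$ is diagonal in the energy eigenbasis; write $\tilde{\v q}$ for its diagonal. Data processing for the fidelity together with $\mathcal{P}(\sigma)=\sigma$ gives
\begin{equation}
F(\tilde\sigma,\sigma)=F\bigl(\mathcal{P}(\E^\beta(\rho)),\mathcal{P}(\sigma)\bigr)\ge F\bigl(\E^\beta(\rho),\sigma\bigr)\ge 1-\epsilon,
\end{equation}
so $\delta(\tilde{\v q},\v q)=\delta(\tilde\sigma,\sigma)\le\epsilon$, the first equality holding because $\tilde\sigma$ and $\sigma$ are diagonal in the same basis. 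Because $\mathcal{P}\circ\E^\beta$ takes $\rho$ exactly to the energy-diagonal state $\tilde\sigma$, \cref{thm:TO_GP_equiv} yields a Gibbs-preserving $\Lambda^\beta$ with $\Lambda^\beta\v p=\tilde{\v q}$, hence $\v p\succ^\beta\tilde{\v q}$ by \cref{cor:thermo_major}; combined with $\delta(\v q,\tilde{\v q})\le\epsilon$ this is exactly $\v p\succ^\beta_\epsilon\v q$ by \cref{def:pre_post}.

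The ``if'' direction I expect to be immediate: from $\v p\succ^\beta_\epsilon\v q$, \cref{def:pre_post} produces $\tilde{\v q}$ with $\v p\succ^\beta\tilde{\v q}$ and $\delta(\v q,\tilde{\v q})\le\epsilon$; \cref{cor:thermo_major} then gives a Gibbs-preserving $\Lambda^\beta$ with $\Lambda^\beta\v p=\tilde{\v q}$, and \cref{thm:TO_GP_equiv} upgrades this to a thermal operation $\E^\beta$ with $\E^\beta(\rho)=\tilde\sigma:=\sum_i\tilde q_i\ketbra{E_i}{E_i}$; since $\tilde\sigma$ and $\sigma$ are diagonal in the same basis, $\delta(\E^\beta(\rho),\sigma)=\delta(\tilde{\v q},\v q)\le\epsilon$. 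The main obstacle is therefore the forward direction's reduction to energy-diagonal output states --- arguing that replacing $\E^\beta(\rho)$ by its energy-diagonal part costs nothing in fidelity --- which rests on pinching onto the energy eigenbasis being itself a thermal operation and on data processing for the fidelity; the accompanying bookkeeping for degenerate Hamiltonians (choosing the eigenbasis and index labelling so that the two notions of infidelity agree) is routine but worth stating explicitly.
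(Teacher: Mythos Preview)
Your proposal is correct and follows precisely the route the paper indicates: the paper gives no explicit proof, stating only that the corollary follows ``due to Theorem~\ref{thm:TO_GP_equiv}, Corollary~\ref{cor:thermo_major} and Definition~\ref{def:pre_post}''. Your write-up supplies the details the paper leaves implicit --- in particular the pinching argument that reconciles the quantum infidelity $\delta(\E^\beta(\rho),\sigma)$ with the classical infidelity on eigenvalue vectors --- and your justification that the pinching onto the chosen energy eigenbasis is itself a thermal operation (via a uniform mixture of block-diagonal energy-conserving unitaries, implementable with a trivial-Hamiltonian ancilla) is sound.
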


Finally, let us make an important comment concerning approximate majorisation. Consider two distributions, $\v{p}$ and $\v{q}$, such that $\v{p}\succ_{\epsilon}\v{q}$. By definition there exists $\tilde{\v{q}}$ close to $\v{q}$ that is majorised by $\v{p}$. As majorisation is invariant under permutations, $\v{p}$ also majorises any distribution $\Pi\tilde{\v{q}}$, where $\Pi$ is arbitrary permutation. However, the fidelity between $\v{q}$ and $\Pi\tilde{\v{q}}$ is, in general, permutation-dependent. It is the largest, when the $i$-th largest entries of $\v{q}$ and $\Pi\tilde{\v{q}}$ coincide for all $i$, and so it is equal to \mbox{$F(\v{q}^\downarrow,\tilde{\v{q}}^\downarrow)$}. Therefore, for a given $\tilde{\v{q}}$ satisfying some majorisation relation, we know that for every state $\v{q}$ there exists $\Pi\tilde{\v{q}}$ satisfying the same relation, and with
\begin{align}
F(\Pi\tilde{\v{q}},\v{q})=F(\tilde{\v{q}}^\downarrow,\v{q}^\downarrow).
\end{align}
Thus, in the context of approximate majorisation, while calculating fidelities between any two states we will assume, without loss of generality, that they are ordered.

\begin{figure}
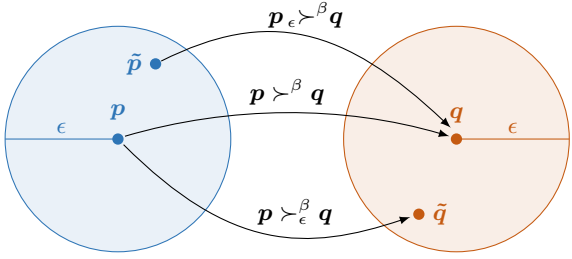

	\centering
	\includeTikz{prepost}
	\caption{\label{fig:pre_post} \emph{Pre- and post-thermomajorisation.} Arrows depict the existence of Gibbs-preserving maps between corresponding distributions, whereas $\epsilon$-circles represent sets of probability distributions whose distance $\delta$ from $\v{p}$ and from $\v{q}$ is smaller than $\epsilon$.}
\end{figure}

\subsection{Asymptotic notation}

As we will be interested in approximating the optimal rate up to terms of order $O(1/\sqrt{n})$, we will adopt the following asymptotic notation for sequences $\{ a_n \}_n$ and $\{ b_n \}_n$ in $n \in \mathbb{N}$.
\begin{subequations}\begin{align}
	a_n &\simeq b_n & &\Longleftrightarrow & a_n - b_n & = o(1/\sqrt{n})\,, \\
	a_n &\lesssim b_n & &\Longleftrightarrow & a_n-b_n& \leq f_n = o(1/\sqrt{n})\,, \\
	a_n &\gtrsim b_n & &\Longleftrightarrow & a_n-b_n&\geq g_n = o(1/\sqrt{n}) \,,
\end{align}\end{subequations}
where $\{ f_n \}_n$ and $\{ g_n \}_n$ are auxiliary sequences that we usually do not introduce explicitly.

\subsection{Rayleigh-normal distributions}
\label{sec:rayleigh}

The dependence of the finite-size corrections to optimal interconversion rate on the infidelity is given by generalisations of the Gaussian distribution known as the Rayleigh-normal distributions. This family of distributions was first introduced in Ref.~\cite{kumagai2017second} in the context of LOCC entanglement conversion. In order to define it, let us first denote the Gaussian cumulative distribution function, with mean value $\mu$ and variance $\nu$, by $\Phi_{\mu,\nu}$,
\begin{align}
\Phi_{\mu,\nu}(x)=\frac{1}{\sqrt{2\pi\nu}}\int_{-\infty}^{x}e^{-\frac{(t-\mu)^2}{2\nu}}\,\mathrm dt.
\end{align}
As a shorthand notation we will also use $\Phi$ to denote $\Phi_{0,1}$. Following Ref.~\cite{kumagai2017second} we can now define
\begin{defn}[Rayleigh-normal distributions]
	For any $\nu>0$ the Rayleigh-normal distribution is a distribution on $\mathbb R$, whose cumulative function is given by
	\begin{align}
		Z_\nu (\mu):=1-\sup_{A\geq \Phi}\mathcal{F}\left(A',\Phi_{\mu,\nu}'\right)\!,
	\end{align}
	where the supremum is taken over all monotone increasing and continuously differentiable $A:\mathbb{R}\to [0,1]$ such that $A\geq \Phi$ pointwise; and $f'(x)$ denotes the derivative of $f(x)$.
\end{defn}

\begin{figure}
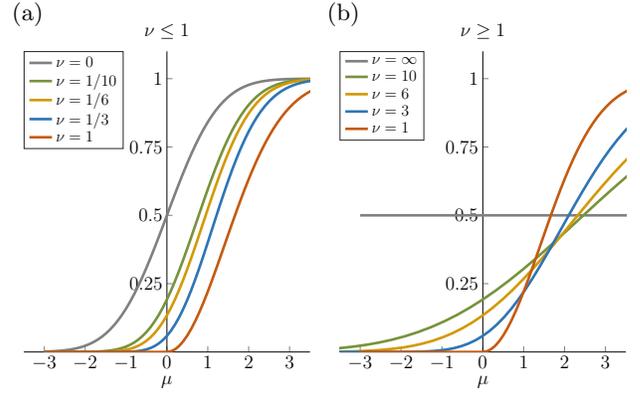

	\centering
	\setlength\figheight{8cm}
	\setlength\figwidth{8cm}
	\def\figscale{.5}
	\hspace{-.5cm}\includeTikz{RNZ1}  \includeTikz{RNZ2}
	\caption{\emph{Rayleigh-normal distributions.} Plots of Rayleigh-normal cumulative probability distributions introduced in Ref.~\cite{kumagai2017second}. Note the difference between the above graphs and those presented in Fig.~1 of Ref.~\cite{kumagai2017second}. The parameter is chosen in the ranges (a) $\nu\in[0,1]$ and (b) $\nu\in[1,\infty]$. Due to the duality property, Eq.~\eqref{eq:duality}, the plots can be directly related to the ones presented in panel (a).
	\label{fig:rayleigh-normal}}
\end{figure}

We now present some relevant properties of the Rayleigh-normal distributions.
\begin{lem}[Section~2 of Ref.~\cite{kumagai2017second}]
	The Rayleigh-normal distributions have the following properties:
	\begin{itemize}
		\item The $\nu \to 0$ case converges in distribution to the normal Gaussian,
		\begin{align}
			\label{eq:rayleigh_limit}
			\lim\limits_{\nu \to 0}Z_\nu(\mu)=\Phi(\mu).
		\end{align}
		\item The $\nu=1$ case reduces to the Rayleigh distribution of scale parameter $\sigma=\sqrt{2}$,
		\begin{align}
			Z_{1}(\mu)=R_{\sqrt{2}}(\mu):=\begin{dcases}
			1-e^{-\mu^2/4} & \mu\geq 0,\\
			0 & \mu\leq 0.
			\end{dcases}
		\end{align}
		\item The Rayleigh-normal distributions possess a duality under inversion of the parameter $\nu$ of the form
		\begin{align}
			\label{eq:duality}
			Z_{1/\nu}(\mu)=Z_\nu (\sqrt{\nu}\mu).
		\end{align}
	\end{itemize}
\end{lem}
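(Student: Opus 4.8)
The plan is to derive all three statements directly from the variational definition
$Z_\nu(\mu)=1-\sup_{A\geq\Phi}\mathcal F(A',\Phi_{\mu,\nu}')$ by carefully analysing the optimisation over monotone differentiable $A$. The key observation is that $\mathcal F(A',\Phi'_{\mu,\nu})=\left(\int\sqrt{A'(x)}\sqrt{\phi_{\mu,\nu}(x)}\,\mathrm dx\right)^2$ is a concave, monotone functional of $A'$ subject to the constraints $A'\geq 0$, $\int A'=1$ (so that $A$ increases from $0$ to $1$), and the pointwise constraint $A\geq\Phi$. For the Rayleigh-normal distribution this is essentially a constrained Cauchy--Schwarz problem: without the constraint $A\geq\Phi$ the supremum would be $1$ (take $A'=\phi_{\mu,\nu}$), so the whole content is in how the lower fence $\Phi$ bites. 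I would first record the scaling behaviour of the problem under affine reparametrisations of the $x$-axis, since that is what produces the duality relation, and then treat the two limiting parameter values as genuinely different (degenerate) instances of the same variational problem.

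\textbf{Step 1: the duality \eqref{eq:duality}.}
I would change variables $x\mapsto x/\sqrt\nu$ in the integral defining $\mathcal F$. Since $\Phi'_{\mu,\nu}(x)=\phi_{\mu,\nu}(x)$ and $\phi_{\mu,\nu}(x)\,\mathrm dx=\phi_{\mu/\sqrt\nu,1}(x/\sqrt\nu)\,\mathrm d(x/\sqrt\nu)$, rescaling turns a Gaussian of variance $\nu$ and mean $\mu$ into a standard Gaussian of mean $\mu/\sqrt\nu$, while the fence $A\geq\Phi=\Phi_{0,1}$ becomes $\tilde A\geq\Phi_{0,1/\nu}$, where $\tilde A(x)=A(\sqrt\nu x)$ is still monotone increasing, differentiable, and valued in $[0,1]$. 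Because $\mathcal F$ is invariant under a common affine reparametrisation of both density arguments, the sup defining $Z_{1/\nu}(\mu)$ (which has fence variance $1/\nu$ and target mean $\mu$, variance $1$... ) maps onto the sup defining $Z_\nu(\sqrt\nu\mu)$. The only subtlety to check here is that the class of admissible $A$ is closed under this reparametrisation in both directions, which is immediate. This gives \eqref{eq:duality} cleanly.

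\textbf{Step 2: the $\nu=1$ case.}
Here the fence and the target have the same variance, $\Phi=\Phi_{0,1}$ and $\Phi_{\mu,1}$, differing only by a shift $\mu$. For $\mu\leq 0$ the fence already satisfies $\Phi\geq\Phi_{\mu,1}$ pointwise, so the admissible choice $A=\Phi$ — wait, one needs $A\geq\Phi$, and then $\mathcal F(\Phi',\Phi'_{\mu,1})$; for $\mu\le 0$ one shows the constraint is inactive in the sense that $A$ can track $\Phi_{\mu,1}$ from above and still dominate $\Phi$... actually the clean statement is that for $\mu\le 0$ one can take $A\to\Phi_{\mu,1}$ from above wherever $\Phi_{\mu,1}\ge\Phi$ and $A=\Phi$ elsewhere, pushing $\mathcal F\to 1$, giving $Z_1(\mu)=0$. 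For $\mu>0$ the fence genuinely obstructs and one must solve the Euler--Lagrange problem: the optimal $A$ equals $\Phi$ on an interval $(-\infty,t]$ and then follows the ``free'' optimiser (a rescaled Gaussian arc matched $C^1$ at $t$) afterwards. Computing the resulting $\mathcal F$ and optimising over the matching point $t$ should reproduce $1-e^{-\mu^2/4}$. This is the computational heart of the lemma.

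\textbf{Step 3: the $\nu\to 0$ limit \eqref{eq:rayleigh_limit}, and the main obstacle.}
As $\nu\to 0$ the target density $\phi_{\mu,\nu}$ concentrates to a Dirac mass at $\mu$, so $\mathcal F(A',\phi_{\mu,\nu})$ is governed by the value of $A'$ near $x=\mu$; heuristically $\sup_{A\geq\Phi}\mathcal F\to$ the largest attainable ``density mass'' of a monotone $A\geq\Phi$ at the point $\mu$, and the constraint $A\geq\Phi$ forces $A(\mu)\geq\Phi(\mu)$, which pins down how much slope budget remains, yielding $\sup\mathcal F\to 1-\Phi(\mu)$ and hence $Z_\nu(\mu)\to\Phi(\mu)$. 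Making this rigorous — controlling the interchange of the limit $\nu\to 0$ with the supremum over the infinite-dimensional class of $A$'s, and ruling out that a sequence of increasingly concentrated $A'$ does better — is the step I expect to be the main obstacle; I would handle it by a two-sided estimate: an explicit near-optimal $A$ for the lower bound on $Z_\nu$, and for the upper bound a localisation argument showing that mass of $A'$ far from $\mu$ contributes negligibly to $\mathcal F$ while the fence constraint caps the mass near $\mu$. Since all three items are quoted from Ref.~\cite{kumagai2017second}, I would in fact cite that reference for the detailed estimates and present here only the structure above, emphasising Steps 1 and 2 which are short and self-contained.
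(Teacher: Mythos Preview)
The paper does not prove this lemma at all: it is stated with attribution ``Section~2 of Ref.~\cite{kumagai2017second}'' and no argument is given. Your proposal is therefore not competing with a paper proof but with a bare citation, and indeed you yourself conclude by saying you would cite that reference for the details. In that sense your plan is entirely consistent with what the paper does.

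That said, a couple of comments on the sketch itself. In Step~2, the $\mu\leq 0$ case is simpler than your hesitation suggests: since $\Phi_{\mu,1}(x)=\Phi(x-\mu)\geq\Phi(x)$ when $\mu\leq 0$, the choice $A=\Phi_{\mu,1}$ is directly admissible and gives $\mathcal F=1$, hence $Z_1(\mu)=0$; no limiting construction is needed. In Step~1, your rescaling idea is correct in spirit, but be careful: the change of variables turns the constraint $A\geq\Phi_{0,1}$ into $\tilde A\geq\Phi_{0,\nu}$ while turning the target into a unit-variance Gaussian, so to land on $Z_\nu(\sqrt\nu\mu)$ you still need one more symmetry (essentially swapping the roles of fence and target, or equivalently a second rescaling); this is how Kumagai--Hayashi do it, and it is worth stating explicitly rather than leaving implicit. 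Step~3's two-sided estimate via localisation is the right approach and matches the original argument.
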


As well as these properties, an explicit form for the Rayleigh-normal distribution can be given. If we define $\alpha_{\mu,\nu}$ as the unique solution~\cite[Lemma 3]{kumagai2017second} to
\begin{align}
\frac{\Phi'(x)}{\Phi'_{\mu,\nu}(x)}=\frac{\Phi(x)}{\Phi_{\mu,\nu}(x)},
\end{align}
and let 
\begin{align}
	A_{\mu,\nu}:=\begin{dcases}
	\Phi_{\mu,\nu}(x)\frac{\Phi(\alpha_{\mu,\nu})}{\Phi_{\mu,\nu}(\alpha_{\mu,\nu})} & x\leq \alpha_{\mu,\nu},\\
	\Phi(x) & x\geq \alpha_{\mu,\nu},
	\end{dcases}
\end{align}
then for $\nu>1$ we have
\begin{align}
	\mathop{\mathrm{arg~max}}_{A\geq \Phi}\mathcal F\left(A',\Phi'_{\mu,\nu}\right)=A'_{\mu,\nu},
\end{align}
and so $Z_\nu(\mu)=1-\mathcal F\left(A'_{\mu,\nu},\Phi'_{\mu,\nu}\right)$~\cite[Theorem 4]{kumagai2017second}. Using the duality property, Eq.~\eqref{eq:duality}, a similar expression can be given for $\nu\in(0,1)$. We present plots of Rayleigh-normal distributions for a few selected values of $\nu$ in Fig.~\ref{fig:rayleigh-normal}.

\section{Second-order asymptotics for thermodynamic interconversion}
\label{sec:main_result}

\subsection{Statement of the main result}

We are now ready to state our main result concerning the second-order analysis of the approximate interconversion rates between independent and identically distributed (i.i.d.) states under thermal operations. We focus on initial and target states, $\rho$ and $\sigma$, that commute with the Hamiltonian $H$, so that we can represent them as probability distributions, $\v{p}$ and $\v{q}$, over their eigenvalues. For two fixed distributions $\bm p$ and $\bm q$ we will be interested in the trade-off between three parameters in the asymptotic $n\to\infty$ regime: the rate of conversion $R$, the infidelity $\epsilon$, and the inverse temperature of the bath $\beta$\footnote{Note that for fixed $H$ the inverse temperature $\beta$ fully specifies the thermal Gibbs distribution $\v{\gamma}$.}. Specifically, we will be interested in the triples $(\beta,\epsilon,R)$ for which there exist Gibbs-preserving maps $\Lambda^\beta$ such that
\begin{align}
\delta\Bigl( \Lambda^\beta\left(\bm p^{\otimes n}\otimes \bm \gamma^{\otimes Rn}\right) , \left(\bm q^{\otimes Rn}\otimes \bm \gamma^{\otimes n}\right)\Bigr)\leq \epsilon,
\end{align}
where $\bm \gamma$ denotes the Gibbs state at inverse temperature~$\beta$. By Corollary~\ref{cor:thermo_major} and Definition~\ref{def:pre_post} this condition is equivalent to approximate post-thermomajorisation,
\begin{align}
\bm p^{\otimes n}\otimes \bm \gamma^{\otimes Rn}
\succ_\epsilon^\beta
\bm q^{\otimes Rn}\otimes \bm \gamma^{\otimes n},
\end{align}
and, by Corollary~\ref{cor:approx_thermal}, there exists a thermal operation transforming $\rho^{\otimes n}$ into a state $\epsilon$ away in the infidelity measure from $\sigma^{\otimes R n}$.

We then define the optimal interconversion rate $R^{*}_\beta(n,\epsilon;\bm p,\bm q)$ and the optimal infidelity of interconversion $\epsilon^{*}_\beta(n,R;\bm p,\bm q)$ as
\begin{subequations}
	\begin{align}
	R^{*}_\beta:=&{\max}\left\lbrace R \,\middle|\, \bm p^{\otimes n}\otimes \bm\gamma^{\otimes Rn} \succ_{\epsilon}^\beta
	\bm q^{\otimes Rn}\otimes \bm \gamma^{\otimes n} \right\rbrace,\\
	\epsilon^{*}_\beta:=&\rlap{\,$\min$}\phantom{\max}\left\lbrace \epsilon \,\middle|\, \bm p^{\otimes n}\otimes \bm\gamma^{\otimes Rn} \succ_{\epsilon}^\beta
	\bm q^{\otimes Rn}\otimes \bm \gamma^{\otimes n} \right\rbrace.
	\end{align} 
\end{subequations}	
When it is clear from context we will drop the explicit dependence on $\bm p$ and $\bm q$. Our main result is then given by the following theorem.

\begin{thm}[Second-order asymptotic interconversion rates]
	\label{thm:sec-ord thermal}
	Let $\rho$ and $\sigma$ be energy-incoherent initial and target states with eigenvalues given by $\bm p$ and $\bm q$, respectively. Then, for inverse temperature $\beta$ and infidelity $\epsilon\in(0,1)$, the optimal interconversion rate has the following second-order expansions
	\begin{subequations}
	\begin{align}
	\label{eq:interconversion1}
	&R^*(n, \epsilon) \simeq \frac{D(\v{p}\|\v{\gamma})}{D(\v{q}\|\v{\gamma})} \left( 1 \!+\! \sqrt{\frac{V(\v{p}\|\v{\gamma})}{n\, D(\v{p}\|\v{\gamma})^2}}\, Z_{1/\nu}^{-1}(\epsilon) \right) \\
	& ~~\simeq \frac{D(\v{p}\|\v{\gamma})}{D(\v{q}\|\v{\gamma})} \left( 1 \!+\! \sqrt{\frac{V(\v{q}\|\v{\gamma})}{n\, D(\v{p}\|\v{\gamma}) D(\v{q}\|\v{\gamma})}}\, Z_{\nu}^{-1}(\epsilon) \right) \!,\label{eq:interconversion2}
	\end{align}
	\end{subequations}
where
\begin{align}
\label{eq:nu}
\nu = \frac{V(\v{p}\|\v{\gamma})/D(\v{p}\|\v{\gamma})}{V(\v{q}\|\v{\gamma})/D(\v{q}\|\v{\gamma})}
\end{align}
is the irreversibility parameter.
\end{thm}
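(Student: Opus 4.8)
The plan is to strip away the thermodynamic structure via the embedding map, reducing everything to \emph{ordinary} approximate majorisation between i.i.d.\ distributions padded by uniform factors, and then to run a central-limit analysis of the associated Lorenz curves — in effect importing the second-order machinery of Ref.~\cite{kumagai2017second}, but with the majorisation arrow reversed, since here the free states are uniform rather than pure. Concretely, by Corollary~\ref{cor:thermo_major} and Definition~\ref{def:pre_post} the existence of a thermal operation reaching infidelity $\epsilon$ at rate $R$ is equivalent to $\bm p^{\otimes n}\otimes\bm\gamma^{\otimes Rn}\succ_\epsilon^\beta\bm q^{\otimes Rn}\otimes\bm\gamma^{\otimes n}$. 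Applying $\Gamma^\beta$ and using that embedding commutes with tensor products, that $\hat{\bm\gamma}=\bm\eta^D$ is uniform, that $\delta$ is invariant under embedding, and that $D(\hat{\bm p}\|\bm\eta^D)=D(\bm p\|\bm\gamma)$ and $V(\hat{\bm p}\|\bm\eta^D)=V(\bm p\|\bm\gamma)$ by Eq.~\eqref{eq:embedD}, this turns into ordinary approximate post-majorisation $\hat{\bm p}^{\otimes n}\otimes(\bm\eta^D)^{\otimes Rn}\succ_\epsilon\hat{\bm q}^{\otimes Rn}\otimes(\bm\eta^D)^{\otimes n}$ between two distributions of matched dimension $D^{(1+R)n}$. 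Irrational $\bm\gamma$ is handled by letting $D\to\infty$ together with continuity in $\beta$, and degeneracies of $H$ are removed for free as in Section~\ref{sec:majorisation}.

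Next I would invoke the equivalence of $\epsilon$-pre- and $\epsilon$-post-majorisation proved in Section~\ref{sec:aux_results}, so that one may smooth whichever of the two product distributions is analytically convenient in each direction of the argument. The core estimate is an asymptotic analysis of the sorted cumulative distribution functions (Lorenz curves) of $\hat{\bm p}^{\otimes n}$ and $\hat{\bm q}^{\otimes Rn}$ relative to the uniform state. Writing the rate as $R=R_0(1+r/\sqrt n)$ with $R_0=D(\bm p\|\bm\gamma)/D(\bm q\|\bm\gamma)$ the asymptotic rate of Ref.~\cite{brandao2013resource}, a Berry--Esseen refinement of the central limit theorem shows that the log-likelihood of $\hat{\bm p}^{\otimes n}$ against uniform, recentred by $nD(\bm p\|\bm\gamma)$ and rescaled by $\sqrt{nV(\bm p\|\bm\gamma)}$, converges uniformly to the Gaussian $\Phi$, while the $\hat{\bm q}^{\otimes Rn}$ side converges to $\Phi_{\mu,\nu}$ with mean $\mu$ linear in $r$ and variance $\nu$ given by Eq.~\eqref{eq:nu}; the uniform padding only reparametrises the domain and is absorbed into this rescaling.

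Because majorisation sees only the concave envelope of a sorted distribution and the smoothing allowance lets one lower the majorised curve arbitrarily, the best achievable fidelity in the scaling limit is $\sup_{A\ge\Phi}\mathcal{F}(A',\Phi'_{\mu,\nu})$, which by definition equals $1-Z_\nu(\mu)$ and is attained by the explicit maximiser $A_{\mu,\nu}$ of Ref.~\cite{kumagai2017second}. Carrying this out as a matching pair of achievability and converse statements — constructing an explicit smoothed target for one bound, and bounding a monotone functional of the Lorenz curve for the other — pins down $\epsilon^{*}_\beta(n,R)$ and hence, by inversion, $R^{*}_\beta(n,\epsilon)$, giving Eq.~\eqref{eq:interconversion2}. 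Equation~\eqref{eq:interconversion1} then follows from the duality $Z_{1/\nu}(\mu)=Z_\nu(\sqrt\nu\mu)$ of Eq.~\eqref{eq:duality} together with the identity $\nu=\bigl(V(\bm p\|\bm\gamma)/D(\bm p\|\bm\gamma)\bigr)/\bigl(V(\bm q\|\bm\gamma)/D(\bm q\|\bm\gamma)\bigr)$ of Eq.~\eqref{eq:nu}.

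I expect the main obstacle to be the quantitative passage from the discrete problem to the continuous one at the required accuracy $o(1/\sqrt n)$: the Berry--Esseen error must be controlled uniformly even though the dimension of the uniform padding grows with $n$ and itself depends on the optimisation variable $R$; the discrete Lorenz curves must be shown to converge strongly enough that the fidelity optimisation passes to the limit in both directions; and one must check that the envelope maximiser $A_{\mu,\nu}$ is realised by a genuine smoothing of the embedded distribution rather than merely by an abstract monotone function. Finally, the degenerate case $V(\bm p\|\bm\gamma)=V(\bm q\|\bm\gamma)=0$, where $\nu$ and the Gaussian picture break down, lies outside this analysis and would be handled separately by a direct single-shot thermomajorisation argument, yielding an exact formula for $R^{*}(n,\epsilon)$.
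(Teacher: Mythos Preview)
Your proposal is correct and follows essentially the same route as the paper: reduce to bistochastic majorisation via embedding (Lemma~\ref{lem:post_thermo_equiv}), invoke the pre/post equivalence (Lemma~\ref{lem:pre_post_equiv}), apply a central-limit analysis to the sorted cumulatives of the padded i.i.d.\ distributions, and identify the limiting optimal fidelity with $1-Z_\nu(\mu)$ via matching achievability and converse bounds adapted from Ref.~\cite{kumagai2017second}. Two minor tactical deviations worth noting: the paper gets by with pointwise CLT convergence (Lemmas~\ref{lem:clt-magnitude}--\ref{lem:clt}) rather than Berry--Esseen rates, and its achievability argument is an explicit discrete binning construction (parametrised by $I$, $\lambda$, and the monotone function $A$) followed by limits, rather than a direct concave-envelope reading of the Lorenz curve; it also treats the edge cases $V(\bm p)=0$ or $V(\bm q)=0$ separately (Propositions~\ref{prop:trivial}--\ref{prop:form}) before the general case, though as you anticipate these are in principle subsumed by the limiting behaviour of $Z_\nu$.
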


The full proof of \cref{thm:sec-ord thermal} can be found in \cref{sec:proof}. Before presenting it, we will discuss some of its consequences and applications in \cref{sec:discussion}, and prove auxiliary results concerning approximate majorisation in \cref{sec:aux_results}. But first, let us make a few technical remarks about the above theorem. Note that Eqs.~\eqref{eq:interconversion1}-\eqref{eq:interconversion2} are simply related by the duality property of Rayleigh-normal distribution, Eq.~\eqref{eq:duality}. The reason to state both formulas is that this way one covers each of the special cases, \mbox{$V(\v{p}\|\v{\gamma})=0$} and \mbox{$V(\v{q}\|\v{\gamma})=0$}, avoiding the use of $Z^{-1}_{\infty}$, which is undefined. The special case when both relative entropy variances vanish is covered separately in \cref{subsec:trivial}, where an exact expression for \mbox{$R^*(n,\epsilon)$} is provided (the asymptotic expansion of which coincides with the appropriate limit of Eq.~\eqref{eq:interconversion1}). 

Furthermore, since all the involved states are energy-incoherent, one can replace probability distributions \mbox{$\v{p},\v{q},\v{\gamma}$} in \cref{thm:sec-ord thermal} with density matrices \mbox{$\rho,\sigma,\gamma$}. This way one can study interconversion between non-commuting states $\rho$ and $\sigma$, as long as they both commute with $H$. For example, if the Hamiltonian is trivial, $H\propto\iden$, $\rho$ and $\sigma$ may be arbitrary states. Thus, \cref{thm:sec-ord thermal} yields a complete second-order analysis of interconversion under noisy operations~\cite{horodecki2003local}, as for trivial Hamiltonians thermal operations coincide with noisy operations. 

Finally, using results originally derived in Ref.~\cite{vidal2000approximate} (see \cref{app:lemma_fidelity}), one can numerically evaluate the optimal interconversion rates. In \cref{app:numerics} we show that this algorithm can be executed with a runtime that is efficient in the system size. Using this, in \cref{fig:numerics2,fig:numerics1} we can compare our second-order expansion to the exact interconversion rates. We find that even for relatively small system sizes, the second-order asymptotic expansion gives a remarkably good approximation to the optimal interconversion rates, especially when compared to the first-order asymptotics.

\begin{figure*}
	\centering
	\setlength\figheight{7cm}
	\setlength\figwidth{22cm}
	\def\figscale{.8}
	\includeTikzz{Numerics2}
	\caption{Comparison between the second-order approximation $R_2$ and exact thermal interconversion rates $R^*$, when converting from \mbox{$\rho=\frac{7}{10}\ketbra{0}{0}+\frac{3}{10}\ketbra{1}{1}$} to \mbox{$\sigma=\frac{8}{10}\ketbra{0}{0}+\frac{2}{10}\ketbra{1}{1}$}, with Hamiltonian \mbox{$H=\ketbra{1}{1}$} and access to a thermal bath at temperature \mbox{$1/\beta=3$}. The circles indicate exact conversion rates (c.f.\ \cref{app:numerics}), and the lines the second-order approximation given by Eq.~(\hyperref[eq:intro/general]{1}). As the exact interconversion rate is always a multiple of $1/n$, we have also indicated the rounding of the second-order approximation to the nearest multiples of $1/n$ with error bars. The colours indicate the infidelity tolerance, with $\epsilon=5\times 10^{-2}$ for red and $\epsilon=10^{-5}$ for blue. The dotted line indicates the asymptotic interconversion rate $R_1$. We plot the results for $n\leq20$ in \cref{fig:numerics1}.}
	\label{fig:numerics2}
\end{figure*}

\subsection{Discussion and applications}
\label{sec:discussion}

Although general state interconversion may seem to be a rather abstract problem, we will now show how the formalism can be applied to study more familiar thermodynamic scenarios. Since asymptotic conversion rates allow for reversible interconversion cycles and the results presented in the previous section describe finite-size corrections to these rates, our considerations will mainly revolve around irreversibility. We will first quantify it directly, by calculating the rate at which $n$ copies of a system can be transformed from initial state $\rho$, through $\sigma$, and back to $\rho$. We will then discuss the gap between work of formation and distillable work that opens when one processes finite number $n$ of systems. Finally, we will apply our results to study the performance of heat engines operating with finite-size working bodies.

\subsubsection{Finite-size reversibility}
\label{sec:application_irreversibility}

We start by considering the following thermodynamic process 
\begin{align}
\label{eq:cyclic}
\rho^{\otimes n}\rightarrow\sigma^{\otimes Rn}\rightarrow \rho^{\otimes R'Rn},
\end{align}
with optimal interconversion rates given by
\begin{align}
R=R_\beta^*(n,\epsilon_1;\v{p},\v{q}),\quad R'=R_\beta^*(Rn,\epsilon_2;\v{q},\v{p}),
\end{align}
where $\rho$ and $\sigma$ commute with the Hamiltonian and their eigenvalues are given by $\v{p}$ and $\v{q}$, respectively. Without the second-order asymptotic corrections derived in this work, the reversibility rate \mbox{$R_r^*:=RR'$} is equal to~1, and Eq.~\eqref{eq:cyclic} describes a perfect cyclic process illustrated in Fig.~\ref{fig:irreversibility}a. However, including finite-size corrections, from Eq.~\eqref{eq:interconversion1} we get 
\begin{align}
R_r^*\simeq&\left(1+\sqrt{\dfrac{V(\v{p}||\v{\gamma})}{nD(\v{p}||\v{\gamma})^2}}Z^{-1}_{1/\nu}(\epsilon_1)\right)\nonumber\\&\times \left(1+\sqrt{\dfrac{V(\v{q}||\v{\gamma})}{RnD(\v{q}||\v{\gamma})^2}}Z^{-1}_{\nu}(\epsilon_2)\right),
\end{align}
with the irreversibility parameter $\nu$ given by Eq.~\eqref{eq:nu}. Now, using the duality of Rayleigh-normal distribution, Eq.~\eqref{eq:duality}, and ignoring the terms of order $o(1/\sqrt{n})$ we obtain
\begin{align}
R_r^*\simeq 1+\sqrt{\frac{V(\v{p}||\v{\gamma})}{nD(\v{p}||\v{\gamma})^2}}\left(Z^{-1}_{1/\nu}(\epsilon_1)+Z^{-1}_{1/\nu}(\epsilon_2)\right).
\end{align}

The error is accumulated during both transformations appearing in Eq.~\eqref{eq:cyclic}. However, since the infidelity $\delta$ is not a metric, we cannot simply add the errors. Instead, we note that $\sqrt{\delta}$ \emph{is} a metric, and so it satisfies the triangle inequality. Thus, the total error $\epsilon$, i.e., the infidelity between the final state and the target state $\rho^{\otimes R_r^*n}$, satisfies \mbox{$\sqrt{\epsilon}\leq\sqrt{\epsilon_1}+\sqrt{\epsilon_2}$}. Actually, for $\epsilon_1+\epsilon_2<1$, one can obtain a tighter upper bound~\cite{tomamichel2015quantum},
\begin{align}
\label{eq:error_bound}
\epsilon\leq\left(\sqrt{\epsilon_1(1-\epsilon_2)}+\sqrt{\epsilon_2(1-\epsilon_1)}\right)^2.
\end{align}
Let us now introduce a threshold amount of infidelity,
\begin{align}
\label{eq:threshold}
\epsilon_0:=Z_{1/\nu}(0)=Z_{\nu}(0),
\end{align}
where the equality comes from duality of Rayleigh-normal distribution. Note that, if $\nu=1$, resulting in $\epsilon_0=0$, then for any finite error one can eventually achieve $R_r^*>1$, and a perfect transformation with $R_r^*=1$ and arbitrarily small error can be achieved. Thus, pairs of states satisfying $\nu=1$ are reversibly interconvertible up to second-order asymptotic corrections, analogously to a recent result in entanglement theory~\cite{ito2015asymptotic}. The use of such states in thermodynamic transformations is favourable, as it minimises the dissipation of free energy to the environment.

We will now show that the irreversibility parameter $\nu$ quantifies the incompatibility of two states (in that transformation from one state to the other leads to irreversibility) also beyond the special $\nu=1$ case. Consider a process in which one requires that the number of systems $n$ stays constant at all times. In other words, we require \mbox{$R=R'=1$}, which implies
\begin{align}
\epsilon_1=\epsilon_2=\epsilon_0.
\end{align}
Now, since $Z_{\nu}(0)\leq 1/2$, with equality achieved only for $\nu=0$ and $\nu\rightarrow\infty$, the error rates satisfy Eq.~\eqref{eq:error_bound} and the total error $\epsilon$ can be bounded by
\begin{align}
\epsilon\leq 4\epsilon_0(1-\epsilon_0)=4Z_{\nu}(0)\left(1-Z_{\nu}(0)\right).
\end{align}
We present the above bound as a function of $\nu$ in Fig.~\ref{fig:errors}a. We see that the closer $\nu$ is to~1, the less error will be induced while performing a thermodynamic transformation \mbox{$\rho^{\otimes n}\rightarrow\sigma^{\otimes n}\rightarrow\rho^{\otimes n}$} or, in other words, the more reversible the process will be.

\begin{figure}[t]
	\hspace{-.25cm}\includegraphics[width=\columnwidth]{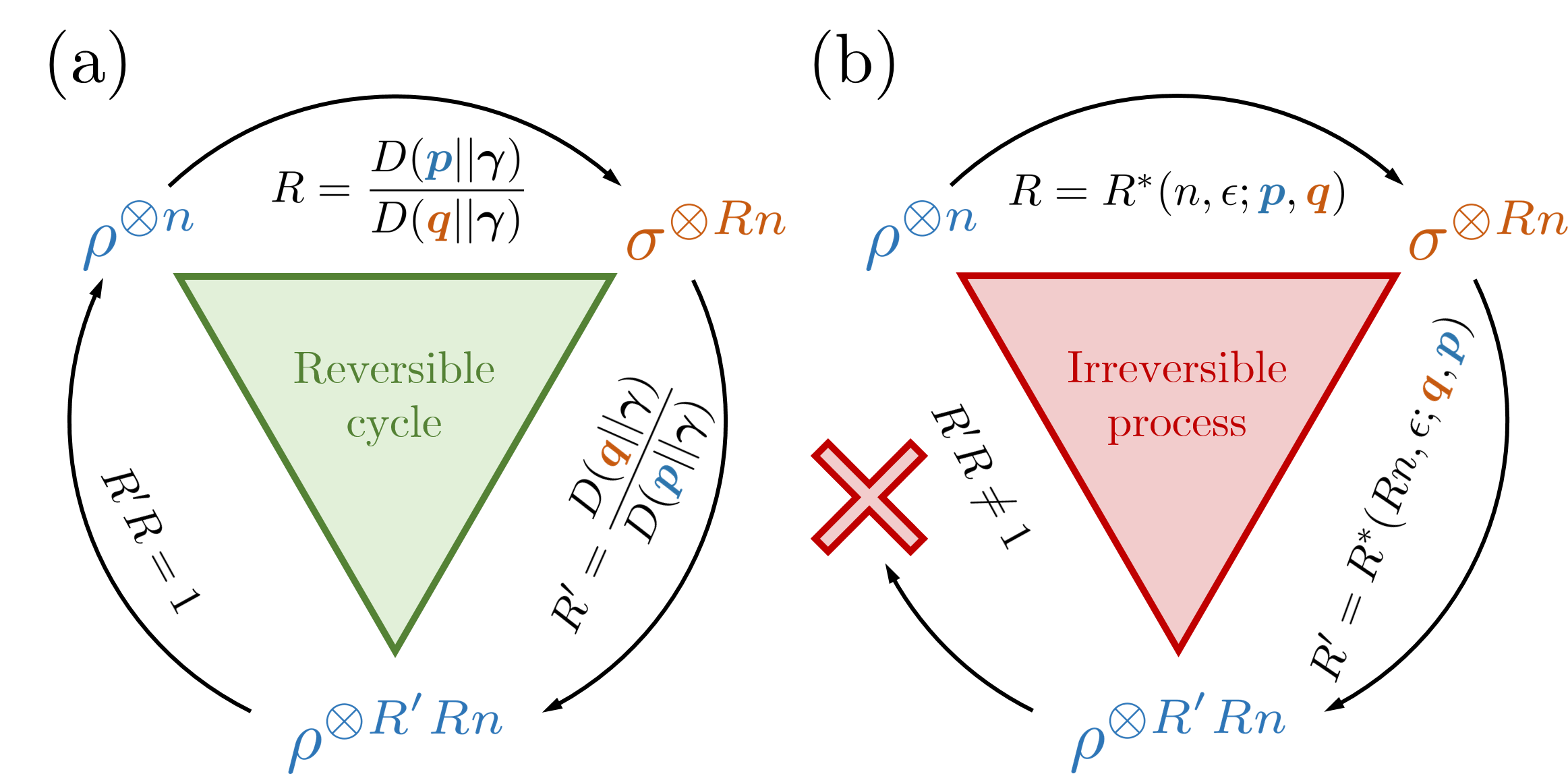}
	\caption{\label{fig:irreversibility} \emph{Finite-size irreversibility.} (a) In the asymptotic limit, $n\rightarrow\infty$, the optimal conversion rate from $\rho$ to $\sigma$ is equal to the inverse of the conversion rate from $\sigma$ to $\rho$. Therefore, reversible cycles can be performed. (b) In general, finite $n$ corrections to conversion rates for near-perfect interconversion are negative, leading to irreversibility with \mbox{$R'R<1$}.}
\end{figure}

\subsubsection{Distillable work and work of formation gap}

One particularly important consequence of irreversibility is the difference between distillable work and work of formation~\cite{horodecki2013fundamental}. These quantify the amount of thermodynamically relevant resources that can be distilled from, or are needed to form, a given state. Similarly to the resource theory of entanglement, where Bell states act as standard units of entanglement resource~\cite{horodecki2009quantum}, also within the resource theory of thermodynamics there are states acting as ``gold standards'' for measuring the amount of resources present in a state. These are given by pure energy eigenstates which, having zero entropy, have a clear energetic interpretation. The transformation requiring a change of an ancillary battery state $\ket{w}$, with energy $w$, into a state $\ket{0}$, with zero energy, is thus interpreted as performing work $w$; and a transformation allowing for an opposite change corresponds to extracting work $w$. Hence, in order to assess the thermodynamic resourcefulness of $n$ copies of a given energy-incoherent state, $\rho^{\otimes n}$, we will now investigate how much the energy of a pure battery system has to decrease per copy of $\rho$ to construct $\rho^{\otimes n}$, and how much can it increase per copy of $\rho$ while transforming $\rho^{\otimes n}$ to a thermal state?

\begin{figure}[t]
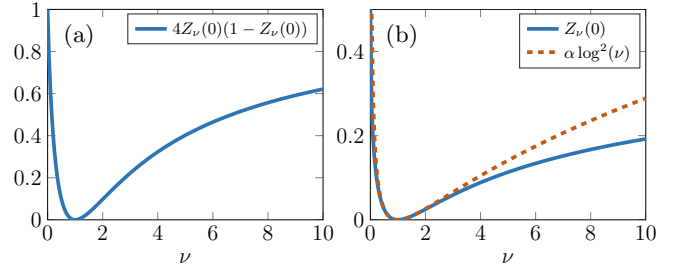

	\centering
	\setlength\figheight{4cm}
	\setlength\figwidth{5.5cm}
	\def\figscale{.7}
	\!\!\!\includeTikz{Intercept1}\!\!\includeTikz{Intercept2}\!\!
	\caption{\emph{Bounds on the infidelity of thermodynamic transformations.} All plots are symmetric under $\nu\to 1/\nu$ transformation. (a) The upper bound on the total error accumulated in a cyclic process \mbox{$\rho^{\otimes n}\rightarrow\sigma^{\otimes n}\rightarrow\rho^{\otimes n}$} as a function of irreversibility parameter $\nu$. (b) Infidelity generated by a heat engine working at the Carnot efficiency during a process that heats up the finite working body from $T_{\mathrm{c}}$ to $T_{\mathrm{c}'}$ as a function of irreversibility parameter $\nu$ (that depends on both $T_{\mathrm{c}}$ and $T_{\mathrm{c}'}$, as well as on the hot bath temperature through Eq.~\eqref{eq:nu}). The optimal achievable infidelity during the process is plotted in solid line, while the bound on the infidelity generated during a continuous process (when the finite working body evolves through thermal states at all intermediate temperatures) is plotted in dashed line.
	\label{fig:errors}}
\end{figure}

More formally, to calculate work of distillation $W_D$ we want to find the maximal value $w$ allowing for the thermodynamic transformation
\begin{equation}
\label{eq:distillation}
\left( \rho\otimes\ketbra{0}{0}\right)^{\otimes n}\rightarrow \left( \gamma\otimes\ketbra{w}{w}\right)^{\otimes n},
\end{equation}
where the second subsystem is a battery described by a Hamiltonian \mbox{$H_B=0\ketbra{0}{0}+w\ketbra{w}{w}$}. Similarly, to calculate work of formation $W_F$ we want to find the minimal value $w$ allowing for the thermodynamic transformation
\begin{equation}
\label{eq:formation}
\left( \gamma\otimes\ketbra{w}{w}\right)^{\otimes n}\rightarrow \left( \rho\otimes\ketbra{0}{0}\right)^{\otimes n}.
\end{equation}
Using Theorem~\ref{thm:sec-ord thermal} we can obtain the optimal rate for transformation described by Eq.~\eqref{eq:distillation} as a function of $w$, set it to 1 and solve for $w$, thus arriving at the approximate expression for the work of distillation:
\begin{equation}
W_D\approx k_B T\left(D(\v{p}||\v{\gamma})+\sqrt{\dfrac{V(\v{p}||\v{\gamma})}{n}}\Phi^{-1}(\epsilon)\right),
\end{equation}
with $\v{p}$ denoting the eigenvalues of $\rho$ and $\Phi^{-1}$ being the inverse of the normal Gaussian distribution. One can obtain the expression for the work of formation in an analogous way, this time looking for the optimal rate for transformation given in Eq.~\eqref{eq:formation}, resulting in
\begin{equation}
W_F\approx k_B T\left( D(\v{p}||\v{\gamma})-\sqrt{\dfrac{V(\v{p}||\v{\gamma})}{n}}\Phi^{-1}(\epsilon)\right).
\end{equation}

\begin{figure}
	\centering
	\includegraphics[width=0.8\columnwidth]{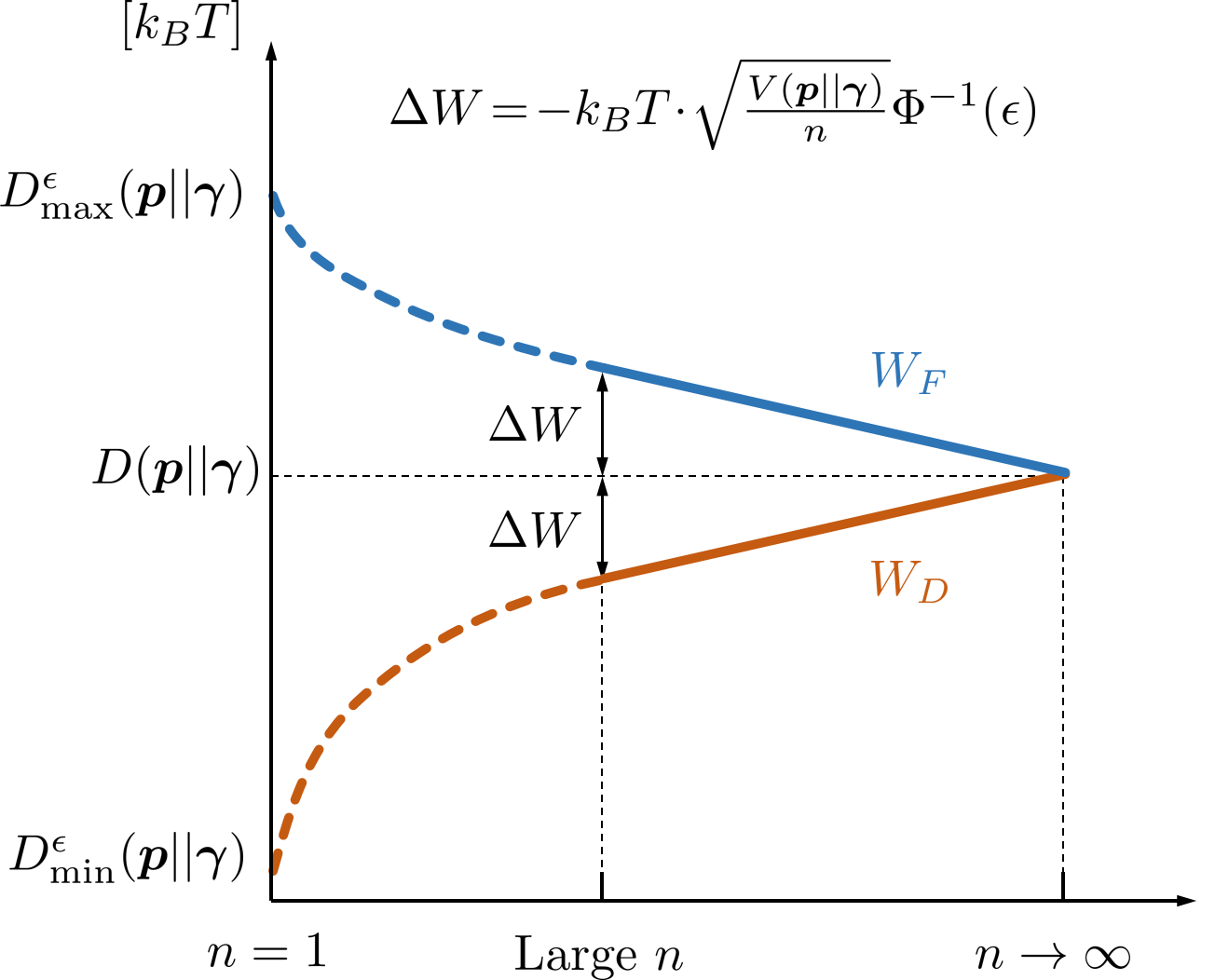}
	\caption{\label{fig:work_gap} \emph{Distillable work and work of formation gap.} The behaviour of distillable work $W_D$ and work of formation $W_F$ varies in different regimes. In single-shot scenarios they are proportional to min- and max-relative entropies~\cite{horodecki2013fundamental}. In the intermediate regime of large but finite $n$ studied in this work, the values of $W_D$ and $W_F$ lie symmetrically around the value achieved in the asymptotic limit, where $W_D$ and $W_F$ coincide and are equal to the non-equilibrium generalisation of free energy. Note that the $y$ axis above is in the units of $k_BT$.}
\end{figure}

First of all, let us briefly comment on the effect that imperfect transformations (characterised by infidelity $\epsilon$) have on the interpretation of distillable work and work of formation derived above. In the case of distillable work the non-zero infidelity means that the final battery state may differ from the pure state $\ketbra{w}{w}^{\otimes n}$, so one may actually distil less than $W_D$ work per particle. Similarly, for the work of formation the final state of the battery may differ from the pure state $\ketbra{0}{0}^{\otimes n}$, so one may actually use more than $W_F$ work per particle (by using the purity of the battery). To overcome such problems, one may employ the idea of $\epsilon$-deterministic work extraction~\cite{aberg2013truly} in the following way. After the distillation process (the argument for the formation process is analogous) one can simply measure the battery in its energy eigenbasis. With probability larger or equal to $1-\epsilon$ the battery state will collapse on $\ketbra{w}{w}^{\otimes n}$ (and so $n\cdot W_D$ work will be distilled), and with probability $\epsilon$ the work gain will differ from the derived value. Additionally, one has to take into account the thermodynamic cost of measuring the battery (erasing memory), proportional to the binary entropy of $\epsilon$ (note however that this cost is constant and so the cost per particle vanishes as $1/n$). Crucially, by choosing $\epsilon$ to be arbitrarily small, one can approach deterministic work distillation arbitrarily well, i.e., distil $n\cdot W_D$ work with probability arbitrarily close to 1.

Secondly, let us note that with the second-order asymptotic correction $W_D$ and $W_F$ lie symmetrically around the asymptotic value \mbox{$W=k_BT\cdot D(\v{p}||\v{\gamma})$},
\begin{align}
W_D\approx W-\Delta W,\quad W_F\approx W+\Delta W,
\end{align}	
with
\begin{align}
\Delta W:=-k_BT\cdot \sqrt{\dfrac{V(\v{p}||\v{\gamma})}{n}}\Phi^{-1}(\epsilon).
\end{align}
Notice that the above correction term is positive for small values of infidelity $\epsilon$, so that the resource cost of near-perfect formation of a state is always larger than the amount of resources than can be distilled from it. This symmetric gap that opens for finite $n$ is illustrated in Fig.~\ref{fig:work_gap}, where we also compare it with the values of $W_D$ and $W_F$ for the single-shot scenario $n=1$ (where $W_D$ and $W_F$ generally lie asymmetrically around the asymptotic value $W$). 

Furthermore, our second-order correction for distillable work exactly coincides with the one derived in Ref.~\cite{aberg2013truly} within an alternative thermodynamic framework, where state transformations are modelled by a sequence of energy level transformations (changes of Hamiltonian eigenvalues interpreted as performing/extracting work) and full thermalisations (replacing a state with the thermal state), rather than by thermal operations. This might have been expected, as the authors of Ref.~\cite{egloff2015measure} (and more recently of Ref.~\cite{perry2015sufficient}) showed that any transformation between energy-incoherent states which can be achieved via a thermal operation, can also be achieved by a sequence of level transformations and partial level thermalisations. 

Finally, let us analyse the special case when the state under scrutiny is itself a thermal equilibrium state $\v{\gamma}'$, at some temperature $T'$ different from the background temperature $T$. In the asymptotic limit $n\rightarrow\infty$, both distillable work $W_D$ and work of formation $W_F$ coincide with the standard thermodynamic result: the maximal (minimal) amount of work that can be extracted (needs to be invested) while changing the temperature of the system from $T'$ to $T$ (from $T$ to $T'$) is given by its free energy change. However, we also obtain the second-order asymptotic correction to $W_D$ and $W_F$, given by
\begin{align}
\Delta W =  -|T - T'| \sqrt{\frac{k_B c_{T'}}{n}} \,
\Phi^{-1}(\epsilon),
\end{align}
where we have used Eq.~\eqref{eq:var_capacity} to relate the relative entropy variance with $c_{T'}$, the heat capacity of the system at temperature $T'$. In order to interpret this correction term, we first note that standard thermodynamic results apply when fluctuations of energy are much smaller than the average energy of the system. Now, to quantify the relative strength of fluctuations, we introduce a fluctuation parameter $f$ as a ratio of the total energy variation and the total energy itself,
\begin{align}
f:=\frac{\sqrt{n\mathrm{Var}_{\v{\gamma}'}(E)}}{n\langle E\rangle_{\v{\gamma}'}}=\sqrt{\frac{k_Bc_{T'}}{n}}\cdot \frac{T'}{\langle E\rangle_{\v{\gamma}'}}.
\end{align}
We then see that the correction term $\Delta W$ can be expressed as
\begin{align}
\Delta W =  - f w  \,
\Phi^{-1}(\epsilon),\quad w:=\langle E\rangle_{\v{\gamma}'} \left|1 - \frac{T}{T'}\right|,
\end{align}
so that $\Delta W$ is directly related to the relative strength of fluctuations $f$, and disappears when the standard thermodynamic assumption, $f=0$, holds. Note that $\Phi^{-1}(\epsilon)$ is negative for $\epsilon<1/2$. Moreover, \mbox{$w$} is the amount of work performed by an engine operating at Carnot efficiency between two heat baths at temperatures $T$ and $T'$, when the amount of heat equal to $\langle E\rangle_{\v{\gamma}'}$ flows in to, or out of, the bath at temperature $T'$ (the former for $T>T'$, the latter for $T'>T$).

\subsubsection{Corrections to efficiency of heat engines}

One of the consequences of studying thermodynamics in the quantum regime is that it may not always be plausible to assume that working bodies and thermal reservoirs are infinite. Thus, recent studies focused on the effects finite-size baths have on standard thermodynamic results like fluctuation theorems~\cite{campisi2009finite}, Landauer's principle~\cite{reeb2014improved}, second~\cite{richens2018finite} and third law of thermodynamics~\cite{scharlau2018quantum}. Moreover, the performance of heat engines operating between finite-size baths was investigated in Refs.~\cite{tajima2017finite,ito2016optimal} (by directly focusing on the behaviour of thermodynamic quantities and not on the interconversion problem) and Ref.~\cite{woods2015maximum} (where the main focus was on the energetic structure of the finite bath and not on its size). Here, we will show how our results can be employed to investigate the performance of heat engines with finite working bodies, by studying the appropriately chosen interconversion problem. As we will discuss systems in equilibrium at different temperatures, we will indicate the (inverse) temperature in the subscript. More precisely, a system at temperature~$T_{\mathrm{x}}$ (at inverse temperature~$\beta_{\mathrm{x}}$) will be denoted by~$\gamma_{\mathrm{x}}$, and the corresponding partition function by $\Z_{\mathrm{x}}$. Also, note that equilibrium states are diagonal in the energy eigenbasis, so our results are applicable.

We consider two infinite baths at temperatures \mbox{$T_{\mathrm{h}}>T_{\mathrm{c}}$}, and a finite working body composed of $n$ particles initially at a cold temperature $T_{\mathrm{c}}$ (analogous considerations hold for the initial temperature being $T_{\mathrm{h}}$). As in the previous subsection, we also include a battery system comprised of $n$ two-level systems, each described by Hamiltonian $H_B=w\ketbra{w}{w}$, initially in a zero energy eigenstate $\ketbra{0}{0}^{\otimes n}$. We now couple the working body at temperature $T_{\mathrm{c}}$ and the battery to the hot bath, allowing us to perform a thermal operation with respect to temperature $T_{\mathrm{h}}$. In particular, we consider the following transformation
\begin{equation}
\label{eq:engine_transformation}
\left( \gamma_{\mathrm{c}}\otimes\ketbra{0}{0}\right)^{\otimes n}\rightarrow \left( \gamma_{\mathrm{c'}}\otimes\ketbra{w}{w}\right)^{\otimes n}.
\end{equation}
This transformation can be understood as a result of heat $Q_{\mathrm{in}}$ flowing from the hotter background bath into an engine; part of it, $Q_{\mathrm{out}}$, then heats up the working body composed of $n$ particles from $T_{\mathrm{c}}$ to $T_{\mathrm{c'}}$, while the remaining energy is used to perform work $n\cdot w$ on $n$ particles comprising the battery. We schematically present this thermodynamic process in Fig.~\ref{fig:heat_engine}.

\begin{figure}
	\centering
	\hspace{-.5cm}\includegraphics[width=\columnwidth]{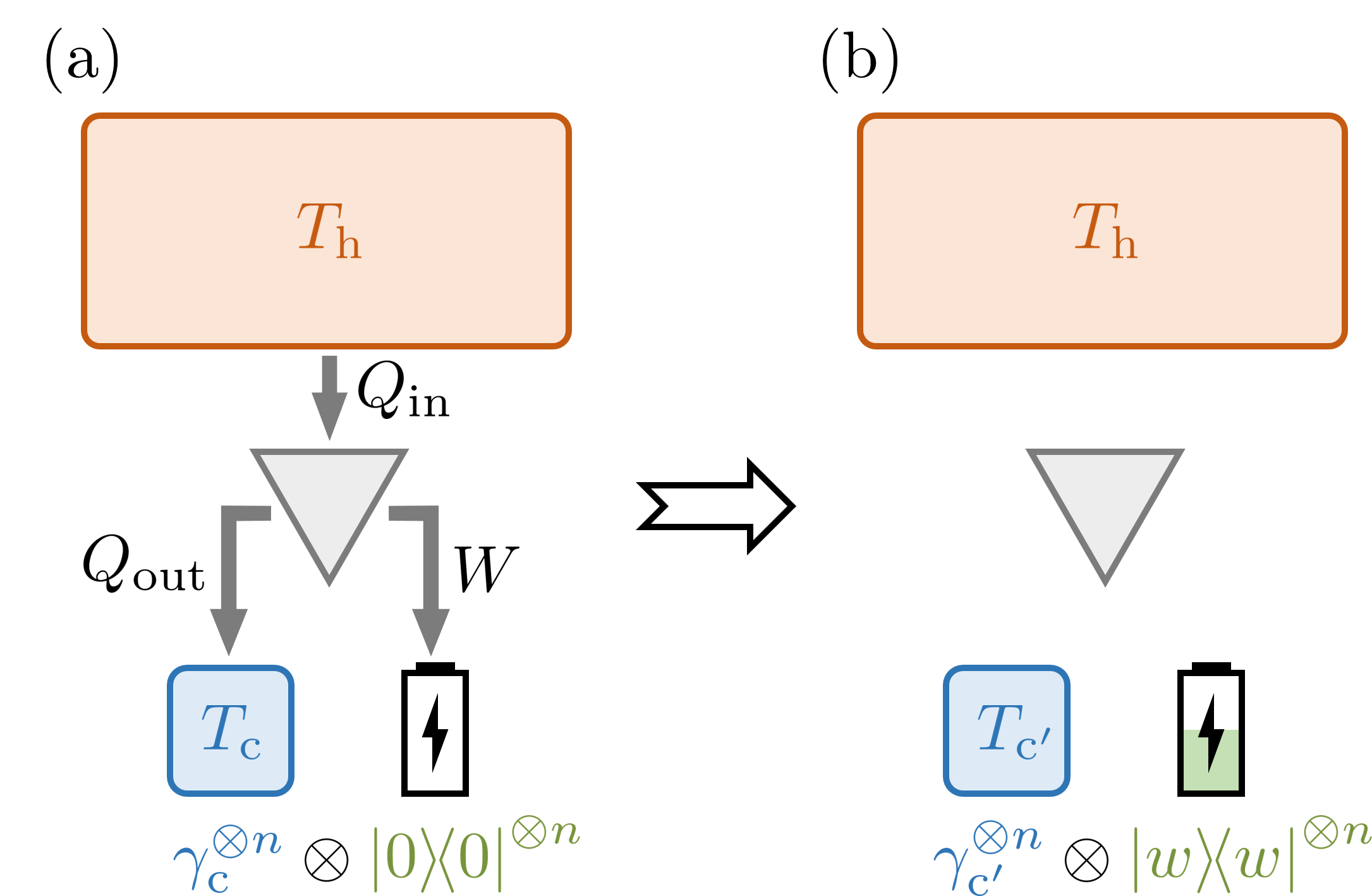}
	\caption{\label{fig:heat_engine} \emph{Performance of a heat engine as an interconversion problem.} 
		(a) The transformation of the working body (initially at cold temperature $T_{\mathrm{c}}$) and the battery (initially in the ground state), $(\gamma_{\mathrm{c}} \otimes \ketbra{0}{0})^{\otimes n} $, can be seen as heat flowing in and out of the engine that performs work on the battery. (b) As a result of such interconversion, i.e., the heat engine performance, the finite working body ends up at the intermediate temperature $T_{\mathrm{c}'}$, while the battery gets transformed to an excited state $\ketbra{w}{w}^{\otimes n}$.}
\end{figure}

The heat $Q_{\mathrm{out}}$ flowing into the working body is given by the change of energy,
\begin{equation}
\label{eq:q_out}
Q_{\mathrm{out}}=n\Delta E,
\end{equation} 
while the optimal amount of performed work $W=n\cdot w$ can be calculated similarly as in the previous subsection (by setting the rate from Eq.~\eqref{eq:interconversion1} for the transformation given by Eq.~\eqref{eq:engine_transformation} to 1 and solving for $w$), yielding
\begin{equation}
w\approx k_BT_{\mathrm{h}}\left(\Delta D + \sqrt{\frac{V(\v{\gamma}_{\mathrm{c}}||\v{\gamma}_{\mathrm{h}})}{n}} Z^{-1}_{1/\nu}(\epsilon) \right),
\label{eq:w}
\end{equation}
where we have introduced the following shorthand notation
\begin{subequations}
	\begin{align}
	\Delta E&:=\langle E\rangle_{\v{\gamma}_{\mathrm{c'}}}-\langle E\rangle_{\v{\gamma}_{\mathrm{c}}},\\
	\Delta D&:=D(\v{\gamma}_{\mathrm{c}}||\v{\gamma}_{\mathrm{h}}) - D(\v{\gamma}_{\mathrm{c'}}||\v{\gamma}_{\mathrm{h}}),
	\end{align}
\end{subequations}
and $\nu=V(\v{\gamma}_{\mathrm{c}}||\v{\gamma}_{\mathrm{h}})/V(\v{\gamma}_{\mathrm{c}'}||\v{\gamma}_{\mathrm{h}})$. Now, using energy conservation,
\begin{align}
Q_{\mathrm{in}}=Q_{\mathrm{out}}+W,
\end{align}
we can calculate the efficiency of the considered process to be
\begin{equation}
\eta=\frac{W}{Q_{\mathrm{in}}}=\left(1+\frac{Q_{\mathrm{out}}}{W}\right)^{-1},
\end{equation}
with $Q_{\mathrm{out}}$ and $W$ given by Eqs.~\eqref{eq:q_out} and~\eqref{eq:w}, respectively.

In order to interpret the obtained expression let us first analyse the limiting case. Ignoring the second-order asymptotic correction (sending \mbox{$n\rightarrow\infty$}), the extracted work is just equal to the change of the free energy of the working body. In Appendix~\ref{app:optimal_engine} we show that this is exactly the amount of work that would be extracted by an engine operating at Carnot efficiency,
\begin{align}
\eta_{\mathrm{C}}(T_{\mathrm{x}}):=1-\frac{T_{\mathrm{x}}}{T_{\mathrm{h}}},
\end{align}
between an infinite bath at fixed temperature $T_{\mathrm{h}}$ and a colder finite bath that heats up during the process from $T_{\mathrm{x}}=T_{\mathrm{c}}$ to $T_{\mathrm{x}}=T_{\mathrm{c'}}$. In other words, without the $1/\sqrt{n}$ correction we obtain an \emph{integrated} Carnot efficiency $\eta^{\mathrm{int}}_{\mathrm{C}}$ that arises from an \emph{instantaneous} Carnot efficiency $\eta_{\mathrm{C}}$ at all times,
\begin{align}
\eta^{\mathrm{int}}_{\mathrm{C}}(T_{\mathrm{c}}\rightarrow T_{\mathrm{c'}})=\left(1+\frac{\Delta E}{k_BT_{\mathrm{h}}\Delta D}\right)^{-1}\!\!\!\!\!\!.
\end{align}
The relation becomes even more evident when we consider the limit $\Delta T\rightarrow 0$. Then 
\begin{subequations}
\begin{align}
\Delta E&\xrightarrow{\Delta T\to 0} c_{T_{\mathrm{c}}} \Delta T,\\
\Delta D&\xrightarrow{\Delta T\to 0} c_{T_{\mathrm{c}}}\Delta T \left(\frac{1}{k_BT_{\mathrm{c}}}-\frac{1}{k_BT_{\mathrm{h}}}\right),
\end{align}
\end{subequations}
with $c_{T_\mathrm{c}}$ denoting the heat capacity of the system at temperature $T_{\mathrm{c}}$, and so
\begin{align}
\eta^{\mathrm{int}}_{\mathrm{C}}(T_{\mathrm{c}}\rightarrow T_{\mathrm{c'}})&\xrightarrow{\Delta T\to 0} \eta_{\mathrm{C}}(T_{\mathrm{c}}).
\end{align} 

Now, the finite-size correction leads to a modified expression for integrated efficiency, given by
\begin{align}
\eta^{\mathrm{int}}\approx \eta^{\mathrm{int}}_{\mathrm{C}}+\frac{k_B (T_{\mathrm{h}}-T_{\mathrm{c}})\Delta E}{(k_B T_{\mathrm{h}}\Delta D+\Delta E)^2}\sqrt{\frac{c_{T_{\mathrm{c}}}}{nk_B}}Z^{-1}_{1/\nu}(\epsilon),
\end{align}
where we have used Eq.~\eqref{eq:var_capacity} again to relate the relative entropy variance with the heat capacity of the system. We first note that there exists a threshold amount of infidelity~$\epsilon_0$, given by Eq.~\eqref{eq:threshold}, below which the correction term is negative. Since the infidelity between final and target states can be interpreted as performing imperfect work, near-perfect work can be performed only with efficiency strictly smaller than $\eta_{\mathrm{C}}^{\mathrm{int}}$. On the other hand, accepting infidelity $\epsilon\geq\epsilon_0$ allows one to achieve and even go beyond the integrated efficiency corresponding to instantaneous Carnot efficiency. This is in accordance with a recent result showing that the Carnot efficiency can be surpassed by extracting imperfect work~\cite{ng2017surpassing}.

As in the asymptotic limit, we also want to investigate the instantaneous efficiency, when $T_{\mathrm{c}'}$ is very close to $T_{\mathrm{c}}$. In particular, we will focus on the quality of performed work when the engine works at instantaneous Carnot efficiency. We thus require that the error $\Delta\epsilon$ accumulated during an infinitesimal step that changes the temperature by $\Delta T\rightarrow 0$ is equal to the threshold error $\epsilon_0$. Since then the correction term vanishes, we have \mbox{$\eta^{\mathrm{int}}\approx \eta^{\mathrm{int}}_{\mathrm{C}}$} and we know that for small $\Delta T$ this yields $\eta_{\mathrm{C}}$. Because the two considered thermal states are close, $\nu$ is close to unity, and as such the infidelity of the process can be expanded as
\begin{align}
\Delta\epsilon=Z_{\nu}(0)=Z_{1+\Delta\nu}(0)\approx\alpha\Delta\nu^2,
\end{align}
where $\alpha\approx 0.0545$ can be numerically evaluated. The expansion of $\Delta\nu$ in terms of $\Delta T$ is given by
\begin{align}
\Delta \nu\approx g(T_{\mathrm{c}})\Delta T,
\end{align}
with
\begin{align}
g(T_{\mathrm{c}}):=\frac{\mathrm d}{\mathrm dT_{\mathrm{x}}}\left[\log V(\v{\gamma}_{\mathrm{x}}||\v{\gamma}_{\mathrm{h}})\right]\Biggr|_{T_{\mathrm{x}}=T_{\mathrm{c}}}.
\end{align}
As discussed in Section~\ref{sec:application_irreversibility}, it is not infidelity, but its square root that satisfies the triangle inequality. We thus have that the instantaneous rate of accumulating square root infidelity is given by
\begin{align}
\frac{\mathrm d\sqrt{\epsilon}}{\mathrm dT_{\mathrm{x}}}\approx \sqrt{\alpha}|g(T_{\mathrm{x}})|,
\end{align}
and so one can achieve the instantaneous Carnot efficiency by paying the price of an instantaneous rate of accumulating error. This can be then translated into the bound on the total accumulated error in the following way
\begin{align}
\epsilon\leq \left(\int_{T_{\mathrm{c}}}^{T_{\mathrm{c}'}}\frac{\mathrm d\sqrt{\epsilon}}{\mathrm dT_{\mathrm{x}}}\,\mathrm dT_{\mathrm{x}}\right)^2 = \alpha\left(\int_{T_{\mathrm{c}}}^{T_{\mathrm{c}'}}|g(T_{\mathrm{x}})|\,\mathrm dT_{\mathrm{x}}\right)^2\!.
\end{align}
It is straightforward to show that this upper bound is larger than \mbox{$\alpha\log^2\nu$}, which in turn is larger than~$Z_{\nu}(0)$. This shows that the error accumulated in a continuous process (with the working body continuously passing through all intermediate temperatures) is in general larger than that of an optimal ``one-step'' process. We illustrate this in Fig.~\ref{fig:errors}b.

Finally, let us comment on a special case when $\nu=1$. For initial and target states being thermal equilibrium states at distinct temperatures (and different from background temperature $T_{\mathrm{h}}$), the value of $\nu$ depends on the Hamiltonian of the investigated system. If it happens that for a given Hamiltonian there exist $T_{\mathrm{x}}$ and $T_{\mathrm{x}'}$ such that $\nu=1$, then it is possible to achieve perfect work extraction at integrated Carnot efficiency $\eta_{\mathrm{C}}^{\mathrm{int}}(T_{\mathrm{x}}\rightarrow T_{\mathrm{x'}})$. Interestingly, for any Hamiltonian there always exist such pairs of temperatures. To see this note that for both $T_{\mathrm{x}}=0$ and $T_{\mathrm{x}}=T_{\mathrm{h}}$ the relative entropy variance vanishes, \mbox{$V(\v{\gamma}_{\mathrm{x}}||\v{\gamma}_{\mathrm{h}})=0$}. Since it is a continuous function of temperature, we get that for any $T_{\mathrm{x}}$ in the interval $(0,T_{\mathrm{h}})$ there exists at least one other temperature $T_{\mathrm{x'}}$ such that \mbox{$V(\v{\gamma}_{\mathrm{x}}||\v{\gamma}_{\mathrm{h}})=V(\v{\gamma}_{\mathrm{x'}}||\v{\gamma}_{\mathrm{h}})$}, resulting in $\nu=1$. This shows that by appropriately choosing the temperatures between which the heat engine operates, one may decrease or even avoid irreversible losses.

\section{Results on approximate majorisation}
\label{sec:aux_results}

We now proceed to the presentation of a few technical lemmas that may be of independent interest. These concern relations between different notions of approximate majorisation and thermomajorisation introduced in Section~\ref{sec:approximate}. We first need the following auxiliary result.
\begin{lem}
	\label{lemma:fidelity}
	For fixed probability vectors $\v{p}$ and $\v{q}$ denote by $\tilde{\v{p}}$ any distribution that majorises $\v{q}$, and by $\tilde{\v{q}}$ any distribution that is majorised by $\v{p}$. Then the maximum fidelity between $\tilde{\v{p}}$ and $\v{p}$ over all such $\tilde{\v{p}}$ is equal to the maximum fidelity between $\v{q}$ and $\tilde{\v{q}}$ over all such $\tilde{\v{q}}$, i.e., 
	\begin{equation}
	\label{eq:opt_fidelity}
	\max_{\tilde{\v{p}}:~\tilde{\v{p}}\succ\v{q}}F(\v{p},\tilde{\v{p}})=\max_{\tilde{\v{q}}:~\v{p}\succ\tilde{\v{q}}}F(\v{q},\tilde{\v{q}}).
	\end{equation}
\end{lem}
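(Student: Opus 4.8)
The plan is to characterize explicitly, for a fixed target $\v{q}$, the maximal fidelity $\max\{F(\v{p},\tilde{\v{p}}):\tilde{\v{p}}\succ\v{q}\}$ and, dually, for a fixed source $\v{p}$, the maximal fidelity $\max\{F(\v{q},\tilde{\v{q}}):\v{p}\succ\tilde{\v{q}}\}$, and then show that both optima are governed by the same quantity, namely a quantity depending symmetrically on the partial sums $P_k:=\sum_{i=1}^k p_i^\downarrow$ and $Q_k:=\sum_{i=1}^k q_i^\downarrow$. By the remark at the end of Section~\ref{sec:approximate}, we may assume throughout that all vectors are arranged in non-increasing order, since majorisation is permutation-invariant and sorting only helps fidelity. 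First I would invoke the results of Ref.~\cite{vidal2000approximate} (restated in \cref{app:lemma_fidelity}), which give closed-form solutions to exactly these two optimization problems: maximizing fidelity with a fixed vector over all vectors that majorise (respectively, are majorised by) another fixed vector. The key structural fact is that both optimal $\tilde{\v{p}}$ and $\tilde{\v{q}}$ can be taken of a ``water-filling'' form whose partial sums are determined by truncating/saturating the constraint partial sums against the free vector's sorted partial sums.

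The main steps, in order, are: (1) Reduce to sorted vectors. (2) Write the first optimization as $\max F(\v{p},\tilde{\v{p}})$ subject to $\sum_{i\le k}\tilde p_i \ge Q_k$ for all $k$ and $\sum_i \tilde p_i = 1$; show the optimal $\tilde{\v{p}}$ saturates these constraints in a nested interval structure, giving $F_{\max}^{(1)}$ as an explicit function $\mathcal{G}(\{P_k\},\{Q_k\})$. (3) Write the second optimization as $\max F(\v{q},\tilde{\v{q}})$ subject to $\sum_{i\le k}\tilde q_i \le P_k$ for all $k$ and normalization; show the optimal $\tilde{\v{q}}$ likewise saturates these, yielding $F_{\max}^{(2)}$ as an explicit function of the same partial sums. (4) Verify algebraically that $\mathcal{G}$ is symmetric under swapping the roles of the two vectors, i.e. $F_{\max}^{(1)}=F_{\max}^{(2)}$. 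A cleaner alternative for step (4), which I would try first, is to avoid computing $\mathcal{G}$ explicitly: instead, given a near-optimal $\tilde{\v{p}}\succ\v{q}$, construct a companion $\tilde{\v{q}}$ with $\v{p}\succ\tilde{\v{q}}$ and $F(\v{q},\tilde{\v{q}})\ge F(\v{p},\tilde{\v{p}})$, and vice versa, by ``transporting'' the majorisation relation; this pairs up feasible points on the two sides with non-decreasing fidelity and gives both inequalities at once.

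For the transport construction: suppose $\tilde{\v{p}}\succ\v{q}$, so there is a bistochastic $\Lambda$ with $\Lambda\tilde{\v{p}}=\v{q}$ (Theorem~\ref{thm:major}). Apply the \emph{same} $\Lambda$ to $\v{p}$ and set $\tilde{\v{q}}:=\Lambda\v{p}$; then $\v{p}\succ\tilde{\v{q}}$ automatically, and since fidelity is non-decreasing under stochastic maps, $F(\v{q},\tilde{\v{q}})=F(\Lambda\tilde{\v{p}},\Lambda\v{p})\ge F(\tilde{\v{p}},\v{p})$. This immediately gives $\max_{\v{p}\succ\tilde{\v{q}}}F(\v{q},\tilde{\v{q}})\ge \max_{\tilde{\v{p}}\succ\v{q}}F(\v{p},\tilde{\v{p}})$. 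The reverse inequality follows by the symmetric argument: from $\tilde{\v{q}}$ with $\v{p}\succ\tilde{\v{q}}$, take $\Lambda'$ bistochastic with $\Lambda'\v{p}=\tilde{\v{q}}$, set $\tilde{\v{p}}:=$ a vector majorising $\v{q}$ obtained by running this in reverse — here is the subtlety, since $\Lambda'$ is not invertible, so one cannot simply ``undo'' it. The honest way around this is that the reverse direction genuinely needs the explicit-formula route (steps 2--3 above), or a more careful duality argument using the structure of the Lorenz curves; so I expect the main obstacle to be precisely this reverse inequality — establishing that every feasible $\tilde{\v{q}}$ on the ``majorised'' side can be matched by a feasible $\tilde{\v{p}}$ on the ``majorising'' side with at least as large fidelity. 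I would handle it by appealing to the closed forms of Ref.~\cite{vidal2000approximate} and checking the resulting expressions coincide, reducing the claim to a finite, elementary (if slightly tedious) identity between the two water-filling fidelity formulas.
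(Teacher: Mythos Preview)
Your easy direction is exactly the paper's: from $\tilde{\v{p}}\succ\v{q}$ take the bistochastic $\Lambda$ with $\Lambda\tilde{\v{p}}=\v{q}$, push $\v{p}$ through the same $\Lambda$, and invoke data processing for fidelity. You also correctly identify that the reverse transport fails because bistochastic maps are not invertible, and that this is where the real work lies.

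For the hard direction you propose computing both maxima explicitly from Ref.~\cite{vidal2000approximate} and checking the two water-filling formulas agree. This would work but is slightly roundabout: Vidal--Jonathan--Nielsen give the explicit optimiser $\tilde{\v{p}}^\star$ (with its block indices $l_j$ and ratios $r_j$) only for the \emph{majorising} problem, so you would first have to derive a closed form for the \emph{majorised} side before comparing. The paper sidesteps this second derivation. It uses only the known $\tilde{\v{p}}^\star$ construction and bounds $F(\v{q},\tilde{\v{q}})$ for \emph{arbitrary} feasible $\tilde{\v{q}}$ directly: apply Cauchy--Schwarz blockwise along the partition $\{l_j\}$ coming from $\tilde{\v{p}}^\star$, introduce slack variables $x_j:=E_{l_j}^{\tilde{\v{q}}}-E_{l_j}^{\v{p}}\ge 0$ (non-negative because $\v{p}\succ\tilde{\v{q}}$), and show the resulting upper bound $f(\v{x})=\sum_j\bigl(\Delta_{l_j}^{l_{j-1}}(\v{q})(\Delta_{l_j}^{l_{j-1}}(\v{p})+x_j-x_{j-1})\bigr)^{1/2}$ is maximised at $\v{x}=\v{0}$. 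The maximisation is handled by noting the Hessian is negative definite (Gershgorin) so the maximum lies on the boundary, and the boundary derivatives $\partial_{x_j}f|_{x_j=0}=(\sqrt{r_j}-\sqrt{r_{j+1}})/2<0$ drive you to the corner. At $\v{x}=\v{0}$ the bound collapses to $\sqrt{F(\v{p},\tilde{\v{p}}^\star)}$, finishing the inequality. Same destination as your route, but reached using only one of the two explicit optima; the $l_j$ partition from $\tilde{\v{p}}^\star$ does double duty.
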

\noindent The proof of the above lemma is based on the results first derived in Ref.~\cite{vidal2000approximate} and can be found in Appendix~\ref{app:lemma_fidelity}. Moreover, the proof includes an explicit construction of the state $\tilde{\v{p}}^\star$ maximising the left hand side of Eq.~\eqref{eq:opt_fidelity}, so that one can calculate the value of optimal achievable fidelities appearing in Lemma~\ref{lemma:fidelity}. 

Now we can prove the following crucial result concerning pre- and post-majorisation, i.e., approximate thermomajorisation for $\beta=0$.
\begin{lem}
	\label{lem:pre_post_equiv}
	Pre- and post-majorisation are equivalent, i.e., $\v{p} \prescript{}{\epsilon}{\succ}\v{q}$ if and only if $\v{p} \succ_\epsilon\v{q}$.
\end{lem}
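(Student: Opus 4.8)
The plan is to collapse both one-sided approximate relations onto a single optimal-fidelity quantity and then read off the equivalence from \cref{lemma:fidelity}. First I would unfold \cref{def:pre_post} at $\beta=0$: the relation $\v{p}\,\prescript{}{\epsilon}{\succ}\,\v{q}$ says exactly that there exists $\tilde{\v{p}}$ with $\tilde{\v{p}}\succ\v{q}$ and $F(\v{p},\tilde{\v{p}})\geq 1-\epsilon$. Since the set $\{\tilde{\v{p}}:\tilde{\v{p}}\succ\v{q}\}$ is a compact subset of the simplex and $F(\v{p},\cdot)$ is continuous — indeed an explicit maximiser $\tilde{\v{p}}^\star$ is produced in the proof of \cref{lemma:fidelity} — the optimum $F_{\mathrm{pre}}:=\max_{\tilde{\v{p}}:\,\tilde{\v{p}}\succ\v{q}}F(\v{p},\tilde{\v{p}})$ is attained, so $\v{p}\,\prescript{}{\epsilon}{\succ}\,\v{q}$ is equivalent to $F_{\mathrm{pre}}\geq 1-\epsilon$. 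By the same reasoning applied to the (also compact) set $\{\tilde{\v{q}}:\v{p}\succ\tilde{\v{q}}\}$, the relation $\v{p}\succ_\epsilon\v{q}$ is equivalent to $F_{\mathrm{post}}:=\max_{\tilde{\v{q}}:\,\v{p}\succ\tilde{\v{q}}}F(\v{q},\tilde{\v{q}})\geq 1-\epsilon$. Then I would simply invoke \cref{lemma:fidelity}, which states $F_{\mathrm{pre}}=F_{\mathrm{post}}$; hence the two relations hold for precisely the same $\epsilon$, which is the claim.

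Before doing that it is instructive to note that one of the two implications is elementary, which pins down where the content actually sits. Given $\v{p}\,\prescript{}{\epsilon}{\succ}\,\v{q}$ with witness $\tilde{\v{p}}\succ\v{q}$, \cref{thm:major} supplies a bistochastic $\Lambda^0$ with $\Lambda^0\tilde{\v{p}}=\v{q}$; setting $\tilde{\v{q}}:=\Lambda^0\v{p}$ gives $\v{p}\succ\tilde{\v{q}}$ (again \cref{thm:major}) and $\delta(\v{q},\tilde{\v{q}})=\delta(\Lambda^0\tilde{\v{p}},\Lambda^0\v{p})\leq\delta(\tilde{\v{p}},\v{p})\leq\epsilon$ by monotonicity of fidelity under stochastic maps, so $\v{p}\succ_\epsilon\v{q}$. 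The reverse implication does not come out this way: a bistochastic map realising $\v{p}\succ\tilde{\v{q}}$ has no stochastic inverse in general, so a smoothed $\tilde{\v{q}}$ cannot simply be pulled back to a smoothed $\tilde{\v{p}}$. This asymmetry is exactly what \cref{lemma:fidelity} — and the fidelity-optimal constructions of Ref.~\cite{vidal2000approximate} behind it — is needed for.

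Consequently the main obstacle is entirely packaged inside \cref{lemma:fidelity}, which I am taking as given; everything else is bookkeeping. Within the argument the two points that need a little care are the step from the existential phrasing of \cref{def:pre_post} to the extremal phrasing, which genuinely uses attainment of the maxima rather than just equality of suprema, and keeping the inequality oriented correctly, since smoothing is controlled by $\delta=1-F\leq\epsilon$, i.e.\ $F\geq 1-\epsilon$.
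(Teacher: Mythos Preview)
Your proposal is correct and follows essentially the same approach as the paper's proof: the easy direction (pre $\Rightarrow$ post) is handled by pushing forward through the bistochastic map furnished by \cref{thm:major} together with monotonicity of fidelity, and the nontrivial direction (post $\Rightarrow$ pre) is reduced to \cref{lemma:fidelity}. Your packaging---rewriting each approximate relation as a threshold condition on the corresponding optimal fidelity and then invoking the equality $F_{\mathrm{pre}}=F_{\mathrm{post}}$---is a slightly more symmetric phrasing of the paper's argument, which instead fixes an arbitrary witness $\tilde{\v{q}}$ and compares it to the explicit optimiser $\tilde{\v{p}}^\star$, but the content is identical.
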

\begin{proof}
	First, assume that $\bm p \prescript{}{\epsilon}{\succ}~\bm q$. This means that there exists $\bm{\tilde p}$ such that $\bm{\tilde p}\succ \bm q$ and $\delta(\bm p,\bm{\tilde p})\leq\epsilon$. By \cref{thm:major} this implies that there exists a bistochastic $\Lambda^0$ such that $\Lambda^0\bm{\tilde p}=\bm q$. Let $\bm{\tilde q}:=\Lambda^0\bm p$, so that $\v{p}\succ\tilde{\v{q}}$. Using the fact that fidelity is non-decreasing under stochastic maps, we then have
	\begin{align}
	\delta(\bm q,\bm{\tilde q})
	=\delta(\Lambda^0\bm{\tilde p},\Lambda^0\bm p)
	\leq\delta(\bm{\tilde p},\bm p)
	\leq \epsilon,
	\end{align}
	which means that $\bm p \prescript{}{\epsilon}{\succ}~\bm q\implies\bm p\succ_\epsilon\bm q$.
	
	Now, assume that $\v{p}\succ_\epsilon\v{q}$. This means that there exists $\tilde{\v{q}}$ such that $\v{p}\succ \tilde{\v{q}}$ and $\delta(\v{q},\tilde{\v{q}})\leq\epsilon$. Let
	\begin{equation}
	\tilde{\v{p}}^{\star}:=\argmax_{\tilde{\v{p}}:~\tilde{\v{p}}\succ\v{q}}F(\v{p},\tilde{\v{p}}).
	\end{equation}
	By definition $\tilde{\v{p}}^{\star}\succ\v{q}$ and, by Lemma~\ref{lemma:fidelity}, we have
	\begin{equation}
	F(\v{p},\tilde{\v{p}}^{\star})=\max_{\v{q}':~\v{p}\succ\v{q}'}F(\v{q},\v{q}')\geq F(\v{q},\tilde{\v{q}}),
	\end{equation}
	so that $\delta(\v{p},\tilde{\v{p}}^{\star})\leq\epsilon$. Thus $\bm p \prescript{}{\epsilon}{\succ}~\bm q\Longleftarrow\bm p\succ_\epsilon\bm q$.
\end{proof}

The next lemma links post-majorisation of embedded vectors with post-thermomajorisation for $\beta\neq 0$.
\begin{lem}
	\label{lem:post_thermo_equiv}
	Post-majorisation between embedded vectors is equivalent to post-thermomajorisation between the original vectors, i.e.,
	\begin{equation}
	\hat{\v{p}}\succ_\epsilon\hat{\v{q}} \quad \Longleftrightarrow\quad \v{p}\succ_\epsilon^\beta\v{q}
	\end{equation}
\end{lem}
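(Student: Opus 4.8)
The plan is to prove the two implications of the equivalence separately. The direction $\v{p}\succ^\beta_\epsilon\v{q}\Rightarrow\hat{\v{p}}\succ_\epsilon\hat{\v{q}}$ is essentially immediate, while the converse needs one extra manipulation to push an arbitrary smoothed distribution back into the image of the embedding map. For the forward direction, I would start from a witness $\tilde{\v{q}}$ with $\v{p}\succ^\beta\tilde{\v{q}}$ and $\delta(\v{q},\tilde{\v{q}})\le\epsilon$, and simply embed it: by the definition of thermomajorisation, $\v{p}\succ^\beta\tilde{\v{q}}$ is the same as $\hat{\v{p}}\succ\hat{\tilde{\v{q}}}$, and by the embedding-invariance of infidelity, $\delta(\hat{\v{q}},\hat{\tilde{\v{q}}})=\delta(\v{q},\tilde{\v{q}})\le\epsilon$. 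Hence $\hat{\tilde{\v{q}}}$ is a valid witness for $\hat{\v{p}}\succ_\epsilon\hat{\v{q}}$.

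For the converse, suppose $\hat{\v{p}}\succ_\epsilon\hat{\v{q}}$ with a $D$-dimensional witness $\tilde{\v{r}}$ satisfying $\hat{\v{p}}\succ\tilde{\v{r}}$ and $\delta(\hat{\v{q}},\tilde{\v{r}})\le\epsilon$. The point is that $\tilde{\v{r}}$ need not be an embedded distribution, so it is not directly of the form $\hat{\tilde{\v{q}}}$. I would fix this by acting with the block-flattening bistochastic map $P:=\Gamma^\beta(\Gamma^\beta)^{-1}$ and setting $\tilde{\v{q}}:=(\Gamma^\beta)^{-1}\tilde{\v{r}}$, which is a genuine $d$-dimensional probability distribution (its entries are the block-sums of $\tilde{\v{r}}$) and satisfies $\hat{\tilde{\v{q}}}=P\tilde{\v{r}}$. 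Then two checks remain. First, $\hat{\v{p}}\succ\tilde{\v{r}}$ furnishes, via \cref{thm:major}, a bistochastic $\Lambda$ with $\Lambda\hat{\v{p}}=\tilde{\v{r}}$, so $P\Lambda$ is bistochastic and $P\Lambda\hat{\v{p}}=P\tilde{\v{r}}=\hat{\tilde{\v{q}}}$, whence $\hat{\v{p}}\succ\hat{\tilde{\v{q}}}$, i.e.\ $\v{p}\succ^\beta\tilde{\v{q}}$. Second, since $\hat{\v{q}}$ is already block-uniform we have $P\hat{\v{q}}=\hat{\v{q}}$, so by the monotonicity of $\delta$ under stochastic maps $\delta(\v{q},\tilde{\v{q}})=\delta(\hat{\v{q}},\hat{\tilde{\v{q}}})=\delta(P\hat{\v{q}},P\tilde{\v{r}})\le\delta(\hat{\v{q}},\tilde{\v{r}})\le\epsilon$. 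Thus $\tilde{\v{q}}$ witnesses $\v{p}\succ^\beta_\epsilon\v{q}$.

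The only genuine subtlety — the step I would flag as the \emph{main obstacle} — is recognising that the smoothed vector produced on the embedded side need not come from embedding a $d$-dimensional state, and that projecting it back with $P$ does exactly the right thing twice over: it can only move $\tilde{\v{r}}$ further down the majorisation order (so $\hat{\v{p}}\succ\hat{\tilde{\v{q}}}$ is preserved) while fixing the reference point $\hat{\v{q}}$ (so the infidelity bound survives). Everything else is bookkeeping with three facts already available in the preliminaries: the equivalences $\v{p}\succ^\beta\v{q}\Leftrightarrow\hat{\v{p}}\succ\hat{\v{q}}$ and $\delta(\v{p},\v{q})=\delta(\hat{\v{p}},\hat{\v{q}})$, the characterisation of majorisation by bistochastic maps, and the monotonicity of infidelity under stochastic maps.
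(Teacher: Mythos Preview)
Your proof is correct and essentially identical to the paper's: both directions use the same key ingredients (the definition $\v{p}\succ^\beta\v{q}\Leftrightarrow\hat{\v{p}}\succ\hat{\v{q}}$, the fidelity-invariance of embedding, and the bistochasticity of $\Gamma^\beta(\Gamma^\beta)^{-1}$ together with monotonicity of $\delta$) in the same way. The only cosmetic difference is that the paper phrases the harder direction by constructing the Gibbs-preserving map $\Gamma^{-1}\Lambda^0\Gamma$ on the original space rather than the bistochastic map $P\Lambda$ on the embedded space, but these are two names for the same move.
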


\begin{proof}
	First, assume $\hat{\v{p}}\succ_\epsilon \hat{\v{q}}$. This means that there exists a bistochastic matrix $\Lambda^0$ such that $\Lambda^0\hat{\v{p}}=\tilde{\hat{\v{q}}}$ with  \mbox{$\delta(\hat{\v{q}},\tilde{\hat{\v{q}}})\leq\epsilon$}. This, in turn, means that
	\begin{equation}
	\Lambda^\beta\v{p}=\Gamma^{-1}\tilde{\hat{\v{q}}},
	\end{equation}
	with $\Lambda^\beta=\Gamma^{-1}\Lambda^0\Gamma$ being a Gibbs-preserving matrix and $\Gamma$ a shorthand notation of the embedding map $\Gamma^\beta$. We thus conclude that \mbox{$\v{p}\succ^\beta \Gamma^{-1}\tilde{\hat{\v{q}}}$}. It remains to show that \mbox{$\delta(\v{q},\Gamma^{-1}\tilde{\hat{\v{q}}})\leq\epsilon$}. To achieve this we use the facts that embedding is fidelity-preserving, $\Gamma\Gamma^{-1}$ is bistochastic and fidelity is non-decreasing under stochastic maps, so that
	\begin{align}
		F(\v{q},\Gamma^{-1}\tilde{\hat{\v{q}}}) &= F(\Gamma^{-1}\hat{\v{q}},\Gamma^{-1}\tilde{\hat{\v{q}}}) \notag \\
		&= F(\Gamma\Gamma^{-1}\hat{\v{q}},\Gamma\Gamma^{-1}\tilde{\hat{\v{q}}}) \geq F(\hat{\v{q}},\tilde{\hat{\v{q}}}).
	\end{align}
	Therefore \mbox{$\delta(\v{q},\Gamma^{-1}\tilde{\hat{\v{q}}})\leq\epsilon$}, which results in $\v{p}\succ^\beta_\epsilon \v{q}$.
	
	Now, assume that $\v{p}\succ^\beta_\epsilon \v{q}$. This means that there exists a Gibbs-preserving matrix $\Lambda^\beta$ such that $\Lambda^\beta\v{p}=\tilde{\v{q}}$ with $\delta(\v{q},\tilde{\v{q}})\leq\epsilon$. Through embedding this is equivalent to the existence of a bistochastic $\hat{\Lambda}^\beta$ such that $\hat{\Lambda}^\beta\hat{\v{p}}=\hat{\tilde{\v{q}}}$, resulting in $\hat{\v{p}}\succ\hat{\tilde{\v{q}}}$. It remains to show that $\delta(\hat{\v{q}},\hat{\tilde{\v{q}}})\leq\epsilon$. This, however, follows directly from the fact that the embedding map is fidelity-preserving, as  \mbox{$\delta(\hat{\v{q}},\hat{\tilde{\v{q}}})=\delta(\v{q},\tilde{\v{q}})\leq\epsilon$}. We thus conclude that \mbox{$\hat{\v{p}}\succ_\epsilon \hat{\v{q}}$}.
\end{proof}

The statement of Lemma~\ref{lem:post_thermo_equiv} can be rephrased as
\begin{align}
&\exists \tilde{\hat{\v{q}}}:~\hat{\v{p}}\succ\tilde{\hat{\v{q}}},~\delta(\hat{\v{q}},\tilde{\hat{\v{q}}})\leq\epsilon\notag\\
&\qquad\qquad \!\!\Longleftrightarrow\quad\!\!
\exists \tilde{\v{q}}:~\hat{\v{p}}\succ\hat{\tilde{\v{q}}},~\delta(\hat{\v{q}},\hat{\tilde{\v{q}}})\leq\epsilon,
\end{align}
so that it can be interpreted as the fact that embedding (denoted by hat) and smoothing (denoted by tilde) commute when applied to the target state.

\begin{figure}[t!]
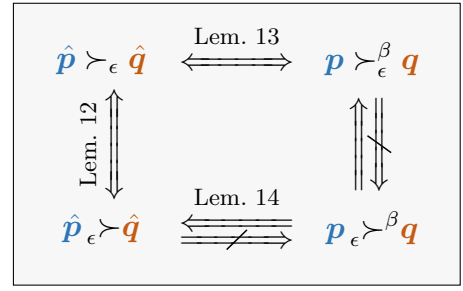

	\centering
	\includeTikz{approximate}
	\caption{\label{fig:approximate_equiv} \emph{Relations between approximate pre- and post-thermomajorisation relations.} } 
\end{figure}

Finally, we present a result that links pre-majorisation of embedded vectors with pre-thermomajorisation for $\beta\neq 0$.
\begin{lem}
	\label{lem:pre_thermo_equiv}
	Pre-majorisation between embedded vectors is implied by the pre-thermomajorisation between the original vectors, but it does not imply it, i.e.,
	\begin{subequations}
		\begin{align}
		\label{eq:lemma6_1}
		\hat{\v{p}}\prescript{}{\epsilon}{\succ}\hat{\v{q}} \quad \Longleftarrow\quad \v{p}\prescript{}{\epsilon}{\succ^\beta}\v{q},\\
		\hat{\v{p}}\prescript{}{\epsilon}{\succ}\hat{\v{q}} \quad \centernot\Longrightarrow\quad \v{p}\prescript{}{\epsilon}{\succ^\beta}\v{q}.\label{eq:lemma6_2}
		\end{align}
	\end{subequations}
\end{lem}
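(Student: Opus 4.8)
The implication \eqref{eq:lemma6_1} should be the easy direction, following the same template as the forward implication in Lemma~\ref{lem:post_thermo_equiv}, except now the smoothing is applied to the majoris\emph{ing} distribution rather than the majoris\emph{ed} one. Concretely, I would start from $\v{p}\prescript{}{\epsilon}{\succ^\beta}\v{q}$, which by Definition~\ref{def:pre_post} gives a $\tilde{\v{p}}$ with $\tilde{\v{p}}\succ^\beta\v{q}$ and $\delta(\v{p},\tilde{\v{p}})\leq\epsilon$. Embedding both, $\hat{\tilde{\v{p}}}\succ\hat{\v{q}}$ by definition of thermomajorisation, and $\delta(\hat{\v{p}},\hat{\tilde{\v{p}}})=\delta(\v{p},\tilde{\v{p}})\leq\epsilon$ since embedding is fidelity-preserving. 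Hence $\hat{\tilde{\v{p}}}$ witnesses $\hat{\v{p}}\prescript{}{\epsilon}{\succ}\hat{\v{q}}$. This is a two-line argument and requires no new machinery.

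The substantive part is the non-implication \eqref{eq:lemma6_2}, which needs an explicit counterexample. The obstruction is structural: a smoothed distribution $\tilde{\hat{\v{p}}}$ in the $D$-dimensional embedded space that majorises $\hat{\v{q}}$ need not lie in the image of the embedding map $\Gamma^\beta$ — i.e.\ it need not be \emph{block-flat} (constant on each block of $D_i$ entries). Only block-flat vectors correspond to genuine $d$-dimensional distributions, and $\Gamma^{-1}$ followed by re-embedding will in general move $\tilde{\hat{\v{p}}}$ strictly further from $\hat{\v{p}}$, possibly outside the $\epsilon$-ball. So the plan is to exhibit a small example — a two-level system ($d=2$) with a nontrivial Gibbs state, say $\v\gamma=[2/3,1/3]$ so $D=3$ with block sizes $D_1=2,D_2=1$ — together with distributions $\v{p},\v{q}$ and an $\epsilon$ such that: (i) there \emph{is} some $\tilde{\hat{\v{p}}}\succ\hat{\v{q}}$ with $\delta(\hat{\v{p}},\tilde{\hat{\v{p}}})\leq\epsilon$ (so $\hat{\v{p}}\prescript{}{\epsilon}{\succ}\hat{\v{q}}$ holds), but (ii) every $d$-dimensional $\tilde{\v{p}}$ with $\tilde{\v{p}}\succ^\beta\v{q}$ has $\delta(\v{p},\tilde{\v{p}})>\epsilon$ (so $\v{p}\prescript{}{\epsilon}{\succ^\beta}\v{q}$ fails). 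For (ii) one can compute the optimal achievable fidelity $\max_{\tilde{\v{p}}:\,\tilde{\v{p}}\succ^\beta\v{q}}F(\v{p},\tilde{\v{p}})$ explicitly — equivalently, via Lemma~\ref{lem:post_thermo_equiv} and Lemma~\ref{lemma:fidelity}, as $\max_{\tilde{\v{q}}:\,\v{p}\succ^\beta\tilde{\v{q}}}F(\v{q},\tilde{\v{q}})$ — using the explicit construction of $\tilde{\v{p}}^\star$ referenced after Lemma~\ref{lemma:fidelity}. Then pick $\epsilon$ strictly between that value and the (strictly larger) optimal fidelity attainable by a non-block-flat $\tilde{\hat{\v{p}}}$ in the embedded picture.

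The main obstacle is finding the counterexample in which the gap between these two optimal fidelities is genuinely nonzero and then verifying both inequalities cleanly. The inequality direction that requires care is (i): one must confirm that the embedded $\epsilon$-ball around $\hat{\v{p}}$ reaches into the majorisation cone above $\hat{\v{q}}$, which typically means choosing $\hat{\v{q}}$ so that $\hat{\v{p}}$ itself \emph{already fails} to majorise $\hat{\v{q}}$ (otherwise $\epsilon=0$ works on both sides and there is nothing to separate) but only barely. A natural choice is to take $\v{q}$ slightly ``flatter'' than what $\v{p}$ thermomajorises, so that in the embedded space a small perturbation of $\hat{\v{p}}$ that need not respect block structure can compensate, while no $d$-dimensional perturbation of comparable size can. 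I expect the cleanest route is to keep dimensions minimal, compute the two optimal fidelities in closed form (they are determined by the Lorenz-curve / thermomajorisation-curve constructions), observe strict inequality, and then simply name an $\epsilon$ in between; the bulk of the write-up is then routine arithmetic verifying the two claims for that explicit $\epsilon$. Finally, I would note in passing that this asymmetry — pre- and post-thermomajorisation being \emph{equivalent} at $\beta=0$ (Lemma~\ref{lem:pre_post_equiv}) but the embedded-picture pre-relation being strictly weaker than the true pre-relation for $\beta\neq0$ — is precisely why the main theorem's proof routes the argument through post-thermomajorisation, and this motivates the summary diagram in Fig.~\ref{fig:approximate_equiv}.
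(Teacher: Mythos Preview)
Your proposal is correct and follows essentially the same approach as the paper: the implication \eqref{eq:lemma6_1} is handled exactly as you describe, and for \eqref{eq:lemma6_2} the paper likewise exhibits a $d=2$ counterexample (with $\v{\gamma}=[3/4,1/4]$, $\v{p}=[1,0]$, $\v{q}=[1/2,1/2]$) and checks the two optimal infidelities directly. The only minor difference is that the paper's verification is more elementary than your sketch suggests --- rather than invoking Lemma~\ref{lemma:fidelity} and the $\tilde{\v{p}}^\star$ construction, it simply writes down an explicit non-block-flat $\tilde{\hat{\v{p}}}$ and, on the $d$-dimensional side, parametrises $\tilde{\v{p}}=[\tilde p,1-\tilde p]$ and reads off the threshold $\tilde p$ for $\tilde{\v{p}}\succ^\beta\v{q}$ by inspection.
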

\begin{proof}
	We first prove Eq.~\eqref{eq:lemma6_1}. Assuming \mbox{$\v{p}\prescript{}{\epsilon}{\succ^\beta}\v{q}$} (and recalling that embedding is fidelity preserving) means that there exists $\hat{\tilde{\v{p}}}$ majorising $\hat{\v{q}}$ and satisfying \mbox{$\delta(\hat{\v{p}},\hat{\tilde{\v{p}}})\leq\epsilon$}. Then, by simply choosing \mbox{$\tilde{\hat{\v{p}}}=\hat{\tilde{\v{p}}}$}, we get $\tilde{\hat{\v{p}}}\succ\hat{\v{q}}$ and \mbox{$\delta(\hat{\v{p}},\tilde{\hat{\v{p}}})\leq\epsilon$}, so that \mbox{$\hat{\v{p}}\prescript{}{\epsilon}{\succ}\hat{\v{q}}$}.
	
	Now, in order to prove Eq.~\eqref{eq:lemma6_2} we will construct a particular counterexample. Consider the distributions
	\begin{align}
		\v{p}=[1,0],\quad\v{q}=[1/2,1/2],\quad\v{\gamma}=[3/4,1/4].
	\end{align}
	Their embedded versions are then given by
	\begin{align}
		\hat{\v{p}}=[1/3,1/3,1/3,0],\quad \hat{\v{q}}=[1/6,1/6,1/6,1/2].
	\end{align} 
	We note that \mbox{$\tilde{\hat{\v{p}}}=[1/2,1/4,1/4,0]$} majorises $\hat{\v{q}}$ and that \mbox{$\delta(\hat{\v{p}},\tilde{\hat{\v{p}}})=(3-2\sqrt{2})/6=:\epsilon_0$}. Therefore, \mbox{$\hat{\v{p}}\prescript{}{\epsilon_0}{\succ}\hat{\v{q}}$}. On the other hand, for a general two-dimensional distribution, \mbox{$\tilde{\v{p}}=[\tilde{p},1-\tilde{p}]$}, we have \mbox{$\delta(\v{p},\tilde{\v{p}})=1-\tilde{p}$}, and the smallest $\tilde{p}$ for which $\tilde{\v{p}}\succ^\beta\v{q}$ is equal to $1/2$. This means that the optimal $\epsilon_1$ for which \mbox{$\tilde{\v{p}}\prescript{}{\epsilon_1}{\succ^\beta}\v{q}$} holds is \mbox{$\epsilon_1=1/2>\epsilon_0$}. We thus conclude that \mbox{$\hat{\v{p}}\prescript{}{\epsilon}{\succ}\hat{\v{q}}$} does not imply \mbox{$\v{p}\prescript{}{\epsilon}{\succ^\beta}\v{q}$}.
\end{proof}

The results of this section are collectively presented in Fig.~\ref{fig:approximate_equiv}. We also would like to make a couple of remarks. First, using the equivalence between \mbox{$\hat{\v{p}}\succ_\epsilon\hat{\v{q}}$} and \mbox{$\v{p}\succ^\beta_\epsilon\v{q}$} one can, in principle, calculate the optimal fidelity (equivalently: minimal distance $\delta$) between the final and target state under thermodynamic interconversion. More precisely, given initial distribution $\v{p}$ and final $\v{q}$, the optimal fidelity $F(\v{q},\tilde{\v{q}})$ among $\tilde{\v{q}}$ that are thermomajorised by $\v{p}$ is equal to the optimal fidelity $F(\hat{\v{q}},\tilde{\hat{\v{q}}})$ among $\tilde{\hat{\v{q}}}$ that are majorised by $\hat{\v{p}}$. This in turn, via Lemma~\ref{lemma:fidelity}, is equal to the optimal fidelity $F(\hat{\v{p}},\tilde{\hat{\v{p}}})$ among $\tilde{\hat{\v{p}}}$ that majorise $\hat{\v{q}}$. But such an optimal state $\tilde{\hat{\v{p}}}^\star$ is given by the explicit construction presented in Appendix~\ref{app:lemma_fidelity}, and thus we can directly calculate
\begin{equation}
\max_{\tilde{\v{q}}:~\v{p}\succ^\beta\tilde{\v{q}}} F(\v{q},\tilde{\v{q}})=F(\hat{\v{p}},\hat{\tilde{\v{p}}}^\star).
\end{equation}
In \cref{app:numerics} we discuss applying these very concepts to numerically compare our approximations of the optimal conversion rate to the true optimum for small system sizes. We give examples of such numerics in \cref{fig:numerics2,fig:numerics1}.

\begin{figure}[t!]
	\centering
	\setlength\figheight{7cm}
	\setlength\figwidth{9cm}
	\def\figscale{.9}
	\hspace{-.25cm}\includeTikzz{Numerics1}
	\caption{
		Comparison between the second-order approximation $R_2$ and exact thermal interconversion rates $R^*$ for small system sizes, converting from \mbox{$\rho=\frac{7}{10}\ketbra{0}{0}+\frac{3}{10}\ketbra{1}{1}$} to \mbox{$\sigma=\frac{8}{10}\ketbra{0}{0}+\frac{2}{10}\ketbra{1}{1}$}, with Hamiltonian \mbox{$H=\ketbra{1}{1}$} and access to a thermal bath at temperature \mbox{$1/\beta=3$}, as in \cref{fig:numerics2}. The circles indicate number of states produced (c.f.\ \cref{app:numerics}), and the lines those given by the second-order expansion from Eq.~(\hyperref[eq:intro/general]{1}). As the exact number of states produced is always an integer, we have also indicated the rounding of the second-order approximation both up and down with error bars. The colours indicate the infidelity tolerance, with $\epsilon=5\times 10^{-2}$ for red and $\epsilon=10^{-5}$ for blue. The dotted line indicates the number of produced states predicted by the asymptotic interconversion rate $R_1$.
	}
	\label{fig:numerics1}
\end{figure}

Second, we want to point out that Lemmas~\ref{lem:pre_post_equiv}~through~\ref{lem:pre_thermo_equiv} still hold if one applies them to the concept of approximate thermomajorisation based on total variation distance, i.e., if one replaces $\delta(\v{p},\v{q})$ with \mbox{$\frac{1}{2}||\v{p}-\v{q}||$} in Definition~\ref{def:pre_post}. The required modifications of the proofs are rather straightforward (with the exception of Lemma~\ref{lem:pre_post_equiv} which requires some fiddling), and we discuss them in Appendix~\ref{app:total_variation}.

\section{Proofs of the main result}
\label{sec:proof}

We will now present a proof of our main result, \cref{thm:sec-ord thermal}. We will do this by first showing a reduction to special case of bistochastic interconversion, which corresponds to infinite temperature. We recall that as we are considering energy-incoherent initial and target states, $\rho$ and $\sigma$, we only need to consider their eigenvalues, denoted by $\bm p$ and $\bm q$, with the embedded versions of these given by $\bmhat p$ and $\bmhat q$. Also  note that the embedded thermal state $\bmhat \gamma$ simply corresponds to the uniform state $\bm \eta$. We can thus use the equivalence between approximate post-thermomajorisation and embedded majorisation, Lemma~\ref{lem:post_thermo_equiv}, to obtain:
\begin{subequations}
	\begin{align}
	R^*_\beta(n,\epsilon;\bm p,\bm q)&=R^*_0(n,\epsilon;\bmhat p,\bmhat q),\\
	\epsilon^*_\beta(n,R;\bm p,\bm q)&=\epsilon^*_0(n,R;\bmhat p,\bmhat q).
	\end{align} 
\end{subequations}
Conveniently, the second-order expansion only depends on the states through the relative entropy and relative entropy variance, both of which are invariant under embedding, as noted earlier in Eq.~\eqref{eq:embedD}. We can thus solve the problem for embedded distributions and $\beta=0$, and in the final result exchange all $D(\bmhat a||\v{\eta})$ and $V(\bmhat a||\v{\eta})$ with $D(\bm a||\v{\gamma})$ and $V(\bm a||\v{\gamma})$, respectively. For the remainder of this section we will henceforth drop the embedding hats and define $D(\bm a):=D(\bm a\|\bm \eta)$. We also note that \mbox{$V(\bm a\|\bm \eta)=V(\bm a)$}, where 
\begin{align}
&V(\bm a)=\langle (-\log a_i-H(\v{a}))^2\rangle_{\v{a}}
\end{align}
is the entropy variance, and that $V(\bm a)=0$ if and only if $\bm a$ is flat. Specifically, the irreversibility parameter given in Eq.~\eqref{eq:nu} can be expressed as
\begin{align}
	\nu=\frac{V(\bm p)/D(\bm p)}{V(\bm q)/D(\bm q)}.\label{eq:nu_again}
\end{align}

We will split the infinite-temperature proof into four parts, based on whether the relative entropy variances of the initial and target states are non-zero.
\[	
\begin{array}{|c|cc|}
\hline
~ & V(\bm q)=0 & V(\bm q)>0 \\ \hline 
V(\bm p)=0 & \text{Flat-to-flat} & \text{Formation} \\ 
V(\bm p)>0 & \text{Distillation} & \text{Interconversion}\\ \hline
\end{array} 
\]
We start with the case where both initial and target states have zero relative entropy variance. We refer to this as flat-to-flat interconversion, since both the initial and target state are flat. Recalling that states with embedded distributions being flat are proportional to the thermal state on their support, we note that this case contains the conversion between sharp energy states as a special case. We will then consider the cases of distillation and formation, in which either the target or initial states are flat respectively. These are so-named because they contain both the distillation of, and formation from, sharp energy states. Finally we will consider the general interconversion problem, in which neither state is flat. We refer to the non-flat distribution case as general, because it in fact implies the three other results by using the limiting behaviours of the Rayleigh-normal distributions given in Eq.~\eqref{eq:rayleigh_limit}.

\subsection{Central limit theorem}
\label{subsec:clt}

Before we present our proofs, we first formulate the main mathematical tool needed for such a small deviation analysis: a central limit theorem. Specifically, we want to give tail bounds on i.i.d.\ product distributions. Considering the standard central limit theorem, one can derive the following tail bound.
\begin{lem}
	\label{lem:clt-magnitude}
	For any distribution $\bm a$ such that $V(\bm a)>0$,
	\begin{align}
	\lim\limits_{n\to\infty}\sum_{i}\left\lbrace 
	\left(\bm a^{\otimes n}\right)_i		\middle|
	\left(\bm a^{\otimes n}\right)_i\geq 1/k_n(x)		
	\right\rbrace=\Phi\left(x\right),
	\end{align}
	where $k_n(x):=\left\lfloor\exp\left(H(\bm a^{\otimes n})+x\sqrt{V(\bm a^{\otimes n})}\right)\right\rfloor$.
\end{lem}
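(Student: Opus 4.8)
The plan is to reduce the claim to the standard (Lindeberg–Lévy) central limit theorem applied to the sequence of i.i.d.\ random variables $\{-\log a_{I_j}\}_{j=1}^n$, where each $I_j$ is drawn from $\bm a$. First I would note that the entries of $\bm a^{\otimes n}$ are precisely the numbers $\prod_{j=1}^n a_{i_j}$ indexed by strings $(i_1,\dots,i_n)$, and that picking such a string with probability equal to that entry is the same as drawing $n$ i.i.d.\ samples from $\bm a$. Under this identification, the log-likelihood $-\log\left(\bm a^{\otimes n}\right)_{(i_1,\dots,i_n)}=\sum_{j=1}^n(-\log a_{i_j})$ is a sum of i.i.d.\ random variables, each with mean $H(\bm a)$ and variance $V(\bm a)$ (this is exactly the ``self-information'' interpretation already set up for $D(\bm a\|\bm\eta)$ and $V(\bm a\|\bm\eta)$). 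Hence $H(\bm a^{\otimes n})=nH(\bm a)$ and $V(\bm a^{\otimes n})=nV(\bm a)$, and since $V(\bm a)>0$ the variance is strictly positive for all $n$, so the normalisation is well defined.

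Next I would rewrite the event defining the sum. The condition $\left(\bm a^{\otimes n}\right)_i\ge 1/k_n(x)$ is, after taking $-\log$ and using the definition of $k_n(x)$, essentially the event $\sum_{j}(-\log a_{i_j})\le H(\bm a^{\otimes n})+x\sqrt{V(\bm a^{\otimes n})}$, i.e.\ that the standardised sum $\big(\sum_j(-\log a_{i_j})-nH(\bm a)\big)/\sqrt{nV(\bm a)}$ is at most $x$. Therefore the quantity in the lemma is exactly $\Pr\!\left[\frac{\sum_j(-\log a_{i_j})-nH(\bm a)}{\sqrt{nV(\bm a)}}\le x\right]$, which by the central limit theorem converges to $\Phi(x)$ as $n\to\infty$. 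Since $\Phi$ is continuous, the convergence holds at every $x$, giving the claimed pointwise limit.

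The only genuinely delicate point — and what I expect to be the main (minor) obstacle — is the floor function $\lfloor\cdot\rfloor$ in the definition of $k_n(x)$: one must check that replacing $\exp(H(\bm a^{\otimes n})+x\sqrt{V(\bm a^{\otimes n})})$ by its integer part only perturbs the threshold on the log scale by an additive amount that is $o(\sqrt{n})$ — indeed by $O(1)$, since $\log\lfloor t\rfloor=\log t+o(1)$ for $t\to\infty$ — and hence does not affect the limit, again by continuity of $\Phi$. A related subtlety is that $-\log a_{i_j}$ is only defined on the support of $\bm a$; entries outside the support contribute $0$ to $\bm a^{\otimes n}$ and can be discarded, so one works throughout with the support of $\bm a$, on which $-\log a_i$ is a bounded (hence square-integrable) random variable, so the hypotheses of the CLT are trivially met. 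With these two bookkeeping remarks in place the argument is complete.
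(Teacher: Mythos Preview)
Your proposal is correct and follows essentially the same route as the paper: introduce the i.i.d.\ self-information variables $-\log a_{I_j}$, rewrite the sum as a probability that their standardised partial sum is at most $x$, and apply the classical CLT. If anything you are slightly more careful, since the paper silently identifies $\log k_n(x)$ with $nH(\bm a)+x\sqrt{nV(\bm a)}$ without mentioning the floor, whereas you correctly observe that the floor only shifts the log-threshold by $o(1)$ and hence does not affect the limit by continuity of $\Phi$.
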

\noindent For completeness we provide the proof of the above known result in Appendix~\ref{app:lemma_CLT}. We will also rely on an alternate form of central limit theorem. 
\begin{lem}[Lemma~12 of Ref.~\cite{kumagai2017second}]
	\label{lem:clt-index}For any distribution $\bm a$ such that $V(\bm a)>0$,
	\begin{align}
	\lim_{n\to\infty}\sum_{i=1}^{k_n(x)}\left(\bm a^{\otimes n}\right)_i^\downarrow=\Phi(x),
	\end{align}	where $k_n(x):=\left\lfloor\exp\left(H(\bm a^{\otimes n})+x\sqrt{V(\bm a^{\otimes n})}\right)\right\rfloor$.
\end{lem}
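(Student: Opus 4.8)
The plan is to deduce \cref{lem:clt-index} from its already-established ``magnitude'' counterpart \cref{lem:clt-magnitude}, via a threshold-comparison argument: ranking the entries of $\bm a^{\otimes n}$ by size and thresholding their size pick out almost the same collection, differing only by a vanishing amount of probability mass.

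First I would fix notation. Abbreviate $\bm p := \bm a^{\otimes n}$ and, for a threshold $t>0$, set $M(t):=|\{i:p_i\ge t\}|$, $F(t):=\sum_{i:p_i\ge t}p_i$, and $G_N:=\sum_{i=1}^N(\bm p)_i^\downarrow$ for the sum of the $N$ largest entries. Three elementary facts do the work: (i) since $\bm p$ has total mass at most one, $M(t)\le 1/t$; (ii) the $M(t)$ largest entries of $\bm p$ are exactly those that are $\ge t$ (the count $M(t)$ already absorbs any ties at the value $t$), so $G_{M(t)}=F(t)$; and (iii) $N\mapsto G_N$ is non-decreasing. In this language \cref{lem:clt-magnitude} reads $F(1/k_n(x))\to\Phi(x)$ and the target is $G_{k_n(x)}\to\Phi(x)$. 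I would also record that, since $k_n(y)=\lfloor\exp(nH(\bm a)+y\sqrt{nV(\bm a)})\rfloor$ and $V(\bm a)>0$, for every $\delta>0$ the ratio $k_n(x)/k_n(x+\delta)$ tends to $0$ as $n\to\infty$ (it equals $\exp(-\delta\sqrt{nV(\bm a)})$ up to the floor corrections).

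For the lower bound I would note that (i) gives $M(1/k_n(x))\le k_n(x)$, so (iii) and (ii) yield $G_{k_n(x)}\ge G_{M(1/k_n(x))}=F(1/k_n(x))\to\Phi(x)$, hence $\liminf_n G_{k_n(x)}\ge\Phi(x)$. For the upper bound I would fix $\delta>0$ and split the $k_n(x)$ largest entries of $\bm p$ into those $\ge 1/k_n(x+\delta)$ and those $<1/k_n(x+\delta)$: the first group contributes at most $F(1/k_n(x+\delta))$, while the second has at most $k_n(x)$ members, each below $1/k_n(x+\delta)$, hence contributes at most $k_n(x)/k_n(x+\delta)$. This gives $G_{k_n(x)}\le F(1/k_n(x+\delta))+k_n(x)/k_n(x+\delta)$; sending $n\to\infty$ yields $\limsup_n G_{k_n(x)}\le\Phi(x+\delta)$, and then $\delta\to 0^+$ together with continuity of $\Phi$ gives $\limsup_n G_{k_n(x)}\le\Phi(x)$. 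Combining the two bounds proves the statement.

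The step needing the most care, and really the only subtlety, is controlling the entries that sit just below the threshold in the upper bound: thresholding directly at $1/k_n(x)$ would leave a residual block of up to $k_n(x)-M(1/k_n(x))$ entries whose combined mass need not vanish, which is precisely why the argument thresholds instead at the slightly lower level $1/k_n(x+\delta)$ and only afterwards lets $\delta\to 0$. Beyond that, the ingredients — the counting bound $M(t)\le 1/t$, the identity $G_{M(t)}=F(t)$, and the asymptotics of $k_n$ forced by $V(\bm a)>0$ — are routine.
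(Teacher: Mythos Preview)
Your argument is correct. The paper itself does not prove this lemma: it simply imports it as Lemma~12 of Ref.~\cite{kumagai2017second}, whereas the companion statement \cref{lem:clt-magnitude} is the one proved in-house (Appendix~\ref{app:lemma_CLT}). So there is no ``paper's proof'' to compare against directly.

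That said, your route is a genuine and tidy alternative to citing the external source: you reduce the index-based statement to the magnitude-based \cref{lem:clt-magnitude} via the sandwich $F(1/k_n(x))\le G_{k_n(x)}\le F(1/k_n(x+\delta))+k_n(x)/k_n(x+\delta)$. The lower bound uses only the counting inequality $M(t)\le 1/t$, and the upper bound exploits the exponential separation $k_n(x)/k_n(x+\delta)\to 0$ forced by $V(\bm a)>0$. This is arguably preferable for the present paper, since it makes \cref{lem:clt-index} a corollary of \cref{lem:clt-magnitude} rather than an independent import, keeping the analysis self-contained. The original proof in Ref.~\cite{kumagai2017second} proceeds somewhat differently (working more directly with type classes and the method of types), so your reduction is not merely a repackaging of that argument.
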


We now want to convert the above result into the specific form of a bound we will use. For some rate $R(n)$, we define the \emph{total} initial and target states as
\begin{subequations}
\begin{align}
\bm P^n &:= \bm p^{\otimes n}\otimes\bm \eta ^{\otimes nR(n)},\\
\bm Q^n &:= \bm q^{\otimes nR(n)}\otimes \bm \eta^{\otimes n}.
\end{align}
We note that generally $R$ depends on $n$, but will henceforth omit this explicit dependence.
\end{subequations}
We also introduce a quantity analogous to $k_n(x)$,
\begin{align}
K_n(x):=\left\lfloor \exp\left(H(\bm Q^n)+x\sqrt{V(\bm Q^n)}\right) \right\rfloor,
\end{align}
that will be crucial in all our proofs. The central limit theorem for these distributions is given by the following result.
\begin{lem}[Central limit theorem for $\bm P^n$ and $\bm Q^n$]
	\label{lem:clt}
	If $V(\bm q)>0$ and $R$ is bounded away from zero, then $\bm Q^n$ has the tail bound
	\begin{align}
	\lim\limits_{n\to\infty}\sum_{i=1}^{K_n(x)} Q^{n\downarrow}_i=\Phi(x).
	\end{align}
	Moreover, if we consider a rate of the form
	\begin{align}
	{\label{eq:formation_rate}
		R_\mu(n)=\frac{1}{D(\bm q)}\left[ D(\bm p)+\sqrt{\frac{D(\bm p)}{D(\bm q)}\frac{V(\bm q)}{n}}\,\mu \right],
	}\end{align}
	for some $\mu\in\mathbb{R}$, then $\bm P^n$ also has a corresponding tail bound
	\begin{align}
	\lim\limits_{n\to\infty}\sum_{i=1}^{K_n(x)} P^{n\downarrow}_i= \Phi_{\mu,\nu}(x),
	\end{align}
	with $\nu$ given by Eq.~\eqref{eq:nu_again}.
\end{lem}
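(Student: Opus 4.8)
The plan is to reduce both tail bounds to the standard i.i.d.\ central limit statement of \cref{lem:clt-index} (equivalently \cref{lem:clt-magnitude}) by exploiting the fact that tensoring with a uniform distribution merely replicates entries. Write $d$ for the number of outcomes of $\bm\eta$ and recall the identity $H(\bm a)=\log d-D(\bm a)$, valid for every distribution on $d$ outcomes, together with $H(\bm a^{\otimes m})=mH(\bm a)$, $V(\bm a^{\otimes m})=mV(\bm a)$ and $V(\bm\eta)=0$. The key combinatorial observation is that after sorting into non-increasing order the distribution $\bm r\otimes\bm\eta^{\otimes k}$ consists of blocks of $d^{k}$ equal values $r^\downarrow_{1}/d^{k},\,r^\downarrow_{2}/d^{k},\dots$; hence $\sum_{i=1}^{\ell d^{k}}(\bm r\otimes\bm\eta^{\otimes k})^\downarrow_{i}=\sum_{m=1}^{\ell}r^\downarrow_{m}$ for every integer $\ell$, and truncating in the middle of a block changes this by at most one extra entry, bounded by $r^\downarrow_{1}=\max_j r_j$. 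In our applications $\bm r$ will be an $nR$- or $n$-fold product of a non-flat distribution, so this boundary term is at most $(\max_j q_j)^{nR}$ resp.\ $(\max_j p_j)^{n}$, which vanishes as $n\to\infty$ because $V(\bm q)>0$ (resp.\ $V(\bm p)>0$) forces $\max_j q_j<1$ (resp.\ $\max_j p_j<1$).

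For the first claim I would compute $H(\bm Q^{n})=nR\,H(\bm q)+n\log d$ and $V(\bm Q^{n})=nR\,V(\bm q)$, so that $K_{n}(x)=\big\lfloor d^{n}\exp\!\big(H(\bm q^{\otimes nR})+x\sqrt{V(\bm q^{\otimes nR})}\big)\big\rfloor$, and therefore $\lfloor K_{n}(x)/d^{n}\rfloor=\big\lfloor\exp\!\big(H(\bm q^{\otimes nR})+x\sqrt{V(\bm q^{\otimes nR})}\big)\big\rfloor=k_{nR}(x)$, using $\lfloor\lfloor Na\rfloor/N\rfloor=\lfloor a\rfloor$. The block identity then gives $\sum_{i=1}^{K_{n}(x)}Q^{n\downarrow}_{i}=\sum_{m=1}^{k_{nR}(x)}(\bm q^{\otimes nR})^\downarrow_{m}+o(1)$, and since $R$ is bounded away from zero we have $nR(n)\to\infty$, so \cref{lem:clt-index} applied to $\bm q$ along the sequence of copy numbers $nR(n)$ yields the limit $\Phi(x)$.

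For the second claim the cutoff $K_{n}(x)$ is the ``wrong'' one for $\bm P^{n}=\bm p^{\otimes n}\otimes\bm\eta^{\otimes nR}$, and carrying this cross-dependence through is the main technical point. De-replicating as above gives $\sum_{i=1}^{K_{n}(x)}P^{n\downarrow}_{i}=\sum_{m=1}^{\lfloor K_{n}(x)/d^{nR}\rfloor}(\bm p^{\otimes n})^\downarrow_{m}+o(1)$, so it suffices to identify $\lfloor K_{n}(x)/d^{nR}\rfloor$ with $k_{n}(y_{n})$ for a suitable $y_{n}$. Taking logarithms and substituting $H(\bm q)=\log d-D(\bm q)$ and $n\log d=nH(\bm p)+nD(\bm p)$ one gets
\begin{align}
\log\!\big(K_{n}(x)/d^{nR}\big)=nH(\bm p)+\big[\,nD(\bm p)-nR\,D(\bm q)+x\sqrt{nR\,V(\bm q)}\,\big]+o(1).
\end{align}
Now plug in $R=R_{\mu}(n)$ from \cref{eq:formation_rate}: this makes $nR_{\mu}(n)D(\bm q)=nD(\bm p)+\mu\sqrt{n\,D(\bm p)V(\bm q)/D(\bm q)}$ exactly, and $\sqrt{nR_{\mu}(n)V(\bm q)}=\sqrt{n\,D(\bm p)V(\bm q)/D(\bm q)}+O(1)$, so the bracket equals $(x-\mu)\sqrt{n\,D(\bm p)V(\bm q)/D(\bm q)}+O(1)$. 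Dividing by $\sqrt{nV(\bm p)}$ and using $D(\bm p)V(\bm q)/\big(D(\bm q)V(\bm p)\big)=1/\nu$ with $\nu$ as in \cref{eq:nu_again} gives $y_{n}\to(x-\mu)/\sqrt{\nu}$; crucially every constant-order and floor-order correction washes out because $\sqrt{nV(\bm p)}\to\infty$. It remains to upgrade \cref{lem:clt-index}, which is a statement at a fixed argument, to the convergent sequence $y_{n}$: since $y\mapsto\sum_{m=1}^{k_{n}(y)}(\bm p^{\otimes n})^\downarrow_{m}$ is non-decreasing and converges pointwise to the continuous function $\Phi$, a standard squeeze ($y^{*}-\delta\le y_{n}\le y^{*}+\delta$ for $n$ large, then $\delta\to0$) yields $\sum_{m=1}^{k_{n}(y_{n})}(\bm p^{\otimes n})^\downarrow_{m}\to\Phi\big((x-\mu)/\sqrt{\nu}\big)=\Phi_{\mu,\nu}(x)$, the last equality by the substitution $t\mapsto(t-\mu)/\sqrt{\nu}$ in the definition of $\Phi_{\mu,\nu}$.

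I expect the main obstacle to be precisely the bookkeeping in the last paragraph: one must verify that after inserting the exact form of $R_{\mu}(n)$ the leading $\sqrt{n}$-terms recombine into $\tfrac{x-\mu}{\sqrt{\nu}}\sqrt{nV(\bm p)}$ while all remaining $O(1)$ and floor errors are killed by the $1/\sqrt{nV(\bm p)}$ normalisation — this is routine but is where an error would most plausibly creep in. The minor degenerate case $V(\bm p)=0$ (where $\Phi_{\mu,0}$ is a step function and $\bm p^{\otimes n}$ is itself flat, so the truncated sum is just $\min\{\lfloor K_n(x)/d^{nR}\rfloor,\, |\mathrm{supp}\,\bm p|^{\,n}\}/|\mathrm{supp}\,\bm p|^{\,n}$ and converges to the step) and the non-integrality of $nR(n)$ (rounding it to a multiple of $1/n$ perturbs $R$ by $O(1/n)=o(1/\sqrt n)$, which does not affect the identification of $\lim y_n$) can be handled separately with the same de-replication idea.
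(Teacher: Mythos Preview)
Your proposal is correct and follows essentially the same route as the paper's proof. The paper's proof begins by asserting that \cref{lem:clt-index} ``remains true if the product distribution is smeared out'' by a flat factor $\bm f$, which is precisely your de-replication/block observation made explicit; it then performs the same change of cutoff variable (their $K_n^P(y)=K_n(x)$ identification is your computation of $y_n$) and invokes continuity of $\Phi$ to absorb the $o(1)$ drift, just as your squeeze argument does. If anything, your write-up is more detailed on the smearing step than the paper, which simply states it; the bookkeeping you flag as the likely trouble spot is exactly the step the paper summarises in the line $y/\sqrt{V(\bm p)}=(x-\mu)/\sqrt{\nu}+o(1)$.
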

\begin{proof}
	We start by noticing that \cref{lem:clt-index} remains true if the product distribution is ``smeared out''. Specifically, for any flat state $\bm f$ we have
	\begin{align}
	\lim_{n\to\infty}\sum_{i=1}^{k_n(x)}\left(\bm a^{\otimes n}\otimes\bm f\right)_i^\downarrow=\Phi(x),
	\end{align}
	where 
	\begin{align}
	\!k_n(x):=\left\lfloor\exp\left(H(\bm a^{\otimes n}\otimes\bm f)+x\sqrt{V(\bm a^{\otimes n}\otimes\bm f)}\right)\right\rfloor\!.\!\!
	\end{align}
	Using this, if we make the substitutions $\bm a^{\otimes n}\leftarrow \bm{q}^{\otimes Rn}$ and $\bm f\leftarrow \bm{\eta}^{\otimes n}$, recalling that $R$ is bounded away from zero, we arrive at
	\begin{align}
	\lim\limits_{n\to\infty}\sum_{i=1}^{K_n(x)} Q^{n\downarrow}_i=\Phi(x).
	\end{align}
	Applying the same argument to $\bm P^n$ gives
	\begin{align}
	\label{eq:limit_Pn}
	\lim\limits_{n\to\infty}\sum_{i=1}^{K^P_n(y)} P^{n\downarrow}_i=\Phi\left(\frac{y}{\sqrt{V(\bm p)}}\right),
	\end{align}	
	where $K_n^P(y):=\left\lfloor\exp\left(H(\bm P^n)+y\sqrt{n}\right)\right\rfloor$. Note that we did not include the variance in the definition of $K_n^P$ (as we did in $K_n$), because we have not assumed $V(\bm p)>0$. Indeed, if we interpret $\Phi(y/V(\bm p))$ as cumulative of the zero-mean Dirac distribution for $V(\bm p)=0$, then this also holds for $V(\bm p)=0$.
	
	We now want to express Eq.~\eqref{eq:limit_Pn} as a summation up to $K_n(x)$ for some $x$. Noticing that \mbox{$R_\mu=D(\bm p)/D(\bm q)+ o(1)$}, our choice of rate $R_\mu$ can be rearranged to give
	\begin{align}
	H(\bm Q^n)\simeq H(\bm P^n)-\sqrt{n\frac{D(\bm p)}{D(\bm q)}V( \bm q)}\mu.
	\end{align}
	Therefore, $K_n^P(y)=K_n(x)$ is equivalent to
	\begin{align}
	\frac y{\sqrt{V(\bm p)}}&=x\sqrt{\frac{V(\bm Q^n)}{V(\bm P^n)}}-\frac \mu{\sqrt \nu}=\frac{x-\mu}{\sqrt{\nu}}+o(1).
	\end{align}
	Finally, using the continuity of $\Phi$ gives the desired tail bound
	\begin{align}
	\lim\limits_{n\to\infty}\sum_{i=1}^{K_n(x)} P^{n\downarrow}_i
	&= \lim\limits_{n\to\infty}\sum_{i=1}^{K_n^P(y)} P^{n\downarrow}_i\nonumber\\
	&= \lim\limits_{n\to\infty}\sum_{i=1}^{K_n^P\left(\frac{x-\mu}{\sqrt{\nu}}\right)} P^{n\downarrow}_i\nonumber\\
	&=\Phi\left(\frac{x-\mu}{\sqrt{\nu}}\right)=\Phi_{\mu,\nu}(x).
	\end{align}
	Recalling that $\nu$ is proportional to $V(\bm p)$, we note that all of the above expressions are still well-defined if $V(\bm p)=0$, where we understand $\Phi_{\mu,0}$ to be the cumulative of the Dirac distribution with mean $\mu$.
\end{proof}

\subsection{Bistochastic flat-to-flat}
\label{subsec:trivial}

We start with the case of flat-to-flat conversion, where both the initial and target states are flat. In this boundary case, which was not considered in Ref.~\cite{kumagai2017second}, we can provide an exact single-shot expression for the optimal rate.
\begin{prop}[Bistochastic flat-to-flat]
	\label{prop:trivial}
	For any initial state $\bm p$ and target state $\bm q$ such that $V(\bm p)=V(\bm q)=0$, and infidelity \mbox{$\epsilon\in[0,1)$}, the optimal interconversion rate is given by
	\begin{align}
		R^*_0(n,\epsilon)=\frac1n \left\lfloor \frac{nD(\bm p)-\log(1-\epsilon)}{D(\bm q)} \right\rfloor\simeq \frac{D(\bm p)}{D(\bm q)}.
	\end{align}
\end{prop}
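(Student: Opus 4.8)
The plan is to exploit the fact that, after the reduction to $\beta = 0$ carried out at the start of \cref{sec:proof} and after embedding, the hypotheses $V(\bm p) = V(\bm q) = 0$ force both the total initial state $\bm P^n = \bm p^{\otimes n}\otimes\bm\eta^{\otimes Rn}$ and the total target state $\bm Q^n = \bm q^{\otimes Rn}\otimes\bm\eta^{\otimes n}$ to be \emph{flat}, so that the problem collapses to elementary majorisation combinatorics. Concretely, I would first record the simple fact that a flat distribution $\bm f$ whose support has size $s$ satisfies $\bm f \succ \bm g$ if and only if $\|\bm g\|_\infty \le 1/s$: comparing partial sums of the ordered vectors, all the majorisation inequalities for such an $\bm f$ are implied by the $k=1$ one. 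Writing $a$ and $b$ for the support sizes of the (embedded) flat distributions $\bm p$ and $\bm q$ and $d$ for their common dimension, the total states are then flat with supports of size $a^n d^{Rn}$ and $m := b^{Rn} d^n$ respectively; set $t := (a^n d^{Rn})^{-1}$.

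Next I would compute the optimal achievable fidelity $F^\star := \max\{F(\bm Q^n,\tilde{\bm q}) : \bm P^n \succ \tilde{\bm q}\}$, since by \cref{def:pre_post} the relation $\bm P^n \succ_\epsilon \bm Q^n$ holds exactly when $F^\star \ge 1-\epsilon$. By the flat-majoriser fact the feasible $\tilde{\bm q}$ are precisely those with all entries at most $t$. Because $\bm Q^n$ is flat with support $m$, only its $m$ support coordinates contribute to the Bhattacharyya sum, so a one-line Jensen argument for the concave map $x\mapsto\sqrt x$ shows the optimum is attained by putting weight exactly $\min(1/m, t)$ on each of those coordinates (and, if $mt < 1$, dumping the residual mass $1-mt$ onto fresh coordinates, which is possible since the ambient dimension $d^{n(R+1)}$ exceeds $1/t$). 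This yields the closed form $F^\star = \min(1, mt)$. One could alternatively reach the same value through \cref{lemma:fidelity} by passing to the dual optimisation over distributions majorising $\bm Q^n$, but the direct water-filling computation is self-contained.

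Finally I would convert $F^\star \ge 1-\epsilon$ into a constraint on $R$. For $\epsilon \in [0,1)$ this reads $mt \ge 1-\epsilon$, and since $\log(mt) = \log m - \log(1/t) = nD(\bm p) - Rn\,D(\bm q)$ — using $D(\bm p) = D(\bm p\|\bm\eta) = \log d - \log a$ and likewise for $\bm q$, or equivalently the additivity $D(\bm P^n) = nD(\bm p)$, $D(\bm Q^n) = Rn\,D(\bm q)$ together with the identity $\log(\text{support size}) = \log(\text{dimension}) - D(\cdot\|\bm\eta)$ valid for flat distributions — the condition becomes $Rn \le (nD(\bm p) - \log(1-\epsilon))/D(\bm q)$, where $D(\bm q) > 0$ because $\sigma \ne \gamma$. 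Since $Rn$ must be a non-negative integer and the right-hand side is strictly positive (as $D(\bm p) > 0$ and $\log\frac{1}{1-\epsilon} \ge 0$), the largest admissible rate is $R^*_0(n,\epsilon) = \frac1n\lfloor (nD(\bm p) - \log(1-\epsilon))/D(\bm q)\rfloor$. The claimed $\simeq D(\bm p)/D(\bm q)$ is then immediate: removing the floor changes the value by $O(1/n)$ and the term $\log(1-\epsilon)/(nD(\bm q))$ is likewise $O(1/n) = o(1/\sqrt n)$.

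I expect the only genuinely non-routine step to be the fidelity optimisation in the second paragraph — in particular, verifying that the flat ``water-filled'' $\tilde{\bm q}$ is truly optimal and that the ambient space is always large enough to absorb the residual mass; everything else is bookkeeping with the flat structure. Since this boundary case is not covered by Ref.~\cite{kumagai2017second}, I would also take a little extra care with the endpoint $\epsilon = 0$ (where exact interconversion forces $F^\star = 1$, i.e.\ $mt \ge 1$) and with the fact that it is precisely the integrality of $Rn$ that produces the floor in the final formula.
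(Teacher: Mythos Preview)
Your proposal is correct and reaches the same intermediate identity as the paper --- namely that for flat $\bm a,\bm b$ the optimal post-majorisation infidelity equals $\max\{0,1-\exp H(\bm b)/\exp H(\bm a)\}$, which here becomes $F^\star=\min(1,mt)$ --- but you arrive at it by a different and more elementary route. The paper first invokes the pre/post equivalence (\cref{lem:pre_post_equiv}) to pass to the dual problem $\max_{\tilde{\bm a}\succ\bm b}F(\bm a,\tilde{\bm a})$, and then appeals to the Vidal--Jonathan--Nielsen construction of the optimiser $\tilde{\bm a}^\star$ (Appendix~\ref{app:lemma_fidelity}), which for flat inputs collapses to $\tilde{\bm a}^\star=\bm b^\downarrow$. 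You instead stay on the post side, use the clean characterisation ``flat $\bm f$ with support $s$ majorises $\bm g$ iff $\|\bm g\|_\infty\le 1/s$'', and solve the resulting box-constrained fidelity maximisation by Jensen/water-filling. Your approach is self-contained and avoids both \cref{lem:pre_post_equiv} and the Vidal machinery, at the cost of being tailored to flat targets; the paper's approach reuses tools it has already built for the general interconversion problem. The dimension check you flag (that $d^{n(R+1)}\ge a^n d^{Rn}$, i.e.\ $d\ge a$) is genuinely needed in your argument to dump the residual $1-mt$ mass, whereas the paper's route sidesteps this because it never constructs an explicit $\tilde{\bm q}$. The final conversion of $F^\star\ge 1-\epsilon$ into the floored rate formula is identical in both.
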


Before proving this, we first consider the optimal majorising distribution in the case where all distributions involved are flat.

\begin{lem}[Single-shot flat-to-flat]
	\label{lem:1shot-trivial}
	If we let $\bm a$ and $\bm b$ be distributions such that $V(\bm a)=V(\bm b)=0$, then
	\begin{align}
	\label{lem:1shot-trivial_eq}	
	\min\left\lbrace \epsilon \middle| \bm a\succ_\epsilon \bm b \right\rbrace =\max\left\lbrace1-{\frac{\exp H(\bm b)}{\exp H(\bm a)}},0\right\rbrace.
	\end{align}
\end{lem}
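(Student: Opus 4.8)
The plan is to exploit that $V(\bm a)=V(\bm b)=0$ forces both distributions to be \emph{flat}, i.e.\ uniform on their supports. Writing $d_a:=\exp H(\bm a)$ and $d_b:=\exp H(\bm b)$ for the support sizes, $\bm a$ has $d_a$ entries equal to $1/d_a$ (and zeros elsewhere), and similarly for $\bm b$. Since post-majorisation $\bm a\succ_\epsilon\bm b$ asks for some $\tilde{\bm b}$ with $\bm a\succ\tilde{\bm b}$ and $\delta(\bm b,\tilde{\bm b})\le\epsilon$, and the feasible set $\{\tilde{\bm b}:\bm a\succ\tilde{\bm b}\}$ is a compact subset of the simplex (nonempty since the ambient dimension $d\ge d_a$), we have $\min\{\epsilon\mid\bm a\succ_\epsilon\bm b\}=1-\max_{\tilde{\bm b}:\,\bm a\succ\tilde{\bm b}}F(\bm b,\tilde{\bm b})$; it therefore suffices to show this optimal fidelity equals $\min\{d_b/d_a,\,1\}$.

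First I would record the feasibility characterisation: since $\bm a^\downarrow=(1/d_a,\dots,1/d_a,0,\dots)$, the relation $\bm a\succ\tilde{\bm b}$ is equivalent to the single constraint $\max_i\tilde b_i\le 1/d_a$ — the $k=1$ partial-sum inequality is the binding one, and once $\max_i\tilde b_i\le 1/d_a$ holds the inequality $\sum_{i=1}^k\tilde b_i^\downarrow\le k/d_a$ is automatic for every $k$. Then I would split on the sign of $d_a-d_b$. If $d_a\le d_b$, the target $\bm b$ is itself feasible (as $\max_i b_i=1/d_b\le 1/d_a$), so the optimal fidelity is $1$ and $\min\{\epsilon\mid\bm a\succ_\epsilon\bm b\}=0=\max\{1-d_b/d_a,0\}$. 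If $d_a>d_b$, then for any feasible $\tilde{\bm b}$, using $b_i\in\{0,1/d_b\}$, Cauchy--Schwarz, and $\sum_{i\in\operatorname{supp}\bm b}\tilde b_i\le d_b\max_i\tilde b_i\le d_b/d_a$,
\[
F(\bm b,\tilde{\bm b})=\frac1{d_b}\Big(\sum_{i\in\operatorname{supp}\bm b}\sqrt{\tilde b_i}\Big)^{2}\le\sum_{i\in\operatorname{supp}\bm b}\tilde b_i\le\frac{d_b}{d_a},
\]
and equality is attained by the feasible choice $\tilde{\bm b}=\bm a$ — assuming, as we may by the ordering remark of Section~\ref{sec:approximate}, that $\bm a$ and $\bm b$ are non-increasing, so $\operatorname{supp}\bm b=\{1,\dots,d_b\}\subseteq\{1,\dots,d_a\}=\operatorname{supp}\bm a$ — for which $F(\bm b,\bm a)=\tfrac1{d_b}(d_b/\sqrt{d_a})^2=d_b/d_a$. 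Hence $\min\{\epsilon\mid\bm a\succ_\epsilon\bm b\}=1-d_b/d_a$ in this regime, and combining both cases yields $\max\{1-\exp H(\bm b)/\exp H(\bm a),0\}$, as claimed.

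The argument is essentially elementary, so I do not expect a serious obstacle; the only points needing care are the reduction of the infinite family of majorisation inequalities to the single $\ell_\infty$ bound $\max_i\tilde b_i\le 1/d_a$, and keeping track of which of the two regimes ($d_a\le d_b$ versus $d_a>d_b$) is in force, so that the ``$\max\{\cdot,0\}$'' truncation appears for the right reason.
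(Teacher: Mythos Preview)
Your proof is correct and takes a genuinely different route from the paper. The paper first invokes Lemma~\ref{lem:pre_post_equiv} to convert post-majorisation into pre-majorisation, and then appeals to the explicit Vidal--Jonathan--Nielsen construction of the optimiser $\tilde{\bm a}^\star$ (Appendix~\ref{app:lemma_fidelity}): for flat $\bm a$ and $\bm b$ that construction collapses to $\tilde{\bm a}^{\star\downarrow}=\bm a^\downarrow$ when $\bm a\succ\bm b$ and $\tilde{\bm a}^{\star\downarrow}=\bm b^\downarrow$ otherwise, from which the fidelity formula follows. By contrast, you stay on the post-majorisation side, reduce the feasibility condition $\bm a\succ\tilde{\bm b}$ to the single $\ell_\infty$ constraint $\max_i\tilde b_i\le 1/d_a$ (valid precisely because $\bm a$ is flat), and then bound $F(\bm b,\tilde{\bm b})$ directly via Cauchy--Schwarz, exhibiting $\tilde{\bm b}=\bm a$ as the optimiser when $d_a>d_b$. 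Your argument is more elementary and entirely self-contained; the paper's approach has the complementary virtue of reusing structural tools (the pre/post equivalence and the $\tilde{\bm a}^\star$ construction) that are needed elsewhere in the work.
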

\begin{proof}
	First, from Lemma~\ref{lem:pre_post_equiv} we know that 
	\begin{equation}
	\min\left\lbrace \epsilon \middle| \bm a\succ_\epsilon \bm b \right\rbrace =\min\left\lbrace \epsilon \middle| \bm a\prescript{}{\epsilon}{\succ} \bm b \right\rbrace.
	\end{equation}
	Now, the right hand side of the above equation is minimised by a state $\tilde{\v{a}}^\star$, whose explicit construction (found in Ref.~\cite{vidal2000approximate}) we present in Appendix~\ref{app:lemma_fidelity} while proving Lemma~\ref{lemma:fidelity}. Using it one finds that
	\begin{align}
	\tilde{\v{a}}^{\star\downarrow}=\begin{dcases}
	\bm a^\downarrow &\text{if}~~\bm a\succ \bm b,\\
	\bm b^\downarrow &\text{if}~~\bm a\nsucc \bm b.
	\end{dcases}
	\end{align}
	Since $\v{a}$ and $\v{b}$ are flat, in the first case we have
	\begin{equation}
	F(\tilde{\v{a}}^{\star\downarrow},\v{b}^\downarrow)={\frac{\exp H(\bm b)}{\exp H(\bm a)}},
	\end{equation}
	which leads to Eq.~\eqref{lem:1shot-trivial_eq}.
\end{proof}

We can now apply this to give an exact expression for the optimal rate.

\begin{proof}[Proof of \cref{prop:trivial}]
	Applying \cref{lem:1shot-trivial} to \mbox{$\v{a}=\v{P}^{n}$} and \mbox{$\v{b}=\v{Q}^{n}$} we find that $\epsilon_0^*(n,R)$ vanishes for any \mbox{$R\leq D(\bm p)/D(\bm q)$}, and
	\begin{align}
	\frac{1}{n}\log\bigl(1-\epsilon_0^*(n,R)\bigr)=D(\bm p)-RD(\bm q)
	\end{align}
	for any $R\geq D(\bm p)/D(\bm q)$. Converting the expression for the optimal infidelity into an expression for the optimal rate, and recalling that $nR^*(n,\epsilon)$ must be an integer, gives \cref{prop:trivial} as required.
\end{proof}

\subsection{Bistochastic distillation}
\label{subsec:distill}

We now consider distillation, in which the target state is flat. 
\begin{prop}[Bistochastic distillation]
	\label{prop:distill}
	For any initial state $\bm p$ and target state $\bm q$ such that $V(\bm q)=0$, and infidelity \mbox{$\epsilon\in(0,1)$}, the optimal interconversion rate has the second-order expansion
	\begin{align}
	R^*_0(n,\epsilon)
	\simeq
	\frac{1}{D(\bm q)}\Biggl[D(\bm p)
	+\sqrt{\frac{V(\bm p)}{n}}\Phi^{-1}(\epsilon)
	\Biggr].
	\end{align}
\end{prop}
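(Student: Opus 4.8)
The plan is to turn the question into an exact single-shot identity for the optimal infidelity $\epsilon^*_0(n,R)$ at a fixed rate — which is clean here because the target $\bm q$ is flat — and then feed that identity into a central limit theorem for the non-flat state $\bm p^{\otimes n}$. Note first that if $V(\bm p)=0$ then both states are flat and the statement reduces to \cref{prop:trivial}, whose asymptotic form is $R^*_0(n,\epsilon)\simeq D(\bm p)/D(\bm q)$, i.e.\ the $V(\bm p)=0$ instance of the claimed expansion; so I assume $V(\bm p)>0$ from now on.

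\emph{Single-shot step.} Since $\bm q$ is flat, $\bm Q^n=\bm q^{\otimes Rn}\otimes\bm\eta^{\otimes n}$ is uniform over $M_n:=\exp(H(\bm Q^n))$ outcomes. Because post-majorisation smooths the target, $1-\epsilon^*_0(n,R)=\max\{F(\bm Q^n,\tilde{\bm Q}^n):\bm P^n\succ\tilde{\bm Q}^n\}$. I would prove this maximum equals $\sum_{i=1}^{M_n}P^{n\downarrow}_i$. With both vectors ordered (w.l.o.g.\ by the convention of \cref{sec:approximate}) and the $M_n$ support points of $\bm Q^n$ matched against the $M_n$ largest entries of $\tilde{\bm Q}^n$, $F(\bm Q^n,\tilde{\bm Q}^n)=\tfrac1{M_n}\bigl(\sum_{i=1}^{M_n}\sqrt{\tilde Q^{n\downarrow}_i}\bigr)^2$; concavity of $\sqrt{\cdot}$ and the majorisation bound $\sum_{i=1}^{M_n}\tilde Q^{n\downarrow}_i\le\sum_{i=1}^{M_n}P^{n\downarrow}_i$ give $F\le\sum_{i=1}^{M_n}P^{n\downarrow}_i$. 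Equality is attained by the distribution obtained from $\bm P^n$ by replacing its $M_n$ largest entries with their common average $\tfrac1{M_n}\sum_{j=1}^{M_n}P^{n\downarrow}_j$ and leaving the remaining entries untouched: this is a valid distribution majorised by $\bm P^n$ (the running averages of $\bm P^{n\downarrow}$ are non-increasing), and its fidelity with $\bm Q^n$ is exactly $\sum_{i=1}^{M_n}P^{n\downarrow}_i$. Hence $\epsilon^*_0(n,R)=1-\sum_{i=1}^{M_n}P^{n\downarrow}_i$ exactly. (When $\bm p$ is also flat this recovers \cref{lem:1shot-trivial}, and it can equally be obtained from \cref{lemma:fidelity,lem:pre_post_equiv}.) Writing the uniform padding of $\bm P^n=\bm p^{\otimes n}\otimes\bm\eta^{\otimes Rn}$ out in blocks shows $\sum_{i=1}^{M_n}P^{n\downarrow}_i=\sum_{j=1}^{\lfloor M_n/D^{Rn}\rfloor}(\bm p^{\otimes n})^{\downarrow}_j+o(1)$ uniformly, with $M_n/D^{Rn}=\exp\!\bigl(nH(\bm p)+n(D(\bm p)-RD(\bm q))\bigr)$, so $\epsilon^*_0(n,\cdot)$ is non-decreasing.

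\emph{Asymptotic step.} I would then apply the ``smeared'' central limit theorem from the proof of \cref{lem:clt} to $\bm P^n$ (legitimate since $R\to D(\bm p)/D(\bm q)>0$ is bounded away from zero): $\sum_{i=1}^{K}P^{n\downarrow}_i\to\Phi\bigl(y/\sqrt{V(\bm p)}\bigr)$ for $K=\lfloor\exp(H(\bm P^n)+y\sqrt n)\rfloor$. Matching $K=M_n$ forces $y\sqrt n=H(\bm Q^n)-H(\bm P^n)=n(D(\bm p)-RD(\bm q))$; the floor defining $M_n$ and the integrality of $Rn$ perturb $H(\bm Q^n)$ only by $O(1)$, hence the sum only by $o(1)$. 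Thus along the rate $R_\mu(n):=\tfrac1{D(\bm q)}\bigl(D(\bm p)+\mu\sqrt{V(\bm p)/n}\bigr)$ one gets $\epsilon^*_0(n,R_\mu(n))\to\Phi(\mu)$. Since $\epsilon^*_0(n,\cdot)$ is non-decreasing, choosing $\mu_1<\Phi^{-1}(\epsilon)<\mu_2$ gives $R_{\mu_1}(n)\le R^*_0(n,\epsilon)<R_{\mu_2}(n)$ for all large $n$; letting $\mu_1,\mu_2\to\Phi^{-1}(\epsilon)$ and absorbing the $O(1/n)$ from rounding $nR$ to an integer yields $R^*_0(n,\epsilon)\simeq\tfrac1{D(\bm q)}\bigl(D(\bm p)+\sqrt{V(\bm p)/n}\,\Phi^{-1}(\epsilon)\bigr)$, consistent with \cref{eq:interconversion1} upon setting $\nu=\infty$ (so $Z_{1/\nu}=Z_0=\Phi$).

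\emph{Main obstacle.} Most of this is bookkeeping around tools already in place. The one genuinely delicate point is the single-shot step: establishing that ``flattening the top $M_n$ entries'' is fidelity-optimal among \emph{all} distributions majorised by $\bm P^n$, and then verifying that the central-limit estimate survives the two discretisation effects — the floor in $M_n=\exp H(\bm Q^n)$ and the constraint $Rn\in\mathbb N$ — so that the $\gtrsim$ and $\lesssim$ bounds genuinely sandwich $R^*_0(n,\epsilon)$ to within $o(1/\sqrt n)$.
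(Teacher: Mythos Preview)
Your proposal is correct and follows the same two-step skeleton as the paper: an exact single-shot identity $\epsilon_0^*(n,R)=\sum_{i>\exp H(\bm Q^n)}P_i^{n\downarrow}$, then the CLT for $\bm P^n$ along the rate family $R_\mu$ to obtain $\epsilon_0^*(n,R_\mu)\to\Phi(\mu)$, followed by inversion.

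The one real difference is how you establish the single-shot identity. The paper passes through the pre/post equivalence (\cref{lem:pre_post_equiv}) to work on the pre-majorisation side, observing that $\tilde{\bm a}\succ\bm b$ with $\bm b$ flat is just a support constraint, so the optimal $\tilde{\bm a}$ is the truncated-and-rescaled $\bm a$ (via Schwarz), whose infidelity is the tail mass. You stay on the post-majorisation side and show directly that averaging the top $M_n$ entries of $\bm P^{n\downarrow}$ saturates the Cauchy--Schwarz bound $F(\bm Q^n,\tilde{\bm Q}^n)\le\sum_{i\le M_n}\tilde Q_i^{n\downarrow}\le\sum_{i\le M_n}P_i^{n\downarrow}$. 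Both reach the same formula; your route is slightly more self-contained since it does not invoke \cref{lem:pre_post_equiv}, while the paper's route hooks into the \cref{app:lemma_fidelity} construction and produces the optimal smoothed \emph{input}. Your explicit sandwiching via monotonicity of $\epsilon_0^*(n,\cdot)$ is also a bit more careful than the paper's one-line ``reversing the relationship''.
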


Instead of utilising the techniques of Ref.~\cite{kumagai2017second}, we will, similar to the flat-to-flat case, prove \cref{prop:distill} by first considering a single-shot expression for the optimal error.

\begin{lem}[Single-shot distillation]
	\label{lem:1shot-distill}
	Let $\bm a$ and $\bm b$ be distributions with $V(\bm b)=0$. Then
	\begin{align}
	\min\left\lbrace \epsilon \middle| \bm a\succ_\epsilon \bm b \right\rbrace = \sum_{i>\exp H(\bm b)}a_i^{\downarrow}.
	\end{align}
\end{lem}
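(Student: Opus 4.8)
The plan is to pass to pre-majorisation via Lemma~\ref{lem:pre_post_equiv}, reinterpret the minimal error as the defect of an optimal fidelity, and then evaluate that fidelity explicitly using the flatness of $\bm b$. Write $m:=\exp H(\bm b)$; since $V(\bm b)=0$ the distribution $\bm b$ is flat, so $m$ is a positive integer and $\bm b$ has exactly $m$ nonzero entries, each equal to $1/m$. By Lemma~\ref{lem:pre_post_equiv} we have $\bm a\succ_\epsilon\bm b$ iff $\bm a\,\prescript{}{\epsilon}{\succ}\,\bm b$, and the latter holds for some $\epsilon$ iff there is a $\tilde{\bm a}\succ\bm b$ with $\delta(\bm a,\tilde{\bm a})\le\epsilon$; since both the majorisation constraint set and $F(\bm a,\cdot)$ are compact/continuous, the extremum is attained and
\[
\min\{\epsilon\mid\bm a\succ_\epsilon\bm b\}=1-\max\{F(\bm a,\tilde{\bm a})\mid\tilde{\bm a}\succ\bm b\}.
\]
It therefore remains to show this maximal fidelity equals $\sum_{i=1}^m a_i^\downarrow$.

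The first ingredient is the characterisation that, because $\bm b$ is flat with $m$ nonzero entries, $\tilde{\bm a}\succ\bm b$ holds precisely when $\tilde{\bm a}$ has at most $m$ nonzero entries. One direction is immediate: majorisation at $k=m$ gives $\sum_{i=1}^m\tilde a_i^\downarrow\ge\sum_{i=1}^m b_i^\downarrow=1$, forcing $\tilde a_i^\downarrow=0$ for $i>m$. For the converse, suppose $\tilde{\bm a}$ has support of size at most $m$ but $\sum_{i=1}^k\tilde a_i^\downarrow<k/m$ for some $k<m$; then $\sum_{i=k+1}^m\tilde a_i^\downarrow>(m-k)/m$, so the largest of those $m-k$ terms satisfies $\tilde a_{k+1}^\downarrow>1/m$, whence $\tilde a_i^\downarrow>1/m$ for all $i\le k$ and $\sum_{i=1}^k\tilde a_i^\downarrow>k/m$, a contradiction. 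The second ingredient is to maximise $F(\bm a,\tilde{\bm a})=\bigl(\sum_i\sqrt{a_i\tilde a_i}\bigr)^2$ over all $\tilde{\bm a}$ supported on at most $m$ indices: for a fixed support set $T$ with $|T|\le m$, Cauchy--Schwarz gives $\sum_{i\in T}\sqrt{a_i\tilde a_i}\le\sqrt{\sum_{i\in T}a_i}$, with equality when $\tilde a_i\propto a_i$ on $T$, and maximising $\sum_{i\in T}a_i$ over $|T|\le m$ selects the $m$ largest entries of $\bm a$. Hence $\max\{F(\bm a,\tilde{\bm a})\mid\tilde{\bm a}\succ\bm b\}=\sum_{i=1}^m a_i^\downarrow$, and substituting back yields $\min\{\epsilon\mid\bm a\succ_\epsilon\bm b\}=1-\sum_{i=1}^m a_i^\downarrow=\sum_{i>m}a_i^\downarrow=\sum_{i>\exp H(\bm b)}a_i^\downarrow$.

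The step requiring care is the converse half of the support characterisation; everything else is a one-line Cauchy--Schwarz optimisation. Alternatively one could quote the explicit optimal majoriser $\tilde{\bm a}^\star$ from Appendix~\ref{app:lemma_fidelity}, which here is exactly the restriction of $\bm a^\downarrow$ to its $m$ largest entries, renormalised, but the self-contained argument above avoids invoking it. This single-shot lemma then feeds \cref{prop:distill} by specialising to $\bm a=\bm P^n$ and $\bm b=\bm Q^n$ (flat since $V(\bm q)=0$), where the central limit theorem of \cref{lem:clt} turns $\sum_{i>\exp H(\bm Q^n)}(P^n)_i^\downarrow$ into a Gaussian tail and hence the stated second-order rate.
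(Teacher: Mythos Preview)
Your proof is correct and follows essentially the same route as the paper: reduce to pre-majorisation via Lemma~\ref{lem:pre_post_equiv}, characterise $\tilde{\bm a}\succ\bm b$ as a support constraint, and optimise the fidelity over small-support distributions via Cauchy--Schwarz to obtain the truncated-and-rescaled optimiser. The only difference is that you supply an explicit contradiction argument for the converse of the support characterisation, whereas the paper simply asserts that any distribution on at most $m$ entries majorises the flat distribution on $m$ entries.
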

\begin{proof}
		First, from Lemma~\ref{lem:pre_post_equiv} we know that 
	\begin{equation}
	\min\left\lbrace \epsilon \middle| \bm a\succ_\epsilon \bm b \right\rbrace =\min\left\lbrace \epsilon \middle| \bm a\prescript{}{\epsilon}{\succ} \bm b \right\rbrace.
	\end{equation}
	Now, to find a state minimising the right hand side of the above equation, consider a distribution $\bmtilde a$ such that $\bmtilde a\succ \bm b$. Since $\v{b}$ is flat, this is equivalent to the statement that $\bmtilde a$ has a support which is no larger than that of $\v{b}$. This condition is clearly necessary; it is sufficient as any distribution with $d$ or fewer non-zero entries majorises the flat distribution over $d$ entries. Using the Schwarz inequality one can then show that the distribution $\bmtilde a$ which contains at most $\exp H(\bm b)$ non-zero elements, and is closest to $\bm a$, is simply the truncated-and-rescaled distribution,
	\begin{align}
	\tilde a_i^\downarrow:=\frac{1}{\sum_{j=1}^{\exp (H(\bm b))}a^\downarrow_j}\begin{dcases}
	a^\downarrow_i  &\text{if}~~i\leq \exp H(\bm b),\\
	0 &\text{if}~~i>\exp H(\bm b).
	\end{dcases}
	\end{align}
	It is a straightforward calculation to show that the infidelity of such a smoothed state is given by mass of the truncated tail
	\begin{align}
	\delta(\bm a,\bmtilde a)=\sum_{i>\exp H(\bm b)}a_i^{\downarrow}.
	\end{align}
\end{proof}

We can now use \cref{lem:clt} to bound this tail, giving a second-order expansion for asymptotic case.

\begin{proof}[Proof of \cref{prop:distill}]
	Applying \cref{lem:1shot-distill} to $\bm a=\bm P^n$ and $\bm b=\bm Q^n$ we find that
	\begin{align}
	\epsilon_0^*(n,R)=\sum\limits_{i>\exp H(\bm Q^n)}P^{n\downarrow}_i.
	\end{align}
	Now consider a rate of the form
	\begin{align}
	r_\mu(n):=\frac{1}{D(\bm q)}\left[D(\bm p)+\sqrt{\frac{V(\bm p)}n}\mu\right],
	\end{align}
	for some $\mu\in\mathbb R$. We then have
	\begin{align}
	H(\bm Q^n)=H(\bm P^n)-\mu\sqrt{V(\bm P^n)},
	\end{align}
	and so if we apply the first bound of \cref{lem:clt} (with $\bm P^n$ in place of $\bm Q^n$), we arrive at
		\begin{align}
	\lim\limits_{n\to\infty}\epsilon_0^*(n,r_\mu)
	&=1-\Phi(-\mu)=\Phi(\mu).
	\end{align}
	Reversing the relationship between infidelity and rate, this implies 
	\begin{align}
	R^*_0(n,\epsilon)\simeq\frac{1}{D(\bm q)}\left[D(\bm p)+\sqrt{\frac{V(\bm p)}n}\Phi^{-1}(\epsilon)\right],
	\end{align}
	as required.
\end{proof}

\subsection{Bistochastic formation}
\label{subsec:form}

For our final special case, we consider flat initial states.

\begin{prop}[Bistochastic formation]
	\label{prop:form}
	For any initial state $\bm p$ and target state $\bm q$ such that $V(\bm p)=0$, and infidelity \mbox{$\epsilon\in(0,1)$}, the optimal interconversion rate has the second-order expansion
	\begin{align}
	R^*_0(n,\epsilon)
	\simeq
	\frac{1}{D(\bm q)}\Biggl[D(\bm p)
	+\sqrt{\frac{D(\bm p)}{D(\bm q)}\frac{V(\bm q)}{n}}\Phi^{-1}(\epsilon)
	\Biggr].
	\end{align}
\end{prop}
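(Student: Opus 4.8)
The plan is to follow the template of the proof of \cref{prop:distill}: first reduce the optimal infidelity to a single-shot quantity, then evaluate its large-$n$ behaviour with the central limit theorem of \cref{lem:clt}. Since $V(\bm p)=0$, the distribution $\bm p$, and hence the total initial state $\bm P^n=\bm p^{\otimes n}\otimes\bm\eta^{\otimes nR}$, is flat; write $M_n:=\exp H(\bm P^n)$ for the (integer) number of entries on which $\bm P^n$ is uniform. For a flat distribution the majorisation condition collapses to a single constraint on the largest entry: from $\sum_{i=1}^k r_i^\downarrow\le k\,r_1^\downarrow$ one sees that $\bm P^n\succ\bm r$ if and only if $\|\bm r\|_\infty\le 1/M_n$. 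Combined with the definition of $\epsilon^*_0$ this yields the single-shot identity
\begin{align}
1-\epsilon^*_0(n,R)=\max\Bigl\{F(\bm Q^n,\tilde{\bm Q}^n)\ \Big|\ \|\tilde{\bm Q}^n\|_\infty\le e^{-H(\bm P^n)}\Bigr\},
\end{align}
which I would isolate as a lemma in the spirit of \cref{lem:1shot-distill}. (Equivalently, one may pass through \cref{lemma:fidelity} and maximise the fidelity with $\bm P^n$ over distributions $\tilde{\bm P}^n$ that majorise $\bm Q^n$; this leads to the same computation.)

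For the asymptotics I would take the conversion rate in the form
\begin{align}
r_\mu(n)=\frac{1}{D(\bm q)}\Bigl[D(\bm p)+\sqrt{\tfrac{D(\bm p)}{D(\bm q)}\tfrac{V(\bm q)}{n}}\,\mu\Bigr],
\end{align}
which is exactly the rate $R_\mu(n)$ of \cref{lem:clt} specialised to $\nu=0$ (so $\Phi_{\mu,0}$ is the Dirac mass at $\mu$); for this choice $\bigl(H(\bm P^n)-H(\bm Q^n)\bigr)/\sqrt{V(\bm Q^n)}\to\mu$, so the threshold $c_n:=e^{-H(\bm P^n)}$ sits at ``$z$-score $\mu$'' for $\bm Q^n$. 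The claim then reduces to $\lim_{n\to\infty}\epsilon^*_0(n,r_\mu)=\Phi(\mu)$, which I would prove by a sandwich. For the converse, split the indices at $k:=K_n(\mu-\delta)$ and apply Cauchy--Schwarz twice to any admissible $\tilde{\bm Q}^n$ (i.e.\ with $\|\tilde{\bm Q}^n\|_\infty\le c_n$):
\begin{align}
\sqrt{F(\bm Q^n,\tilde{\bm Q}^n)}\le\sqrt{c_n\,K_n(\mu-\delta)}+\sqrt{\textstyle\sum_{i> K_n(\mu-\delta)}Q^{n\downarrow}_i},
\end{align}
where the second term tends to $\sqrt{1-\Phi(\mu-\delta)}$ by \cref{lem:clt}. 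For achievability, the truncated-and-rescaled distribution supported on the entries $i>K_n(\mu+\delta)$ with $\tilde Q^{n\downarrow}_i\propto Q^{n\downarrow}_i$ there is admissible for large $n$ (its largest entry tends to $0$, hence is eventually below $c_n$) and has fidelity with $\bm Q^n$ equal to $\sum_{i>K_n(\mu+\delta)}Q^{n\downarrow}_i\to1-\Phi(\mu+\delta)$, again by \cref{lem:clt}. Letting $\delta\to0$ pins down $\lim_{n\to\infty}\epsilon^*_0(n,r_\mu)=\Phi(\mu)$, and inverting the rate--infidelity relation gives $R^*_0(n,\epsilon)\simeq r_{\Phi^{-1}(\epsilon)}(n)$, which is the claim.

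The step I expect to be the main obstacle is the vanishing of the capped-mass term, $c_n\,K_n(\mu-\delta)\to0$: in contrast to the distillation case, the threshold index $K_n(\mu)$ is only \emph{comparable} in size to $M_n$, not of strictly smaller order, so Cauchy--Schwarz applied at the exact threshold $\mu$ is too lossy and one really needs the behaviour of the ordered product distribution just below the threshold. The clean way around this is precisely the $\pm\delta$ perturbation above, exploiting $K_n(\mu-\delta)/M_n\sim e^{-\delta\sqrt{V(\bm Q^n)}}\to0$ before sending $\delta\to0$; establishing these ratio estimates and the limiting tail masses is where \cref{lem:clt} (or, if one prefers to argue through magnitudes, \cref{lem:clt-magnitude}) does the work. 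The usual minor nuisance, that $r_\mu(n)$ must be rounded so that $nr_\mu(n)\in\mathbb N$, perturbs $H(\bm P^n)-H(\bm Q^n)$ by an $O(1)$ term; this does not affect the limiting $z$-score and is absorbed just as in the proof of \cref{prop:distill}.
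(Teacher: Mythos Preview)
Your proof is correct and substantially simpler than the paper's. The key simplification is that you work directly on the post-majorisation side and exploit the exact characterisation ``$\bm P^n\succ\tilde{\bm Q}^n\Longleftrightarrow\|\tilde{\bm Q}^n\|_\infty\le e^{-H(\bm P^n)}$'' for flat $\bm P^n$; this turns the problem into a clean capped-entry fidelity optimisation, after which a single Cauchy--Schwarz split at $K_n(\mu\pm\delta)$ together with \cref{lem:clt} closes both directions. The paper instead works on the pre-majorisation side, constructing a $\tilde{\bm P}^n\succ\bm Q^n$ via a binning and scaled-distribution machinery (\cref{lem:determinstic map,lem:exist-ptilde-1,lem:partialfid}) for achievability, and invoking a separate structural bound (\cref{lem:bipartite}) for optimality, before appealing to \cref{lem:pre_post_equiv}. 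The paper's route is more elaborate than necessary for formation because it is deliberately set up as a warm-up for the general interconversion proof in \cref{subsec:gen}, where no such $\|\cdot\|_\infty$ shortcut is available; your argument buys directness here but does not generalise to the $V(\bm p)>0$ case.

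One phrasing in your achievability step is loose: ``its largest entry tends to $0$, hence is eventually below $c_n$'' is insufficient on its own since $c_n\to0$ as well. What you actually need (and essentially note in your final paragraph) is the ratio estimate: the largest entry of your truncated-and-rescaled $\tilde{\bm Q}^n$ is at most $(K_n(\mu+\delta)W)^{-1}$, and $e^{H(\bm P^n)}/K_n(\mu+\delta)\sim e^{-\delta\sqrt{V(\bm Q^n)}}\to0$, so the ratio to $c_n$ vanishes. With that made explicit your argument is complete.
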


This proof is more involved than distillation, and will involve some of the techniques used in the proof of the general interconversion problem, as first developed in Ref.~\cite{kumagai2017second}. As with distillation, we will attempt to bound the rate by bounding the optimal infidelity between the total initial state $\bm P^n$ and a state $\bmtilde P^n$ that majorises the total target state $\bm Q^n$. We will thus fix our rate $R_\mu(n)$ to be given by Eq.~\eqref{eq:formation_rate} from \cref{lem:clt}, and look for bounds on infidelity between $\bmtilde P^n$ and $\bm P^n$. More precisely, our proof is split into two parts: achieveability (upper bound on optimal error/lower bound on optimal rate) and optimality (lower bound on optimal error/upper bound on optimal rate).

\subsubsection{Achieveability}

\paragraph*{Sketch of construction.}

The general idea here is to construct a distribution $\bmtilde{P}^n$ which is close to the total initial state $\bm P^n$ \emph{and} majorises total target state $\bm Q^n$. {By the equivalence of pre- and post-majorisation, Lemma~\ref{lem:pre_post_equiv}, this will prove that there exists a distribution $\bmtilde{Q}^n$ that is majorised by the total initial state $\bm P^n$ and is close to the total target state $\bm Q^n$.} We will start by defining two bins (sets) of indices, $B$ and $B'$. We will then construct a \emph{scaled distribution} $\bm S^n$ such that for indices belonging to $B$ it has the same shape as $\bm P^n$ (i.e., it is flat), but has as much mass as $\bm Q^n$ has over the indices belonging to $B'$. The first property will guarantee that $\bm S^n$ lies close to $\bm P^n$, and the second that it lies close to a \emph{majorising distribution} $\bmtilde P^n\succ \bm Q^n$. We will then analyse $\delta(\bm P^n,\bmtilde{P}^n)$, giving an upper bound on the optimal error.

\paragraph*{Binning.} 

For some small $\zeta>0$, define two bins of indices
\begin{subequations}
\begin{align}
B&:=\lbrace 1,\dots,K_n(\mu+\zeta/2)\rbrace,\\
B'&:=\lbrace K_n(\mu+\zeta),\dots,\infty\rbrace,
\end{align}
\end{subequations}
where $\infty$ is shorthand for the largest index, and $\mu$ fixes the value of rate $R_\mu(n)$. We will consider $B$ as a bin on the indices of $\bm P^n$ and $B'$ on those of $\bm Q^n$. We denote the complements of these bins as $\bar B$ and $\bar{B'}$, respectively.

\paragraph*{Scaled distribution $\bm S^n$.}

For any $j\in B$ define 
\begin{align}
S^n_j:=\frac{\sum_{k\in B'}Q^{n\downarrow}_k}{\sum_{k\in B}P^{n\downarrow}_k}\cdot P_j^{n\downarrow}=\frac{1}{\abs{B}}\sum_{k\in B'}Q^{n\downarrow}_k,
\end{align}
and $S^n_l:=1-\sum_{k\in B'} Q^{n\downarrow}_k$ for some arbitrary $l\notin B$ such that $\bm S^n$ is normalised. 

By construction the mass of $\bm S^n$ on $B$ is equal to that of $\bm Q^{n}$ on $B'$, i.e.\
\begin{align}
\sum_{j\in B}S^n_j:=\sum_{j\in B'}Q^{n\downarrow}_k.
\end{align}
We now want to show that this implies the existence of a nearby majorising distribution $\bmtilde P^n\succ \bm Q^n$.

\paragraph*{Majorising distribution $\bmtilde P^n$.}

Instead of giving an explicit construction of $\bmtilde P^n$, we instead present an existence proof. Specifically we will leverage the following lemma:
\begin{lem}
	\label{lem:determinstic map}
	For non-negative vectors $\bm a$ and $\bm b$ such that $\sum_i a_i=\sum_i b_i$, there exists a vector $\bmtilde a$ such that \mbox{$\sum_k \tilde a_k=\sum_k  a_k$}, $\bmtilde a\succ \bm b$ and $\norm{\bmtilde a-\bm a}_\infty \leq \norm{\bm b}_\infty$.
\end{lem}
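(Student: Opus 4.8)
The plan is to prove the lemma by an explicit construction, after a harmless reduction. First I would observe that majorisation of $\bm b$ depends only on $\bm b^\downarrow$, and that applying a common permutation to the coordinates of $\bm a$ and $\bmtilde a$ changes neither $\norm{\bmtilde a-\bm a}_\infty$ nor the validity of $\bmtilde a\succ\bm b$; so one may assume $\bm a$ and $\bm b$ are both listed in non-increasing order. Writing $N$ for the dimension, $M:=\sum_i a_i=\sum_i b_i$, $\beta:=\norm{\bm b}_\infty=b_1$, and letting $S_k:=\sum_{i\le k}a_i$, $T_k:=\sum_{i\le k}b_i$, $\Delta_k:=T_k-S_k$ (so $\Delta_0=\Delta_N=0$), the task reduces to producing $\bmtilde a\ge 0$ with $\sum_i\tilde a_i=M$, $|\tilde a_k-a_k|\le\beta$ for every $k$, and $\sum_{i\le k}\tilde a_i\ge T_k$ for every $k$; indeed the last two conditions give $\sum_{i\le k}\tilde a_i^\downarrow\ge\sum_{i\le k}\tilde a_i\ge T_k$, which together with $\sum_i\tilde a_i^\downarrow=M=T_N$ yields $\bmtilde a^\downarrow\succ\bm b$.

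Next I would build $\bmtilde a$ by tracking the cumulative surplus $g_k:=\big(\sum_{i\le k}\tilde a_i\big)-S_k$, which must start and end at $0$ and satisfy $g_k\ge\Delta_k$, with increments constrained so that $\tilde a_k=a_k+(g_k-g_{k-1})\in[\max(0,a_k-\beta),\,a_k+\beta]$. The choice I would make is to decrease $g$ as fast as these constraints allow: set $g_0:=0$ and $g_k:=\max\big(g_{k-1}-\min(a_k,\beta),\,\Delta_k\big)$, which unrolls to $g_k=\max_{0\le l\le k}\big(\Delta_l-\sum_{l<i\le k}\min(a_i,\beta)\big)$ (with $\Delta_0:=0$). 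Then $g_k\ge\Delta_k$ holds by construction; $g_k-g_{k-1}\ge-\min(a_k,\beta)$ gives both $\tilde a_k\ge\max(0,a_k-\beta)\ge 0$ and $\tilde a_k\ge a_k-\beta$; and $g_k-g_{k-1}\le\beta$ holds because in the only nontrivial case, $g_k=\Delta_k$, we get $g_k-g_{k-1}\le\Delta_k-\Delta_{k-1}=b_k-a_k\le\beta$ using $g_{k-1}\ge\Delta_{k-1}$. So $\bmtilde a$ defined by $\tilde a_k:=a_k+g_k-g_{k-1}$ already meets every requirement \emph{except} $\sum_i\tilde a_i=M$, which by telescoping is exactly the assertion $g_N=0$.

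The hard part will be showing $g_N=0$. Since the $l=N$ term of the unrolled formula gives $g_N\ge\Delta_N=0$, what is left is the key inequality $\Delta_j\le\sum_{i>j}\min(a_i,\beta)$ for all $j$. I would prove this by a short case distinction. If $a_{j+1}\le\beta$ (or $j=N$), then $\min(a_i,\beta)=a_i$ for every $i>j$ because $\bm a$ is non-increasing, so the right-hand side equals $M-S_j\ge T_j-S_j=\Delta_j$ (using $T_j\le M$). If instead $j\ge 1$ and $a_{j+1}>\beta$, then $a_i\ge a_j\ge a_{j+1}>\beta$ for all $i\le j$, so $S_j>j\beta\ge T_j$ (since every $b_i\le\beta$), hence $\Delta_j<0$ and the inequality is trivial; the case $j=0$ is immediate from $\Delta_0=0$. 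With $g_N=0$ established, telescoping gives $\sum_i\tilde a_i=M$, and undoing the initial permutation produces the required vector. (If one is content with $\bmtilde a$ merely a real vector, the same argument works with the plainer step $g_k:=\max(g_{k-1}-\beta,\Delta_k)$, for which $g_N=0$ reduces to $\Delta_j\le(N-j)\beta$, itself immediate from $\Delta_j\le M-S_j\le\frac{N-j}{N}M$ and $\beta=\max_i b_i\ge M/N$.)
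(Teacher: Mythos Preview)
Your proof is correct, and it follows a genuinely different route from the paper's argument. The paper restricts attention to vectors of the special form $\tilde a_i=\sum_{j:f(j)=i}b_j$ for some index map $f$; any such coarse-graining of $\bm b$ automatically has the right total and majorises $\bm b$, so the only issue is the $\ell_\infty$ distance to $\bm a$. They then take the $f$ minimising $\norm{\bmtilde a-\bm a}_\infty$ and show, by a mass-moving contradiction argument, that this minimum cannot exceed $\norm{\bm b}_\infty$: if it did, one could reassign a single preimage $\alpha$ of $f$ to strictly reduce the number of indices attaining the norm. Your approach instead perturbs $\bm a$ directly, tracking the cumulative surplus $g_k$ via a greedy recursion and verifying each constraint in turn. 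The paper's argument is conceptually lighter in that majorisation and normalisation come for free from the ansatz, but it is an existence proof; your construction is fully explicit, yields non-negativity of $\bmtilde a$ as a by-product (which the paper does not need for its application but is nice to have), and would translate directly into an algorithm. Both proofs are short; yours trades the paper's contradiction step for the small case analysis establishing $\Delta_j\le\sum_{i>j}\min(a_i,\beta)$.
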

\begin{proof}
	Consider a function over indices, $f:\mathbb N\to\mathbb N$, and the vector $\bmtilde{a}$ given by the follow action of $f$ on $\bm b$,
	\begin{align}
	\tilde a_i: =\sum_j\left\lbrace b_j \middle| f(j)=i \right\rbrace.
	\end{align}
	Clearly such a mapping can only concentrate a distribution, and so $\bmtilde a\succ \bm b$. Now, among $\bmtilde a$ of the above form we choose that which is closest to $\bm a$ in {$l_\infty$-norm}. Let $i$ be an index at which the {$l_{\infty}$-norm }of $\bmtilde{a}-\bm a$, denoted by $\Delta$, is achieved,
	\begin{align}
	i\in \mathop{\mathrm{arg\,max}}_j\abs{\tilde a_j-a_j}.
	\end{align}
	We are going to assume that \mbox{$\Delta >\norm{\bm b}_\infty$} and show that this would imply that $\bmtilde a$ cannot be optimal, proving $\norm{\bmtilde a-\bm a}_\infty\leq\norm{\bm b}_\infty$ by way of contradiction. 
	
	There are two cases to consider: $\tilde a_i>a_i$ and $\tilde a_i<a_i$. We start with $\tilde a_i>a_i$. As $\tilde a_i>0$, there must exist some $\alpha$ such that $f(\alpha)=i$ and $b_\alpha> 0$. As $\bm a\neq \bmtilde a$ and \mbox{$\sum_k a_k=\sum_k \tilde a_k$}, there must exist a $k$ such that {$\tilde a_k<a_k$}. Consider changing the map to from $f(\alpha)=i$ to $f(\alpha)=k\neq i$. This has the effect of lowering $\tilde{a}_i$ by $b_\alpha$ and raising $\tilde a_k$ by the same amount. Given that $b_\alpha<\Delta$ by assumption, this means that $\tilde a_i-a_i$ changes from $\Delta$ to $(0,\Delta)$, and $\tilde a_k-a_k$ changes from $[-\Delta,0)$ to $(-\Delta,\Delta)$. As such, $\abs{\tilde a_i-a_i}$ and $\abs{\tilde a_j-a_j}$ are now both strictly smaller than $\Delta$. Since all other entries of $\bmtilde{a}$ are unchanged, we have reduced the number of indices $j$ such that $\abs{\tilde a_j-a_j}=\Delta$ by at least one. Similarly for the case $\tilde a_i<a_i$ one can make an analogous argument by changing $f(\alpha)=k\neq i$ to $f(\alpha)=k$ for some $k$ such that $\tilde a_k> a_k$. Iterating this we can keep decreasing the number of indices at which the norm is achieved, eventually giving us $\abs{\tilde a_j-a_j}<\Delta$ for all $j$, i.e.\ $\abs{\bmtilde{a}-\bm{a}}<\Delta$. This shows that the original choice of $\bmtilde a$ was not optimal as assumed, proving $\Delta \leq \norm{\bm b}_\infty$ by contradiction.
\end{proof}

With the use of the above lemma we can now get the desired majorising distribution $\bmtilde P^n\succ \bm Q^n$.

\begin{lem}
	\label{lem:exist-ptilde-1}
	There exists a distribution $\bmtilde P^n$ such that $\bmtilde P^n\succ \bm Q^n$ and
	\begin{align}
	\abs{\tilde P^n_j-S_j^n}\leq 1/K_n(\mu+\zeta)	
	\end{align}
	for all $j\in B$.
\end{lem}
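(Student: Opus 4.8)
The plan is to build $\bmtilde P^n$ out of two pieces obtained by separately \emph{concentrating} the ``tail'' and the ``head'' of the sorted total target $\bm Q^{n\downarrow}$, and then to observe that the combination is itself a concentration of $\bm Q^n$. Set $m:=\sum_{k\in B'}Q^{n\downarrow}_k$, so that by construction of $\bm S^n$ we have $\sum_{j\in B}S^n_j=m$ while $S^n_l=1-m$ for the single index $l\notin B$. Write $\bm Q^n_T$ for the sub-vector of $\bm Q^{n\downarrow}$ supported on the indices of $B'$ (total mass $m$), $\bm Q^n_H$ for the sub-vector on $\bar{B'}$ (total mass $1-m$), and denote by $\bm S^n_B$ the restriction of $\bm S^n$ to $B$, a non-negative vector of total mass $m$.

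First I would handle the tail by applying \cref{lem:determinstic map} to $\bm a:=\bm S^n_B$ and $\bm b:=\bm Q^n_T$, which have equal total mass $m$. This produces a vector $\bmtilde a$ with $\sum_k\tilde a_k=m$, $\bmtilde a\succ\bm Q^n_T$, and $\norm{\bmtilde a-\bm S^n_B}_\infty\le\norm{\bm Q^n_T}_\infty$. Since $B'=\lbrace K_n(\mu+\zeta),\dots\rbrace$ and $\bm Q^{n\downarrow}$ is non-increasing, $\norm{\bm Q^n_T}_\infty=Q^{n\downarrow}_{K_n(\mu+\zeta)}$, and because the $K_n(\mu+\zeta)$ largest entries of a probability distribution sum to at most $1$, this is at most $1/K_n(\mu+\zeta)$.

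The head is handled trivially: map every index of $\bar{B'}$ to the single index $l$, turning $\bm Q^n_H$ into the vector $\bm v$ whose only non-zero entry is $v_l=1-m$; a one-point mass majorises any distribution of the same total mass, so this is a concentration of $\bm Q^n_H$. I then set $\bmtilde P^n:=\bmtilde a+\bm v$, which is non-negative with total mass $m+(1-m)=1$. Crucially, $\bmtilde P^n$ is precisely the image of $\bm Q^{n\downarrow}$ under the index map that acts on $\bar{B'}$ by sending everything to $l$ and acts on $B'$ by the concentrating map underlying \cref{lem:determinstic map}; the two index blocks $\bar{B'}$ and $B'$ are disjoint, so this is a well-defined function, and (as in the proof of \cref{lem:determinstic map}) a pushforward under any function only concentrates, hence $\bmtilde P^n\succ\bm Q^{n\downarrow}$, i.e.\ $\bmtilde P^n\succ\bm Q^n$. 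Finally, for $j\in B$ the spike does not contribute since $l\notin B$, so $\tilde P^n_j=\tilde a_j$ and $S^n_j=(\bm S^n_B)_j$, whence $\abs{\tilde P^n_j-S^n_j}\le\norm{\bmtilde a-\bm S^n_B}_\infty\le 1/K_n(\mu+\zeta)$, as claimed.

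The routine parts are the elementary bound $Q^{n\downarrow}_{K_n(\mu+\zeta)}\le 1/K_n(\mu+\zeta)$ and the mass bookkeeping. The step that deserves care is the gluing: one must recognise that concentrating the head and the tail independently still yields a genuine distribution that majorises \emph{all} of $\bm Q^n$, which is exactly where the disjointness of the blocks $\bar{B'}$ and $B'$ (together with $l\notin B$) is used. This split is also what buys the sharp bound $1/K_n(\mu+\zeta)$ on $B$, rather than the useless bound $\norm{\bm Q^n}_\infty=Q^{n\downarrow}_1$ that a single crude application of \cref{lem:determinstic map} to $(\bm S^n,\bm Q^n)$ would give.
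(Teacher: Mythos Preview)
Your proof is correct and follows essentially the same approach as the paper: apply \cref{lem:determinstic map} to the pair $(\bm S^n|_B,\bm Q^{n\downarrow}|_{B'})$, put the remaining mass on a single index $l\notin B$, and bound $\norm{\bm Q^{n\downarrow}|_{B'}}_\infty$ by $1/K_n(\mu+\zeta)$ via normalisation. The only stylistic difference is in the gluing step: the paper obtains $\bmtilde P^n\succ\bm Q^n$ by the abstract fact that majorisation distributes over direct sums ($\bm\alpha_1\succ\bm\beta_1$ and $\bm\alpha_2\succ\bm\beta_2$ imply $\bm\alpha_1\oplus\bm\alpha_2\succ\bm\beta_1\oplus\bm\beta_2$), whereas you reach inside the proof of \cref{lem:determinstic map} to exhibit $\bmtilde P^n$ as a single pushforward of $\bm Q^{n\downarrow}$; both arguments are valid, though the paper's version uses only the \emph{statement} of \cref{lem:determinstic map} rather than its internal construction.
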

\begin{proof}
	The idea here is to apply \cref{lem:determinstic map} to the restriction of each distribution to its corresponding bin. Specifically if we take $\bm{a}:=\left.\bm S^n\right|_B$ and $\bm{b}:=\left.\bm Q^{n\downarrow}\right|_{B'}$, then \cref{lem:determinstic map} gives us a vector $\bmtilde a$ such that $\bmtilde a\succ \left.\bm Q^{n\downarrow}\right|_{B'}$ and 
	\begin{align}
	\norm{\bmtilde a-\bm a}_\infty 
	\leq \norm{\bm b}_\infty&=\norm{\left.\bm Q^{n\downarrow}\right|_{B'}}_\infty=\max_{j\in B'} Q^{n\downarrow}_j\nonumber\\
	&=Q^{n\downarrow}_{K_n(\mu+\zeta)}\leq 1/K_n(\mu+\zeta),
	\end{align}
	where the final inequality follows from normalisation of $\bm Q^n$. We now define our majorising distribution within bin $B$ as $ \bmtilde P^n\big|_B:=\bmtilde{a}$, so that $|\tilde P^n_j-S^n_j|\leq1/K_n(\mu+\zeta)$ for any $j\in B$ as desired; and, once again, in order to normalise the distribution we also define $\tilde P^n_l=1-\sum_i \tilde a_i$ for some arbitrary $l\notin B$. 
	
	The fact that $\bmtilde{a}$ majorises the restriction of $\bm Q^{n\downarrow}$ to $B'$, together with the sharpness of $\bmtilde P^n$ outside of $B$, gives
	\begin{align}
	\left.\bmtilde P^n\right|_B\succ \left.\bm Q^{n\downarrow}\right|_{B'}
	\quad\text{and}\quad
	\left.\bmtilde P^n\right|_{\bar B}\succ \left.\bm Q^{n\downarrow}\right|_{\bar B'}.
	\end{align}
	Next, we note that majorisation spreads over direct sum, i.e.\ $\bm \alpha_1\succ \bm \beta_1$ and $\bm \alpha_2\succ \bm \beta_2$ implies $\bm \alpha_1\oplus \bm \alpha_2\succ \bm \beta_1\oplus \bm \beta_2$. This can be seen by using Theorem~\ref{thm:major} (i.e., the equivalence of majorisation relation between two distribution with the existence of a bistochastic map between them), and noticing that bistochasticity is preserved under direct sum. Applying this to $\bmtilde P^n$ gives the desired majorisation property:
	$\bmtilde P^n \succ \bmtilde P^{n}|_{B} \oplus \bmtilde P^{n}|_{\overline B}
	\succ \bm Q^{n\downarrow}|_{B'} \oplus \bm Q^{n\downarrow}|_{\overline {B'}}\succ \bm Q^n$.
\end{proof}

\paragraph*{Infidelity.}

By now we have proven the existence of a majorising distribution $\bmtilde P^n\succ\bm Q^n$ and bounded its distance from the scaled distribution $\bm S^n$ on the restriction to $B$. The final step involves bounding the infidelity $\delta(\bmtilde P^n,\bm P^n)$. To achieve this we will first show that the closeness of $\bm S^n$ and $\bmtilde P^n$ on $B$, as given in \cref{lem:exist-ptilde-1}, allows us to bound $F(\bmtilde P^n,\bm P^n)$ in terms of $\bm S^n$.

\begin{lem}
	\label{lem:partialfid}
	Asymptotically, the fidelity between the majorising distribution $\bmtilde P^n$ and the total initial distribution $\bm P^n$ can be bounded as follows
	\begin{align}
	\liminf\limits_{n\to\infty}{F\left(\bmtilde{P}^n,\bm P^n\right)}\geq \liminf\limits_{n\to\infty}\left(\sum_{j\in B}\sqrt{S^n_jP^n_j}\right)^2\!\!.
	\end{align}
\end{lem}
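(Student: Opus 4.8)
The plan is to discard everything outside the bin $B$ in the fidelity sum, and then to replace the majorising distribution $\bmtilde P^n$ by the scaled distribution $\bm S^n$ at the cost of a multiplicative factor that tends to $1$. Throughout, fidelities are computed with the arguments sorted, which is permitted by the remark of Section~\ref{sec:approximate}, and I use the distribution $\bmtilde P^n$ produced by \cref{lem:exist-ptilde-1}, so that $\abs{\tilde P^n_j-S^n_j}\le 1/K_n(\mu+\zeta)$ for every $j\in B$. Since all summands in $F(\bmtilde P^n,\bm P^n)=\bigl(\sum_j\sqrt{\tilde P^n_j P^n_j}\bigr)^{2}$ are non-negative, restricting the sum to $j\in B$ can only decrease it, so as a first step
\begin{align}
F\bigl(\bmtilde P^n,\bm P^n\bigr)\ \ge\ \Bigl(\sum_{j\in B}\sqrt{\tilde P^n_j P^n_j}\Bigr)^{2}.
\end{align}

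The quantitative heart of the argument is to show that the additive slack $1/K_n(\mu+\zeta)$ is negligible \emph{relative} to the entries it perturbs. Because $V(\bm p)=0$ the state $\bm P^n$ is flat, so $S^n_j=\tfrac1{\abs B}\sum_{k\in B'}Q^{n\downarrow}_k$ is the same for all $j\in B$. By \cref{lem:clt}, $\sum_{k\in B'}Q^{n\downarrow}_k=1-\sum_{k=1}^{K_n(\mu+\zeta)-1}Q^{n\downarrow}_k\to 1-\Phi(\mu+\zeta)>0$, while $\abs B=K_n(\mu+\zeta/2)$; since moreover $K_n(\mu+\zeta)\to\infty$, for all large $n$
\begin{align}
\frac{1/K_n(\mu+\zeta)}{\min_{j\in B}S^n_j}
=\frac{K_n(\mu+\zeta/2)}{K_n(\mu+\zeta)\,\sum_{k\in B'}Q^{n\downarrow}_k}
\ \le\ \frac{2\,e^{-\tfrac{\zeta}{2}\sqrt{V(\bm Q^n)}}}{\sum_{k\in B'}Q^{n\downarrow}_k}\ \longrightarrow\ 0 ,
\end{align}
as $n\to\infty$, because $V(\bm Q^n)=nR\,V(\bm q)\to\infty$. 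Hence $\eta_n:=\min_{j\in B}\tilde P^n_j/S^n_j$ obeys $\eta_n\ge 1-1/\bigl(K_n(\mu+\zeta)\min_{j\in B}S^n_j\bigr)\to 1$, and in particular $\tilde P^n_j>0$ on $B$ once $n$ is large.

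It then remains to factor $\sqrt{\tilde P^n_j/S^n_j}\ge\sqrt{\eta_n}$ out of each term,
\begin{align}
\Bigl(\sum_{j\in B}\sqrt{\tilde P^n_j P^n_j}\Bigr)^{2}
=\Bigl(\sum_{j\in B}\sqrt{\tfrac{\tilde P^n_j}{S^n_j}}\,\sqrt{S^n_j P^n_j}\Bigr)^{2}
\ \ge\ \eta_n\Bigl(\sum_{j\in B}\sqrt{S^n_j P^n_j}\Bigr)^{2},
\end{align}
take $\liminf_{n\to\infty}$ of both sides, and use $\eta_n\to 1$ together with the elementary bound $\liminf(a_nb_n)\ge(\lim a_n)(\liminf b_n)$ for non-negative $b_n$ to conclude.

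The only genuine obstacle is the middle step: one must convert the \emph{additive} perturbation granted by \cref{lem:exist-ptilde-1} into a \emph{multiplicative} $1+o(1)$ error on the $B$-block of $\bm S^n$. This is precisely where the $\zeta/2$ gap between the two bin cut-offs $K_n(\mu+\zeta/2)$ and $K_n(\mu+\zeta)$ earns its keep --- it supplies the exponentially small factor $e^{-\frac{\zeta}{2}\sqrt{V(\bm Q^n)}}$ --- and where one needs the central limit theorem of \cref{lem:clt} to keep $\sum_{k\in B'}Q^{n\downarrow}_k$ bounded away from zero. Everything else is bookkeeping with non-negative sums and the superadditivity of $\liminf$.
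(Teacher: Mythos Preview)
Your proof is correct. Both your argument and the paper's rest on the same decisive estimate, namely that the gap between the bin cut-offs forces $K_n(\mu+\zeta/2)/K_n(\mu+\zeta)$ to vanish exponentially in $\sqrt{V(\bm Q^n)}$, together with \cref{lem:clt} keeping $\sum_{k\in B'}Q_k^{n\downarrow}$ bounded away from zero.

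The route differs in how the additive slack from \cref{lem:exist-ptilde-1} is absorbed. You exploit the flatness of $\bm P^n$ (hence constancy of $S^n_j$ on $B$) to turn the additive error into a uniform multiplicative factor $\eta_n\to 1$ that can be pulled outside the sum. The paper instead uses the elementary inequality $\sqrt{x-y}\geq\sqrt{x}-\sqrt{y}$ to split off an additive error term $\sum_{j\in B}\sqrt{P^{n\downarrow}_j/K_n(\mu+\zeta)}$, which it then bounds by $\sqrt{|B|/K_n(\mu+\zeta)}$. Your version is arguably cleaner here, since it makes the role of flatness explicit; the paper's version, however, does not require $S^n_j$ to be constant on $B$, and this is precisely what allows the same manoeuvre to be recycled verbatim in the general interconversion proof (\cref{lem:achieve-finite}), where $\bm P^n$ is no longer flat. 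So your approach buys transparency in the formation case at the cost of reusability in the general case.
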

\begin{proof}
	First, we apply \cref{lem:exist-ptilde-1} and the fact that $\sqrt{x-y}\geq \sqrt{x}-\sqrt{y}$ for all $x\geq y\geq 0$ to break the fidelity into the desired expression and an error term
	\begin{align}
	\sqrt{F(\bmtilde{P}^n,\bm P^{n\downarrow})}
	&\geq \sum_{j\in B}\sqrt{\tilde P^n_jP^{n\downarrow}_j}\nonumber\\
	&\hspace{-1.25cm}\geq \sum_{j\in B}\sqrt{\max\lbrace S^n_j-1/K_n(\mu+\zeta),0\rbrace P^{n\downarrow}_j}\nonumber\\
	&\hspace{-1.25cm}\geq \sum_{j\in B}\left[\sqrt{ S^n_jP^{n\downarrow}_j}-\sqrt{P^{n\downarrow}_j/K_n(\mu+\zeta)}\right].
	\label{eqn:partialfid}
	\end{align}
	We can express the second error term as
	\begin{align}
	\sum_{j\in B}\sqrt{\frac{P^{n\downarrow}_j}{K_n(\mu+\zeta)}}
	&= \sqrt{\frac{\abs{B}}{K_n(\mu+\zeta)}}\nonumber\\
	&= \sqrt{\frac{K_n(\mu+\zeta/2)}{K_n(\mu+\zeta)}}.
	\end{align}
	Given that $\zeta>0$ is a constant and $V(\bm q)>0$, we have that \mbox{$K_n(\mu+\zeta/2)/K_n(\mu+\zeta)$} is decaying exponentially as $n\to\infty$. Taking the limit inferior of Eq.~\eqref{eqn:partialfid} therefore gives the required bound.
\end{proof}

Using the above result on fidelity, we can now prove achieveability.

\begin{proof}[Proof of \cref{prop:form} (achieveability)]
	Substituting the definition of $\bm S^n$ into \cref{lem:partialfid} gives
	\begin{align}{
		\liminf\limits_{n\to\infty}\!{F\left(\bmtilde{P}^n,\bm P^n\right)}
		&\!\geq\! \liminf\limits_{n\to\infty}\!{\sum_{i\in B'}\!Q^{n\downarrow}_i\sum_{j\in B}\!P^{n\downarrow}_j}.
	}\end{align}
	By applying \cref{lem:clt} we then obtain 
	\begin{subequations}
	\begin{align}
	\sum_{i\in B}P^{n\downarrow}_i&\xrightarrow{n\to\infty} \Phi_{\mu,0}(\mu+{\zeta/2})=1,\\
	\sum_{i\in B'}Q^{n\downarrow}_i&\xrightarrow{n\to\infty} 1-\Phi(\mu+{\zeta}),
	\end{align}
	\end{subequations}
	and therefore
	\begin{align}
	\limsup\limits_{n\to\infty}\delta\left(\bmtilde{P}^n,\bm P^n\right)\leq \Phi(\mu+{\zeta}).
	\end{align}
	Due to the equivalence between pre- and post-majorisation, Lemma~\ref{lem:pre_post_equiv}, the above means that there exists a distribution $\bmtilde{Q}^n$ that is majorised by the total initial state $\bm{P}^n$ and such that
	\begin{align}
	\limsup\limits_{n\to\infty}\delta\left(\bmtilde{Q}^n,\bm Q^n\right)\leq \Phi(\mu+{\zeta}).
	\end{align}
	As this is true for any $\zeta>0$ we can take $\zeta\searrow 0$ and conclude that the optimal infidelity is upper bounded
	\begin{align}
	\limsup\limits_{n\to\infty}\epsilon_0^*(n,R_\mu)\leq \Phi(\mu),
	\end{align}
	which implies a corresponding lower bound on the optimal rate
	\begin{align}
	R^*_0(n,\epsilon)
	\gtrsim
	\frac{1}{D(\bm q)}\Biggl[D(\bm p)
	+\sqrt{\frac{D(\bm p)}{D(\bm q)}\frac{V(\bm q)}{n}}\Phi^{-1}(\epsilon)
	\Biggr],
	\end{align}
	as required.
\end{proof}

\subsubsection{Optimality}
\label{subsubsec:form optimality}

We now turn our attention to a corresponding second-order upper bound on the optimal rate. We will make use of the following lemma from Ref.~\cite{kumagai2017second}, which upper bounds the fidelity between a flat state and a majorising distribution.

\begin{lem}[Lemma~6 of Ref.~\cite{kumagai2017second}]
	\label{lem:bipartite}
	Let $\bm a$, $\bmtilde{a}$ and $\bm b$ be any distributions such that $V(\bm a)=0$ and $\bmtilde{a}\succ \bm b$. Also, let $M\leq \exp H(\bm a)$. Then 
	\begin{align}
	\!\!\sqrt{F(\bmtilde a,\bm a)}
	&\leq \sqrt{\sum_{i\leq N}a^{\downarrow}_i\sum_{j\leq N}b_j^\downarrow}+\sqrt{\sum_{i>N}a^{\downarrow}_i \sum_{j>N}b_j^\downarrow},
	\end{align}
	where $N:=\abs{\left\lbrace i \middle| b_i\geq 1/M \right\rbrace}$.
\end{lem}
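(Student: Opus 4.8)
The plan is to collapse the claimed two--block fidelity bound onto a one--dimensional concave optimisation, and then to invoke the hypothesis $M\le\exp H(\bm a)$ precisely to pin the relevant variable onto the decreasing branch of that optimisation. Write $d:=\exp H(\bm a)$; since $V(\bm a)=0$ forces $\bm a$ to be flat, $d$ is a positive integer and $\bm a$ has exactly $d$ nonzero entries, each equal to $1/d$. I may assume all distributions are sorted non--increasingly, and because $\bm a$ is flat this can only enlarge the left--hand side: $\sqrt{F(\bmtilde a,\bm a)}=\tfrac1{\sqrt d}\sum_{i\in\mathrm{supp}(\bm a)}\sqrt{\tilde a_i}\le\tfrac1{\sqrt d}\sum_{i=1}^{d}\sqrt{\tilde a_i^{\downarrow}}$.

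Next I would split the sum $\sum_{i=1}^{d}\sqrt{\tilde a_i^{\downarrow}}$ at the index $N$ and apply Cauchy--Schwarz to each block separately, using that a block of $m$ nonnegative numbers with total mass $s$ satisfies $\sum\sqrt{\,\cdot\,}\le\sqrt{ms}$. Writing $t:=\sum_{i=1}^{N}\tilde a_i^{\downarrow}$ and bounding $\sum_{i=N+1}^{d}\tilde a_i^{\downarrow}\le 1-t$, this gives $\sum_{i=1}^{d}\sqrt{\tilde a_i^{\downarrow}}\le h(t)$ with $h(t):=\sqrt{Nt}+\sqrt{(d-N)(1-t)}$. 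The function $h$ is concave on $[0,1]$, attains its maximum at $t=N/d$, and is therefore non--increasing on $[N/d,1]$. The majorisation $\bmtilde a\succ\bm b$ supplies, directly from the definition of majorisation, the partial--sum inequality $t\ge\beta$, where $\beta:=\sum_{j\le N}b_j^{\downarrow}$.

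The decisive step --- and the one I expect to require the most care --- is to show that the hypothesis $M\le\exp H(\bm a)$ forces $\beta\ge N/d$, so that $t\ge\beta\ge N/d$ lands us on the decreasing branch of $h$ and hence $h(t)\le h(\beta)$. This is where the assumption is genuinely used: by definition of $N$ each of $b_1^{\downarrow},\dots,b_N^{\downarrow}$ is at least $1/M$, so $\beta\ge N/M$, and $M\le d$ gives $1/M\ge 1/d$, whence $\beta\ge N/M\ge N/d$. (Without this hypothesis $t$ could sit near the maximiser $N/d$ and the whole estimate would degenerate to the trivial $\sqrt F\le1$, so $M\le\exp H(\bm a)$ is exactly the content of the lemma.) It then remains to assemble the pieces: $\sqrt{F(\bmtilde a,\bm a)}\le h(\beta)/\sqrt d=\sqrt{\tfrac Nd\,\beta}+\sqrt{\tfrac{d-N}{d}(1-\beta)}$, and identifying $N/d=\sum_{i\le N}a_i^{\downarrow}$, $(d-N)/d=\sum_{i>N}a_i^{\downarrow}$, $\beta=\sum_{j\le N}b_j^{\downarrow}$ and $1-\beta=\sum_{j>N}b_j^{\downarrow}$ yields the claimed inequality. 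I would finally check the degenerate case $N=d$ (the second block is empty; the hypothesis then forces $\bm b$ to be flat over $d$ entries and the inequality is trivial, while the general argument still goes through with $t=\beta=1$), and note in passing that the argument uses nothing about $\bmtilde a$ beyond $\sum_i\tilde a_i\le1$ and nothing about $\bm a$ beyond its flatness and the size of its support.
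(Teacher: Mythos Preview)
Your proof is correct. The paper itself does not prove this lemma; it merely quotes it as Lemma~6 of Kumagai--Hayashi and uses it as a black box in the optimality argument of \cref{subsubsec:form optimality}. Your argument is a clean self-contained proof: flatten $\bm a$, Cauchy--Schwarz each block to reduce to the concave function $h(t)=\sqrt{Nt}+\sqrt{(d-N)(1-t)}$, use majorisation $\bmtilde a\succ\bm b$ to get $t\ge\beta$, and use $M\le\exp H(\bm a)$ together with $b_j^\downarrow\ge 1/M$ for $j\le N$ to force $\beta\ge N/d$, landing on the decreasing branch of $h$. The check $N\le M\le d$ (from $N/M\le\sum_{j\le N}b_j^\downarrow\le 1$) ensures the identification $\sum_{i\le N}a_i^\downarrow=N/d$ is valid, and the edge case $N=d$ forces $\beta=1$ as you note. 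One small clarification worth making explicit: your first inequality, replacing the support of $\bm a$ by the $d$ largest entries of $\bmtilde a$, is the only place the argument is not an equality, and it is where you silently absorb any mismatch between the orderings of $\bm a$ and $\bmtilde a$.
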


We can now use the above lemma to obtain an optimality bound which matches that given for achieveability.

\begin{proof}[Proof of \cref{prop:form} (optimality)]
	Consider any distribution $\bmtilde{P}^n\succ \bm Q^n$. Now choose $\bmtilde{a}=\bmtilde{P}^n$, $\bm{a}=\bm{P}^n$ and $\v{b}=\bm Q^n$. Also, notice that $M:=K_n(\mu-\zeta)$ satisfies $M\leq \exp H(\v{a})$. Hence, we can apply \cref{lem:bipartite} to upper bound the fidelity,
	\begin{align}
	\sqrt{F(\bmtilde{P}^n,\bm P^n)}&\leq \sqrt{\sum_{i\leq N}P^{n\downarrow}_i\sum_{j\leq N}Q_j^{n\downarrow}}
	\nonumber\\	\label{eq:fidelity_formation_opt}
	&\qquad
	+\sqrt{\sum_{i>N}P^{n\downarrow}_i \sum_{j>N}Q_j^{n\downarrow}},
	\end{align}
	where $N:=\abs{\lbrace i|Q^n_i\geq 1/K_n(\mu-\zeta)\rbrace}$. By the standard central limit theorem \cref{lem:clt-magnitude}, we have
	\begin{align}
	\!\!\sum_{i=1}^{N}Q^{n\downarrow}_i
	&\!=\!\sum_{i}\left\lbrace Q^{n\vphantom\downarrow}_i \middle| Q^{n}_i\!\geq\! \frac{1}{K_n(\mu-\zeta)} \right\rbrace \!\to\!\Phi(\mu-\zeta).\!
	\end{align}
	The normalisation of $\bm Q^n$ gives us that $N\leq K_n(\mu-\zeta)$, and so we can apply \cref{lem:clt} to obtain
	\begin{align}
	\sum_{i=1}^{N}P^{n\downarrow}_i&\leq \sum_{i=1}^{K_n(\mu-\zeta)}P^{n\downarrow}_i\to \Phi_{\mu,0}(\mu-\zeta)=0.
	\end{align}
	Applying these limits to Eq.~\eqref{eq:fidelity_formation_opt} yields
	\begin{align}
	\liminf\limits_{n\to\infty}\delta\left(\bmtilde{P}^n,\bm P^n\right)\geq \Phi(\mu-\zeta)
	\end{align}
	for any $\bmtilde{P}^n\succ \bm Q^n$. Due to the equivalence between pre- and post-majorisation (c.f.\ Lemma~\ref{lem:pre_post_equiv}) the above means that for any distribution $\bmtilde{Q}^n$ that is majorised by the total initial state $\bm{P}^n$ we have
	\begin{align}
		\liminf\limits_{n\to\infty}\delta\left(\bmtilde{Q}^n,\bm Q^n\right)\geq \Phi(\mu-\zeta)
	\end{align}
	Taking $\zeta\searrow0$ this gives a lower bound on the optimal infidelity
	\begin{align}
	\liminf\limits_{n\to\infty}\epsilon_0^*(n,R_\mu)\geq \Phi(\mu),
	\end{align}
	which implies a corresponding upper bound on the optimal rate 
	\begin{align}
		R^*_0(n,\epsilon)
		\lesssim
		\frac{1}{D(\bm q)}\Biggl[D(\bm p)
		+\sqrt{\frac{D(\bm p)}{D(\bm q)}\frac{V(\bm q)}{n}}\Phi^{-1}(\epsilon)
		\Biggr].
	\end{align}
\end{proof}

\subsection{Bistochastic interconversion}
\label{subsec:gen}

Finally we turn to the general case in which neither relative entropy variance is vanishing.
\begin{prop}[Bistochastic interconversion]
	\label{prop:gen}
	For any initial state $\bm p$ and target state $\bm q$ such that $V(\bm p),V(\bm q)>0$, and infidelity $\epsilon\in(0,1)$, the optimal interconversion rate has the second-order expansion
	\begin{align}{
		R^*_0(n,\epsilon)
		\simeq
		\frac{1}{D(\bm q)}\Biggl[D(\bm p)+\sqrt{\frac{V(\bm p)}{n}}Z^{-1}_{1/\nu}(\epsilon)
		\Biggr] ,
	}\end{align}
	where $\nu$ is given in \ref{eq:nu_again}.
\end{prop}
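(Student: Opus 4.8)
The plan is to follow the same two-step (achievability plus optimality) scheme used for \cref{prop:form}, but now with \emph{both} limiting profiles being genuine Gaussians rather than one of them degenerating to a Dirac mass. First I would fix the rate to be $R_\mu(n)$ as in Eq.~\eqref{eq:formation_rate} and invoke \cref{lem:clt}, which says that along the landmarks $K_n(x)$ the total initial distribution $\bm P^n$ has limiting ``Lorenz profile'' $\Phi_{\mu,\nu}$ while the total target $\bm Q^n$ has limiting profile $\Phi=\Phi_{0,1}$. By \cref{lem:pre_post_equiv} the problem then reduces to determining $\lim_{n\to\infty}\epsilon_0^*(n,R_\mu)$, i.e.\ the least infidelity between $\bm P^n$ and any $\bmtilde P^n\succ\bm Q^n$, and I would aim to show that this limit equals $Z_\nu(\mu)$. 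Since $R_\mu$ with $\mu=\sqrt\nu\,Z^{-1}_{1/\nu}(\epsilon)$ coincides, using the duality Eq.~\eqref{eq:duality} and the definition Eq.~\eqref{eq:nu_again} of $\nu$, with $\frac1{D(\bm q)}\bigl[D(\bm p)+\sqrt{V(\bm p)/n}\,Z^{-1}_{1/\nu}(\epsilon)\bigr]$, and since $Z_\nu(\sqrt\nu\,Z^{-1}_{1/\nu}(\epsilon))=\epsilon$, inverting the rate--infidelity relation (monotone in $\mu$) then yields the claimed expansion.

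For achievability I would discretise the maximiser $A_{\mu,\nu}$ of $\mathcal F(A',\Phi'_{\mu,\nu})$ over $A\geq\Phi$, which is explicit for $\nu>1$ as recalled in \cref{sec:rayleigh} and obtained by duality for $\nu<1$. Picking a grid $x_0<\dots<x_m$, I would cut the indices of $\bm P^n$ into bins $B_j=(K_n(x_{j-1}),K_n(x_j)]$ and those of $\bm Q^n$ into bins $B_j'$ whose limiting mass $\Phi(y_j)-\Phi(y_{j-1})$ equals the increment $A_{\mu,\nu}(x_j)-A_{\mu,\nu}(x_{j-1})$ (possible since $A_{\mu,\nu}\geq\Phi$ forces $y_j\geq x_j$), and define a scaled distribution $\bm S^n$ that is flat on each $B_j$ with exactly that mass. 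Applying \cref{lem:determinstic map} bin-by-bin to $(\bm S^n|_{B_j},\,\bm Q^{n\downarrow}|_{B_j'})$ and using that majorisation respects direct sums --- exactly as in \cref{lem:exist-ptilde-1} --- produces a bona fide $\bmtilde P^n\succ\bm Q^n$ that differs from $\bm S^n$ only by an $\ell_\infty$-perturbation of size $O(1/K_n(y_{j-1}))\to0$. A fidelity estimate in the spirit of \cref{lem:partialfid}, together with \cref{lem:clt} and the convergence of the Riemann sums $\sum_j\sqrt{\bigl(A_{\mu,\nu}(x_j)-A_{\mu,\nu}(x_{j-1})\bigr)\bigl(\Phi_{\mu,\nu}(x_j)-\Phi_{\mu,\nu}(x_{j-1})\bigr)}$ to $\int\sqrt{A'_{\mu,\nu}\Phi'_{\mu,\nu}}$, then gives $\limsup_n\delta(\bmtilde P^n,\bm P^n)\leq Z_\nu(\mu)$ once the grid is refined (the intra-bin Cauchy--Schwarz loss vanishing in that refinement). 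By \cref{lem:pre_post_equiv} this shows $\epsilon_0^*(n,R_\mu)\lesssim Z_\nu(\mu)$, hence a lower bound on $R^*_0(n,\epsilon)$.

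For optimality I would bound, for an \emph{arbitrary} $\bmtilde P^n\succ\bm Q^n$, the fidelity $F(\bmtilde P^n,\bm P^n)$ from above. The tool is \cref{lem:bipartite}, which controls $\sqrt{F(\bmtilde a,\bm a)}$ when $\bm a$ is flat and $\bmtilde a\succ\bm b$; since here $\bm P^n$ is not flat, I would first replace it by a staircase $\widehat{\bm P}^n$ with $m$ flat steps supported on the bins $B_j$, noting that $|F(\bmtilde P^n,\bm P^n)-F(\bmtilde P^n,\widehat{\bm P}^n)|$ is controlled by the Hellinger distance between $\bm P^n$ and its staircase, which vanishes as the grid is refined, uniformly in the relevant $n$. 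I would then apply \cref{lem:bipartite} in a nested fashion across the steps, propagating the majorisation constraint $\sum_{i\leq K_n(x_j)}\tilde P^{n\downarrow}_i\geq\sum_{i\leq K_n(x_j)}Q^{n\downarrow}_i\to\Phi(x_j)$ through \cref{lem:clt}. This yields $\liminf_n\delta(\bmtilde P^n,\bm P^n)\geq$ a discretised lower Riemann sum for $Z_\nu(\mu)$; refining the grid and invoking \cref{lem:pre_post_equiv} gives $\epsilon_0^*(n,R_\mu)\gtrsim Z_\nu(\mu)$, i.e.\ the matching upper bound on $R^*_0(n,\epsilon)$. Combining the two bounds and inverting as above finishes the proof.

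The main obstacle is the optimality half: upgrading the single flat-source estimate of \cref{lem:bipartite} to a bound valid for the spread-out source $\bm P^n$. The delicate points are (i) controlling the staircase approximation error of $\bm P^n$ uniformly in $n$, (ii) iterating the two-block bound over the $m$ bins so that the error does not accumulate, and (iii) interchanging the $n\to\infty$ limit with the grid-refinement limit so that the discretised Riemann sums actually converge to $\sup_{A\geq\Phi}\mathcal F(A',\Phi'_{\mu,\nu})=1-Z_\nu(\mu)$. This is precisely the part of the argument genuinely inherited from Ref.~\cite{kumagai2017second}, with the majorisation direction --- and hence the roles of $\bm P^n$ and $\bm Q^n$ --- reversed, which is what turns their $Z_\nu$ into our $Z_{1/\nu}$ after the rescaling of $\mu$.
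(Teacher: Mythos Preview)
Your achievability half matches the paper's: binning, scaled distribution $\bm S^n$, bin-wise use of \cref{lem:determinstic map} to build $\bmtilde P^n\succ\bm Q^n$, and the Riemann-sum fidelity estimate converging to $\int\sqrt{A'\Phi'_{\mu,\nu}}$. The only cosmetic difference is that the paper keeps $A\geq\Phi$ generic and takes the supremum at the end, whereas you fix $A=A_{\mu,\nu}$ from the start.

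Your optimality half has a genuine gap, and it is not what the paper does. You frame the first step as an approximation, controlling $|F(\bmtilde P^n,\bm P^n)-F(\bmtilde P^n,\widehat{\bm P}^n)|$ by a Hellinger distance between $\bm P^n$ and a staircase that ``vanishes as the grid is refined, uniformly in $n$''; but for any fixed grid the sorted entries of $\bm P^n$ within a bin are not approximately constant (their ratio grows without bound in $n$), so no such uniform control holds. The paper avoids this entirely by using the exact Cauchy--Schwarz bound $\sqrt{F(\bmtilde P^n,\bm P^n)}\leq\sum_i\sqrt{(\sum_{j\in B_i}\tilde P^{n\downarrow}_j)(\sum_{j\in B_i}P^{n\downarrow}_j)}$, which coarse-grains into bin masses without any approximation error. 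More seriously, \cref{lem:bipartite} is a two-block bound that requires a globally flat source and global majorisation $\bmtilde a\succ\bm b$; after one cut neither the residual flatness nor the residual majorisation of the pieces is preserved, and you give no mechanism for the ``nested'' iteration. The paper instead invokes two further lemmas from Ref.~\cite{kumagai2017second} that you do not mention: \cref{lem:ratio}, which says $F(\bm c,\bm b)\leq F(\bm a,\bm b)$ whenever $a_i/b_i$ is strictly decreasing and $\sum_{i\leq k}c_i\geq\sum_{i\leq k}a_i$ for all $k$; and \cref{lem:monotonicity}, which selects an interval $[t,t']$ on which $\Phi'/\Phi'_{\mu,\nu}$ is strictly monotone (so the ratio hypothesis holds on the bin masses via Cauchy's mean value theorem) and on which the coarse-grained Gaussian fidelity $\mathcal F_{t,t'}(\Phi',\Phi'_{\mu,\nu})$ already sits within $\zeta$ of $1-Z_\nu(\mu)$. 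Taking $\bm a,\bm b$ to be the limiting bin masses of $\bm Q^n,\bm P^n$ from \cref{lem:clt} and $\bm c$ those of $\bmtilde P^n$ along a Bolzano--Weierstrass subsequence, a single application of \cref{lem:ratio} delivers the bound --- no iteration of \cref{lem:bipartite} is needed. These two lemmas, not a nested \cref{lem:bipartite}, are the part ``genuinely inherited from Ref.~\cite{kumagai2017second}''.
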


The proof is similar to that of formation, so it will also utilise many of the ideas inspired by Ref.~\cite{kumagai2017second}. The main complication is that for the general interconversion problem the binning of indices is more elaborate: we now have two \emph{sets} of bins instead of two \emph{individual} bins, and we need to introduce a function $A$ which controls the relative placement of these bins. Once again we will break the proof into both achieveability and optimality bounds.

\subsubsection{Achieveability}

\paragraph*{Sketch of proof.} 

As with formation, the idea here will be to give an explicit construction of $\bmtilde{P}^n\succ \bm Q^n$ which is close to $\bm P^n$. Again, due to the equivalence of pre- and post-majorisation, this will prove that there exists a distribution $\bmtilde{Q}^n$ that is majorised by the total initial state $\bm P^n$ and is close to the total target state $\bm Q^n$. We will start by introducing two sets of bins for each distribution. Using these bins, we will once again construct a \emph{scaled distribution} $\bm S^n$ which reflects the fine-grained features of $\bm P^n$ (same shape within corresponding bins) and coarse-grained features of $\bm Q^n$ (same mass within corresponding bins). We will then show that $\bm S^n$ necessarily lies close to a \emph{majorising distribution} $\bmtilde{P}^n$. Finally we will analyse the infidelity of this distribution with respect to the total initial distribution $\bm{P}^n$ and, by taking the appropriate limits of the parameters in our construction, prove the desired achieveabilty bound of \cref{prop:gen}.

In Section~\ref{subsec:form} our construction was parameterised by a single slack parameter $\zeta>0$. Here, we will have three parameters: $\lambda>0$, $I\in \mathbb{N}$, and a monotone continuously differentiable function $1\geq A\geq \Phi$ pointwise. The parameter $\lambda$ will control the width of our bins, $I$ the number of bins, and $A$ the relative placements of the two sets of bins.

\paragraph*{Binning.}

For $-I\leq i< I$ we define our two sets of bins as
\begin{subequations}
\begin{align}
B_i&:=\lbrace K_n(x_{{i}}),\dots,K_n(x_{i+1})-1\rbrace,\\
B_i'&:=\lbrace K_n(y_{{i}}),\dots,K_n(y_{i+1})-1\rbrace,
\end{align}
\end{subequations}
where the two sequences are defined by
\begin{align}
x_i:=\lambda\frac{i-1}{I}\quad\text{and}\quad
y_i:=\Phi^{-1}\left(A\left(\lambda\frac{i+1}{I}\right)\right)
\end{align}
for $-I\leq i\leq I$. We will consider $B_i$ as bins on the indices of $\bm P^n$ and $B_i'$ on those of $\bm Q^n$. We note that $A\geq \Phi$ implies $y_{i}\geq x_{i+1}+\lambda/I$, resulting in $B_i$ being gapped away from $B_i'$, i.e., all indices belonging to $B_i'$ are much larger than those belonging to $B_i$. This choice plays a role analogous to that of the slack parameter $\zeta$ for the bins in \cref{subsec:form}. For convenience we also define 
\begin{align}
B:=\bigcup_{i=-I}^{I-1}B_i\quad\text{and}\quad
B':=\bigcup_{i=-I}^{I-1}B_i'
\end{align}
to be the union of bins, and $\overline{B}$ and $\overline{B'}$ to be the corresponding complements.

\paragraph*{Scaled distribution $\bm S^n$.}

As in \cref{subsec:form}, we now define $\bm S^n$ in each bin $B_i$ to have the shape of $\bm P^n$ within $B_i$, but the mass of $\bm Q^n$ within $B_i'$. As the bins $B_i$ are all disjoint, for any $j\in B$ there exists a unique $-I\leq i<I$ such that $j\in B_i$. For such indices we define $\bm S^n$ as
\begin{align}
S^n_j:=\frac{\sum_{k\in B_{i}'} Q^{n\downarrow}_k}{\sum_{k\in B_i} P^{n\downarrow}_k}\cdot P^{n\downarrow}_j.
\end{align}
We normalise $\bm S^n$ by taking $S^n_l:=1-\sum_{j\in B}S^n_j$ for some arbitrary $l\notin B$.

\paragraph*{Majorising distribution $\bmtilde P^n$.}

We now want to prove a result analogous to \cref{lem:exist-ptilde-1}: the existence of a distribution that simultaneously majorises the total target distribution $\bm Q^n$ and is close to $\bm S^n$ (within each bin).

\begin{lem}[Existence of a majorising distribution]
	\label{lem:exist-ptilde-2}
	There exists a distribution $\bmtilde P^n$ such that $\bmtilde P^n\succ \bm Q^n$ and
	\begin{align}
	\abs{\tilde P_j^n-S_j^n}\leq 1/K_n(y_{i})
	\end{align}
	for all $j\in B_i$ and $-I\leq i<I$.
\end{lem}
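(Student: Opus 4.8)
The plan is to repeat, bin by bin, the construction used to prove Lemma~\ref{lem:exist-ptilde-1}, so that the only genuinely new work is combinatorial bookkeeping. Fix $-I\le i<I$ and apply Lemma~\ref{lem:determinstic map} to $\bm a:=\bm S^n|_{B_i}$ and $\bm b:=\bm Q^{n\downarrow}|_{B_i'}$. Its hypothesis — equal total mass — holds by the very definition of $\bm S^n$: scaling $\bm P^{n\downarrow}|_{B_i}$ so that its mass equals $\sum_{k\in B_i'}Q^{n\downarrow}_k$ means precisely that $\sum_{j\in B_i}S^n_j=\sum_{k\in B_i'}Q^{n\downarrow}_k$. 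Lemma~\ref{lem:determinstic map} then hands us a vector $\bmtilde a^{(i)}$ of the same mass as $\bm S^n|_{B_i}$, with $\bmtilde a^{(i)}\succ\bm Q^{n\downarrow}|_{B_i'}$ and $\norm{\bmtilde a^{(i)}-\bm S^n|_{B_i}}_\infty\le\norm{\bm Q^{n\downarrow}|_{B_i'}}_\infty$. To control the last quantity I would observe that, as $\bm Q^{n\downarrow}$ is non-increasing, $\max_{k\in B_i'}Q^{n\downarrow}_k=Q^{n\downarrow}_{K_n(y_{i})}$, and that the first $K_n(y_{i})$ entries of a probability vector sum to at most $1$, so their smallest entry $Q^{n\downarrow}_{K_n(y_{i})}$ is at most $1/K_n(y_{i})$. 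This already gives $\abs{\tilde a^{(i)}_j-S^n_j}\le 1/K_n(y_{i})$ for every $j\in B_i$.

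Next I would glue the pieces together: set $\bmtilde P^n|_{B_i}:=\bmtilde a^{(i)}$ for all $i$ and put the residual mass $1-\sum_{j\in B}S^n_j$ on a single index $l\notin B$; this is admissible since $\sum_{j\in B}S^n_j=\sum_{k\in B'}Q^{n\downarrow}_k\le 1$. The pointwise bound in the statement is then immediate from the previous paragraph. For majorisation, decompose $\bmtilde P^n=\bigl(\bigoplus_i\bmtilde P^n|_{B_i}\bigr)\oplus\bmtilde P^n|_{\overline B}$. On each block $i$ we have $\bmtilde P^n|_{B_i}\succ\bm Q^{n\downarrow}|_{B_i'}$ with equal masses; on the remaining block, $\bmtilde P^n|_{\overline B}$ is a single spike, hence majorises $\bm Q^{n\downarrow}|_{\overline{B'}}$, and the two masses again agree because both $\bmtilde P^n$ and $\bm Q^n$ are normalised (the mass of $\bmtilde P^n$ inside $B$ equals $\sum_{k\in B'}Q^{n\downarrow}_k$ by mass-preservation in Lemma~\ref{lem:determinstic map}). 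Since bistochasticity is closed under direct sums, majorisation spreads over the direct sum exactly as in the proof of Lemma~\ref{lem:exist-ptilde-1}, so $\bmtilde P^n\succ\bigl(\bigoplus_i\bm Q^{n\downarrow}|_{B_i'}\bigr)\oplus\bm Q^{n\downarrow}|_{\overline{B'}}$; the right-hand side is merely a relabelling of $\bm Q^n$ and therefore majorisation-equivalent to it, giving $\bmtilde P^n\succ\bm Q^n$.

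The point I expect to need the most care is the mass accounting: one has to verify that each pair $(B_i,B_i')$ carries matched mass (so that Lemma~\ref{lem:determinstic map} applies) and likewise that the complement pair $(\overline B,\overline{B'})$ does (so that the direct-sum majorisation step goes through). Beyond that, no new analytic estimate is required here — the quantitative central-limit behaviour of the bin masses is only needed later, when bounding the infidelity $\delta(\bmtilde P^n,\bm P^n)$ — so this lemma reduces to an essentially combinatorial repackaging of the formation-case argument.
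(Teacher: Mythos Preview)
Your proposal is correct and follows essentially the same route as the paper: apply Lemma~\ref{lem:determinstic map} bin by bin to the pairs $(\bm S^n|_{B_i},\bm Q^{n\downarrow}|_{B_i'})$, bound $\norm{\bm Q^{n\downarrow}|_{B_i'}}_\infty\le 1/K_n(y_i)$ via normalisation, place the residual mass on a single index outside $B$, and then assemble the global majorisation via the direct-sum argument. Your mass-accounting remarks match exactly what the paper relies on, and no additional ideas are needed.
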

\begin{proof}
	The proof is analogous to that of \cref{lem:exist-ptilde-1}, with an application of \cref{lem:determinstic map} for each pair $\bm S^n|_{B_i}$ and $\bm Q^n|_{B_i'}$, with $-I\leq i<I$. This gives us $\bmtilde{P}^n$ such that for all $-I\leq i<I$ and $j\in B_i$ it is close to $\bm S^n$
	\begin{align}
	\abs{\tilde P_j^n-S_j^n}
	\leq \max_{j\in B_i'}Q_j^{n\downarrow}\leq 1/K_n(y_i),
	\end{align}
	and possesses the majorisation properties
	\begin{align}
	\bmtilde P^n|_{B_i}&\succ \bm Q^{n\downarrow}|_{B_i'}\quad\text{and}\quad
	\bmtilde P^n|_{\bar B}\succ \bm Q^{n\downarrow}|_{\bar{B'}}.
	\end{align}
	Splitting the majorisation across the direct sum, as explained in the proof of \cref{lem:exist-ptilde-1}, gives us the desired overall majorisation
	\begin{align}
	\bmtilde P^n
	&\succ \bigoplus_{i=-I}^{I-1}\bmtilde P^{n}|_{B_i} \oplus \bmtilde P^n|_{\overline B}\nonumber\\
	&\succ \bigoplus_{i=-I}^{I-1}\bm Q^{n\downarrow}|_{B_i'} \oplus \bm Q^{n\downarrow}|_{\overline {B'}}\succ \bm Q^n.
	\end{align}	
\end{proof}

\paragraph*{Infidelity.}

The next step involves bounding the fidelity between the total initial state $\bm P^n$ and majorising distribution $\bmtilde P^n$ given by the above construction. We will start by bounding the fidelity for a fixed set of parameters $A$, $\lambda$ and $I$. 

\begin{lem}
	\label{lem:achieve-finite}
	For any monotone continuously differentiable function \mbox{$1\geq A\geq \Phi$}, $\lambda\geq0$, and $I\in\mathbb{N}$ there exists a sequence of distributions $\bmtilde P^n\succ \bm Q^n$ such that
	\begin{align}
	&\liminf\limits_{n\to \infty} {F\left(\bmtilde{P}^n,\bm{P}^{n\downarrow} \right)}\nonumber\\
	&\qquad\geq\left(\int_{-\lambda}^{\lambda}
	\sqrt{A'\left(x+\frac \lambda I\right)\Phi_{\mu,\nu}'\left(x-\frac \lambda I\right)}\,\mathrm dx\!\right)^{\!2}\!\!,
	\end{align}
	with the prime superscript in $A'$ and $\Phi_{\mu,\nu}'$ denoting a derivative.
\end{lem}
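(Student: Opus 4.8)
The goal is to pass from the single-$n$ fidelity bound of Lemma~\ref{lem:partialfid} (adapted to the multi-bin setting) to an integral over the limiting Gaussian densities, by recognising the sum $\sum_{j\in B}\sqrt{S^n_j P^{n\downarrow}_j}$ as a Riemann-type sum. First I would mimic the proof of Lemma~\ref{lem:partialfid}: using Lemma~\ref{lem:exist-ptilde-2} and the elementary inequality $\sqrt{x-y}\ge\sqrt{x}-\sqrt{y}$ (for $x\ge y\ge0$), write
\begin{align}
\sqrt{F(\bmtilde P^n,\bm P^{n\downarrow})}\ge \sum_{i=-I}^{I-1}\sum_{j\in B_i}\sqrt{S^n_j P^{n\downarrow}_j}-\sum_{i=-I}^{I-1}\sqrt{\frac{\abs{B_i}}{K_n(y_i)}},
\end{align}
and argue as before that each error term $\sqrt{\abs{B_i}/K_n(y_i)}=\sqrt{K_n(x_{i+1})/K_n(y_i)}$ decays exponentially, since $y_i\ge x_{i+1}+\lambda/I$ and $V(\bm q)>0$; as there are only finitely many bins ($I$ fixed), the whole error sum vanishes in the limit.

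**Reducing each bin to masses.** Within a single bin $B_i$, substitute the definition of $\bm S^n$ to obtain
\begin{align}
\sum_{j\in B_i}\sqrt{S^n_j P^{n\downarrow}_j}=\sqrt{\frac{\sum_{k\in B'_i}Q^{n\downarrow}_k}{\sum_{k\in B_i}P^{n\downarrow}_k}}\cdot\sum_{j\in B_i}P^{n\downarrow}_j=\sqrt{\Bigl(\textstyle\sum_{k\in B'_i}Q^{n\downarrow}_k\Bigr)\Bigl(\textstyle\sum_{j\in B_i}P^{n\downarrow}_j\Bigr)}.
\end{align}
Now Lemma~\ref{lem:clt} (applied to $\bm P^n$ with rate $R_\mu$, giving the limit $\Phi_{\mu,\nu}$, and to $\bm Q^n$, giving $\Phi$) lets me evaluate each partial mass: $\sum_{j\in B_i}P^{n\downarrow}_j\to\Phi_{\mu,\nu}(x_{i+1})-\Phi_{\mu,\nu}(x_i)$ and $\sum_{k\in B'_i}Q^{n\downarrow}_k\to\Phi(y_{i+1})-\Phi(y_i)=A(\lambda\tfrac{i+2}{I})-A(\lambda\tfrac{i+1}{I})$, using the definition $y_i=\Phi^{-1}(A(\lambda\tfrac{i+1}{I}))$. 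Thus in the limit
\begin{align}
\liminf_{n\to\infty}\sqrt{F(\bmtilde P^n,\bm P^n)}\ge \sum_{i=-I}^{I-1}\sqrt{\Bigl(A(\lambda\tfrac{i+2}{I})-A(\lambda\tfrac{i+1}{I})\Bigr)\Bigl(\Phi_{\mu,\nu}(x_{i+1})-\Phi_{\mu,\nu}(x_i)\Bigr)}.
\end{align}

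**From the discrete sum to the integral.** The last step is to recognise the right-hand side as a lower Riemann sum converging (as the bin width $\lambda/I$ is what controls the mesh — though here $I$ is finite, so this is a fixed finite sum, and the integral form is the statement being claimed for fixed $I,\lambda,A$) to the claimed integral. Concretely, by the mean value theorem each factor equals a derivative times the interval length $\lambda/I$: $A(\lambda\tfrac{i+2}{I})-A(\lambda\tfrac{i+1}{I})=A'(\xi_i)\cdot\tfrac\lambda I$ with $\xi_i$ in the interval, and similarly $\Phi_{\mu,\nu}(x_{i+1})-\Phi_{\mu,\nu}(x_i)=\Phi'_{\mu,\nu}(\zeta_i)\cdot\tfrac\lambda I$. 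Since $A'$ and $\Phi'_{\mu,\nu}$ are continuous and the intervals for the $A$-factor are shifted by roughly $+\lambda/I$ relative to those for the $\Phi_{\mu,\nu}$-factor (the $i$-th $B'$-bin sits above the $i$-th $B$-bin, so the $A$-argument $x+\lambda/I$ versus $\Phi_{\mu,\nu}$-argument $x-\lambda/I$ bookkeeping accounts for which endpoints were used), the sum $\sum_i \sqrt{A'(\xi_i)\Phi'_{\mu,\nu}(\zeta_i)}\,\tfrac\lambda I$ is (up to the explicit $\pm\lambda/I$ shifts kept in the statement to make the inequality safe for all $n$) a Riemann sum for $\int_{-\lambda}^{\lambda}\sqrt{A'(x+\lambda/I)\Phi'_{\mu,\nu}(x-\lambda/I)}\,\mathrm dx$. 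Squaring yields the claim.

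**Expected main obstacle.** The genuinely delicate bookkeeping is the alignment of the two interleaved sets of bins: keeping track of exactly which endpoints $x_i, y_i$ enter each partial-mass limit, and verifying that the gap condition $y_i\ge x_{i+1}+\lambda/I$ (which follows from $A\ge\Phi$) is precisely what makes both the exponential decay of the error terms work and the shifted arguments $x\pm\lambda/I$ appear with the correct sign so that the resulting bound is a valid lower bound rather than an approximation. The central-limit inputs (Lemma~\ref{lem:clt}) and the inequality manipulations are routine; marshalling the index arithmetic so that the finite sum is manifestly bounded below by the stated integral is where the care is needed.
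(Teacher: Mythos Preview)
Your overall strategy matches the paper's: extend Lemma~\ref{lem:partialfid} to the multi-bin setting via Lemma~\ref{lem:exist-ptilde-2} (correct, with the error terms $\sqrt{\abs{B_i}/K_n(y_i)}$ vanishing exactly because $y_i\ge x_{i+1}+\lambda/I$ and there are only $2I$ bins), substitute the definition of $\bm S^n$ to reduce each bin to a product of masses, and evaluate those masses with Lemma~\ref{lem:clt}. The limits you obtain,
\[
\sum_{j\in B_i}P^{n\downarrow}_j\to\Phi_{\mu,\nu}(x_{i+1})-\Phi_{\mu,\nu}(x_i),
\qquad
\sum_{k\in B'_i}Q^{n\downarrow}_k\to A\bigl(\tfrac{\lambda(i+2)}{I}\bigr)-A\bigl(\tfrac{\lambda(i+1)}{I}\bigr),
\]
are correct.

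The gap is in your final step. The lemma asserts an inequality for \emph{fixed} $I$ and $\lambda$, not an approximation that becomes exact as a mesh shrinks. Your mean-value-theorem argument produces
$\sqrt{A'(\xi_i)\,\Phi'_{\mu,\nu}(\zeta_i)}\cdot\lambda/I$
with $\xi_i\in[\lambda(i{+}1)/I,\lambda(i{+}2)/I]$ and $\zeta_i\in[\lambda(i{-}1)/I,\lambda i/I]$, but there is no reason this dominates $\int_{\lambda i/I}^{\lambda(i+1)/I}\sqrt{A'(x+\lambda/I)\Phi'_{\mu,\nu}(x-\lambda/I)}\,\mathrm dx$: the two mean-value points need not correspond to a common $x$, and even if they did, a single-point evaluation does not bound an integral from below. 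The $\pm\lambda/I$ shifts in the statement are not slack inserted to absorb this discrepancy; they arise because the paper rewrites each finite difference as an integral over the \emph{common} interval $[\lambda i/I,\lambda(i+1)/I]$ via the fundamental theorem of calculus,
\[
A\bigl(\tfrac{\lambda(i+2)}{I}\bigr)-A\bigl(\tfrac{\lambda(i+1)}{I}\bigr)=\int_{\lambda i/I}^{\lambda(i+1)/I}\!\!A'\bigl(x+\tfrac{\lambda}{I}\bigr)\,\mathrm dx,
\quad
\Phi_{\mu,\nu}(x_{i+1})-\Phi_{\mu,\nu}(x_i)=\int_{\lambda i/I}^{\lambda(i+1)/I}\!\!\Phi'_{\mu,\nu}\bigl(x-\tfrac{\lambda}{I}\bigr)\,\mathrm dx,
\]
and then applies Cauchy--Schwarz, $\sqrt{\int f}\,\sqrt{\int g}\ge\int\sqrt{fg}$, termwise. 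That is what turns the sum of square roots of products into a genuine lower bound by the integral, valid for every $I$; summing over $i$ then tiles $[-\lambda,\lambda]$. Replace the Riemann-sum heuristic with this Cauchy--Schwarz step and your argument is complete.
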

\begin{proof}
	The first part of the proof is analogous to the proof of \cref{lem:partialfid}. More precisely using \cref{lem:exist-ptilde-2} (in place of \cref{lem:exist-ptilde-1}) and employing the fact that $B_i$ is gapped away from $B_i'$, we can apply the argument presented there to obtain
	\begin{align}
	\liminf\limits_{n\to\infty}F\left(\bmtilde{P}^n,\bm P^n\right)\geq \liminf\limits_{n\to\infty}\left(\sum_{j\in B}\sqrt{S^n_jP^n_j}\right)^2.
	\end{align}
	Inserting the definition of the scaled distribution $\bm S^n$ yields
	\begin{align}
	&\liminf_{n\to\infty} \sqrt{F\left(\bmtilde P^n,\bm{P}^{n\downarrow}\right)}\notag\\
	&~\quad\geq \liminf_{n\to\infty}\sum_{i=-I}^{I-1}\sum_{j\in B_i}\sqrt{S^n_jP^{n\downarrow}_j}\nonumber\\
	&~\quad= \liminf_{n\to\infty}\sum_{i=-I}^{I-1}\sum_{j\in B_i}\sqrt{\frac{\sum_{k\in B_{i}'} Q^{n\downarrow}_k}{\sum_{k\in B_i} P^{n\downarrow}_k}\cdot P^{n\downarrow}_j}\sqrt{P^{n\downarrow}_j}\nonumber\\
	&~\quad= \liminf_{n\to\infty}\sum_{i=-I}^{I-1}\sqrt{\sum_{j\in B_{i}'} Q^{n\downarrow}_j\sum_{k\in B_i}P^{n\downarrow}_k}.
	\end{align}
	Recalling that $\Phi(y_i)=A(x_{i+2})$ and applying \cref{lem:clt} gives
	\begin{subequations}
	\begin{align}
	\lim\limits_{n\to\infty} \sum_{j\in B_i}P^{n\downarrow}_j&=\Phi_{\mu,\nu}(x_{i+1})-\Phi_{\mu,\nu}(x_{i}),\\
	\lim\limits_{n\to\infty} \sum_{j\in B_i'}Q^{n\downarrow}_j&=A(x_{i+3})-A(x_{i+2}).
	\end{align}
	\end{subequations}
	Substituting these into our lower bound on fidelity yields
	\begin{align}
	&\liminf_{n\to\infty} \sqrt{F\left(\bmtilde P^n,\bm{P}^{n\downarrow}\right)}\notag\\
	&\qquad\quad\geq \sum_{i=-I}^{I-1}\sqrt{A(x_{i+3})-A(x_{i+2})}\notag\\
	&\qquad\qquad\quad\qquad\times\sqrt{\Phi_{\mu,\nu}(x_{i+1})-\Phi_{\mu,\nu}(x_i)}\\
	&\qquad\quad\geq \sum_{i=-I}^{I-1}
		\sqrt{A\left(\frac{\lambda(i+2)}{I}\right)-A\left(\frac{\lambda (i+1)}{I}\right)}
	\notag\\
	&\qquad\qquad\quad\qquad\times
		\sqrt{\Phi_{\mu,\nu}\left(\frac{\lambda i}{I}\right)-\Phi_{\mu,\nu}\left(\frac{\lambda (i-1)}{I}\right)
	}.\notag
	\end{align}
	
	Using the differentiability of $A$ and $\Phi_{\mu,\nu}$ we can express these finite differences as integrals
	\begin{align}
	&\liminf_{n\to\infty} \sqrt{F\left(\bmtilde P^n,\bm{P}^{n\downarrow}\right)}\notag
	\\&\qquad\qquad
	\geq \sum_{i=-I}^{I-1}
	{
		\sqrt{\int_{\lambda\frac{i}{I}}^{\lambda\frac{i+1}{I}}A'\left(x+\frac\lambda I\right)\,\mathrm dx}
	}\notag\\
	&\qquad\qquad\qquad\qquad\times
	{
		\sqrt{\int_{\lambda \frac iI}^{\lambda \frac{i+1}{I}}\Phi'_{\mu,\nu}\left(x-\frac\lambda I\right)\,\mathrm dx}
	}.
	\end{align}
	Finally, we apply the Schwarz inequality to arrive at the desired bound
	\begin{align}
	&\liminf_{n\to\infty} \sqrt{F\left(\bmtilde P^n,\bm{P}^{n\downarrow}\right)}\notag
	\\&\qquad
	\geq\!\! \sum_{i=-I}^{I-1}\int_{\lambda \frac i I}^{\lambda\frac{i+1}I}\!\!\!\!
	\sqrt{A'\left(x+\frac \lambda I\right)\Phi'_{\mu,\nu}\left(x-\frac \lambda I\right)}\,\mathrm dx
	\nonumber\\&\qquad
	=\int_{-\lambda }^{\lambda}
	\sqrt{A'\left(x+\frac \lambda I\right)	\Phi'_{\mu,\nu}\left(x-\frac \lambda I\right)}\,\mathrm dx.
	\end{align}
\end{proof}

Now, by taking the appropriate limits of our parameters $A$, $\lambda$ and $I$, we will get the desired achieveability bound on the optimal infidelity, and therefore also on the optimal rate. 
\begin{proof}[Proof of \cref{prop:gen} (achieveability)] 
	By \cref{lem:achieve-finite} we know that there exists a family of distributions $\bmtilde{P}^n$ majorising $\v{Q}^n$ and such that $\liminf\limits_{n\to \infty} F\left(\bmtilde{P}^n,\bm{P}^{n\downarrow} \right)$ is lower-bounded by
	\begin{align}
	\left(\int_{-\lambda}^{\lambda}
	\sqrt{A'\left(x+\frac \lambda I\right)\Phi_{\mu,\nu}'\left(x-\frac \lambda I\right)}\,\mathrm dx\right)^2.
	\end{align}
	Due to the equivalence between pre- and post-majorisation, Lemma~\ref{lem:pre_post_equiv}, this means that there exists a family of distributions $\bmtilde{Q}^n$ that is majorised by the total initial state $\bm{P}^n$ and such that their fidelity with the total target state $\bm Q^n$ is also lower bounded by the above expression, which implies a lower bound on the asymptotic ideal fidelity 
	\begin{align}
		&\liminf\limits_{n\to\infty}\sqrt{1-\epsilon^*_0(n,R_\mu)}
		\notag\\
		&~~\quad\geq
		\liminf\limits_{n\to\infty}	\sqrt{F\left({\bmtilde Q^{n}},{\bm Q^{n\downarrow}}\right)}\nonumber\\
		&~~\quad\geq \int_{-\lambda}^{\lambda}
		\sqrt{A'\left(x+\frac \lambda I\right)\Phi_{\mu,\nu}'\left(x-\frac \lambda I\right)}\,\mathrm dx.
	\end{align}
	As the left hand side is independent of $I$, $\lambda$ and $A$, we can now take the desired limits. Note that the order of limits will be important: first we will take $I\to\infty$, then $\lambda\to\infty$, followed by a supremum over $A$. 
		
	Firstly, we take the limit inferior $I\to\infty$. As a consequence of the fact that $\lambda$ is still finite, together with the continuous differentiability of $A$ and $\Phi_{\mu,\nu}$, we have the point-wise limit
	\begin{align}
		\sqrt{A'\left(x+\frac \lambda I\right)}\sqrt{\Phi_{\mu,\nu}'\left(x-\frac \lambda I\right)}
		\xrightarrow{I\to\infty}
		\sqrt{A'(x)}\sqrt{\Phi_{\mu,\nu}'(x)}.
	\end{align}
	Using the compactness of $[-2\lambda,2\lambda]$ we can apply the dominated convergence theorem to move this limit inside the integral, which gives
	\begin{align}
	&\liminf\limits_{n\to\infty}\sqrt{1-\epsilon^*_0(n,R_\mu)}\notag\\
	&\qquad\geq\liminf\limits_{I\to\infty}\!\int_{-\lambda}^{\lambda}
	\sqrt{A'\left(x+\frac \lambda I\right)\Phi_{\mu,\nu}'\left(x-\frac \lambda I\right)}\,\mathrm dx\!\!\nonumber\\
	&\qquad=\int_{-\lambda}^{\lambda} \lim\limits_{I\to\infty}	\sqrt{A'\left(x+\frac \lambda I\right)\Phi_{\mu,\nu}'\left(x-\frac \lambda I\right)}\,\mathrm dx\nonumber\\
	&\qquad=\int_{-\lambda}^{\lambda}
	\sqrt{A'\left(x\right)\Phi_{\mu,\nu}'(x)}\,\mathrm dx.
	\end{align}
	
	Secondly, we want to take $\lambda\to\infty$. The existence of this limit follows from monotone convergence theorem, which we can apply due to the monotonicity of $A$ and $\Phi_{\mu,\nu}$, together with the boundedness of the continuous fidelity. Taking the limit gives us a bound in terms of the continuous fidelity
	\begin{align}
	&\!\!\!\!\liminf\limits_{n\to\infty}\sqrt{1-\epsilon^*_0(n,R_\mu)}\notag\\
	&\geq\lim\limits_{\lambda\to\infty}  \int_{-\lambda}^{\lambda}
	\sqrt{A'\left(x\right)\Phi_{\mu,\nu}'(x)}\,\mathrm dx\nonumber\\
	&\geq\int_{-\infty}^{\infty}
	\sqrt{A'\left(x\right)\Phi_{\mu,\nu}'(x)}\,\mathrm dx=\sqrt{\mathcal F\left({A'},{\Phi_{\mu,\nu}'}\right)}.
	\end{align}
	
	Lastly, we want to take a supremum over all continuously differentiable monotone functions $1\geq A\geq \Phi$, which gives us the Rayleigh-normal distribution
	\begin{align}
	\liminf\limits_{n\to\infty}\epsilon^*_0(n,R_\mu)&\!\leq\! 1\!-\!\sup_{A\geq \Phi}\mathcal F\left({A'},{\Phi_{\mu,\nu}'}\right)	\!=:\!Z_\nu(\mu).
	\end{align}
	
	Using the above and the duality property of Rayleigh-normal distributions, Eq.~\eqref{eq:duality}, we obtain the lower bound on the optimal rate 
	\begin{align}{
		R^*_0(n,\epsilon)
		\gtrsim
		\frac{1}{D(\bm q)}\Biggl[D(\bm p)
		+\sqrt{\frac{V(\bm p)}{n}}Z^{-1}_{1/\nu}(\epsilon)
		\Biggr].
	}\end{align}
\end{proof}

\subsubsection{Optimality}
\label{subsubsec:gen optimality}

We now proceed to the proof of the optimality of \cref{prop:gen}. To this end, we will employ two lemmas originally proved in Ref.~\cite{kumagai2017second}. 
The idea is to start by showing that, after a particular coarse-graining, the fidelity between $\Phi$ and $\Phi_{\mu,\nu}$ is close to the optimal fidelity $\sup_{A\geq \Phi}\mathcal{F}(A',\Phi'_{\mu,\nu})={1-Z_\nu(\mu)}$.

\begin{lem}[Lemma~17 of Ref.~\cite{kumagai2017second}]
	\label{lem:monotonicity}
	For any \mbox{$\zeta>0$}, there exist real numbers \mbox{$s\leq t\leq t'\leq s'$} such that ${\Phi'(x)}/{\Phi'_{\mu,\nu}(x)}$ is strictly monotone decreasing for \mbox{$x\in(s,s')$} and
	\begin{subequations}
	\begin{align}
	\frac{\Phi(t)}{\Phi_{\mu,\nu}(t)}&=\frac{\Phi'(s)}{\Phi'_{\mu,\nu}(s)},\\ \frac{1-\Phi(t')}{1-\Phi_{\mu,\nu}(t')}&=\frac{\Phi'(s')}{\Phi'_{\mu,\nu}(s')}.
	\end{align}
	\end{subequations}	
	Moreover, if we define $\mathcal{F}_{t,t'}(\cdot,\cdot)$ to be the fidelity of distributions which have been coarse-grained on $x\leq t$ and $x\geq t'$, specifically
	\begin{align}
	{\mathcal{F}_{t,t'}(p,q)}:=&
	\Bigg(\sqrt{\int_{-\infty}^{t}p(x)\,\mathrm{d}x\int_{-\infty}^{t}q(x)\,\mathrm{d}x}\notag\\
	&\quad+\int_{t}^{t'}\sqrt{p(x)q(x)}\,\mathrm dx\notag \\
	&\quad+\sqrt{\int_{t'}^{\infty}p(x)\,\mathrm{d}x\int_{t}^{\infty}q(x)\,\mathrm{d}x}\,\,\Bigg)^2\!,
	\end{align}
	then this coarse-grained fidelity has an upper bound
	\begin{align}
		\mathcal{F}_{t,t'}(\Phi',\Phi_{\mu,\nu}')-\zeta&\leq\sup_{A\geq \Phi}\mathcal{F}(A',\Phi_{\mu,\nu}').
	\end{align} 

\end{lem}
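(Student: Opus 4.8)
The plan is to read \cref{lem:monotonicity} as a statement purely about the three explicit functions $g(x):=\Phi'(x)/\Phi'_{\mu,\nu}(x)$, $h_1(x):=\Phi(x)/\Phi_{\mu,\nu}(x)$ and $h_2(x):=(1-\Phi(x))/(1-\Phi_{\mu,\nu}(x))$, and to reduce the whole analysis to the regime $\nu>1$ using the duality relation~\eqref{eq:duality} (the case $\nu=1$ being checked directly). For $\nu>1$ one has $g=\sqrt{\nu}\,e^{Q}$ with $Q$ a strictly concave quadratic, so $g$ is strictly log-concave and unimodal with a unique maximiser $x^\star$, strictly decreasing on $(x^\star,\infty)$; moreover the explicit Gaussian tails give $h_1(x)/g(x)\to 1/\nu$ as $x\to-\infty$ and $h_2(x)/g(x)\to 1/\nu$ as $x\to+\infty$. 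I would pair these analytic facts with the description of the optimiser recalled in \cref{sec:rayleigh} (from Ref.~\cite{kumagai2017second}): $A_{\mu,\nu}$ equals $\Phi$ on $[\alpha_{\mu,\nu},\infty)$ and $h_1(\alpha_{\mu,\nu})\,\Phi_{\mu,\nu}$ on $(-\infty,\alpha_{\mu,\nu}]$, where $\alpha_{\mu,\nu}$ is the unique root of $g=h_1$, so that
\begin{align}
1-Z_\nu(\mu)=\mathcal{F}\bigl(A_{\mu,\nu}',\Phi_{\mu,\nu}'\bigr)=\Bigl(\sqrt{\Phi(\alpha_{\mu,\nu})\,\Phi_{\mu,\nu}(\alpha_{\mu,\nu})}+\int_{\alpha_{\mu,\nu}}^{\infty}\sqrt{\Phi'(x)\,\Phi_{\mu,\nu}'(x)}\,\mathrm{d}x\Bigr)^{2}.
\end{align}

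The points would then be chosen as $t=s:=\alpha_{\mu,\nu}$ (which satisfies $h_1(t)=g(s)$ by the very definition of $\alpha_{\mu,\nu}$), together with some $t'>\alpha_{\mu,\nu}$ to be fixed in terms of $\zeta$, and $s'$ the unique point of the decreasing branch of $g$ with $g(s')=h_2(t')$; the latter exists and satisfies $s'>t'$ because $h_2(t')<g(t')$ for all large $t'$, as $h_2/g\to 1/\nu<1$. Granting that $\alpha_{\mu,\nu}>x^\star$ — i.e.\ that the crossing of $g$ with $h_1$ already occurs on the decreasing branch of $g$ — the interval $(s,s')=(\alpha_{\mu,\nu},s')$ lies inside $(x^\star,\infty)$, so $g$ is strictly decreasing there and $s\le t\le t'\le s'$. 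It then remains to estimate the coarse-grained fidelity: evaluating $\mathcal{F}_{t,t'}$ with $t=\alpha_{\mu,\nu}$ gives
\begin{align}
\mathcal{F}_{t,t'}(\Phi',\Phi_{\mu,\nu}')=\Bigl(\sqrt{\Phi(\alpha_{\mu,\nu})\,\Phi_{\mu,\nu}(\alpha_{\mu,\nu})}+\int_{\alpha_{\mu,\nu}}^{t'}\sqrt{\Phi'\Phi_{\mu,\nu}'}\,\mathrm{d}x+\sqrt{(1-\Phi(t'))(1-\Phi_{\mu,\nu}(\alpha_{\mu,\nu}))}\Bigr)^{2},
\end{align}
whose inner sum exceeds that of $1-Z_\nu(\mu)$ only by $\sqrt{(1-\Phi(t'))(1-\Phi_{\mu,\nu}(\alpha_{\mu,\nu}))}-\int_{t'}^{\infty}\sqrt{\Phi'\Phi_{\mu,\nu}'}\,\mathrm{d}x$, and both of these vanish as $t'\to\infty$ (the first because $1-\Phi(t')\to 0$, the second because it is a tail of a convergent integral). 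Hence, choosing $t'$ large enough and using continuity of $x\mapsto x^2$, one obtains $\mathcal{F}_{t,t'}(\Phi',\Phi_{\mu,\nu}')\le 1-Z_\nu(\mu)+\zeta=\sup_{A\ge\Phi}\mathcal{F}(A',\Phi_{\mu,\nu}')+\zeta$, which is the claim.

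I expect the main obstacle to be the bookkeeping in the middle step: showing that the two transcendental equations $g(s)=h_1(t)$ and $g(s')=h_2(t')$ can be solved simultaneously with the required ordering $s\le t\le t'\le s'$ and with $g$ strictly monotone decreasing across the whole of $(s,s')$. This rests on the sign fact $\alpha_{\mu,\nu}>x^\star$, which one must extract from the geometry of the crossing of $g$ and $h_1$ (using that $g-h_1$ changes sign exactly once, together with the tail comparison $h_1\sim g/\nu$); this is precisely what is worked out in Ref.~\cite{kumagai2017second}. Everything else is routine: the inequality $\mathcal{F}_{t,t'}\ge\mathcal{F}$ (Cauchy--Schwarz applied to the coarse-grained blocks), which is what makes this quantity a legitimate upper bound in the converse argument; the explicit Gaussian tail asymptotics; and the final $\zeta$-estimate. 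The case $\nu<1$ is obtained by transporting the statement through~\eqref{eq:duality}, and $\nu=1$ is immediate.
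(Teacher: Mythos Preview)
The paper does not give its own proof of this lemma: it is quoted verbatim as ``Lemma~17 of Ref.~\cite{kumagai2017second}'' and used as a black box in the optimality argument of \cref{subsubsec:gen optimality}. There is therefore no in-paper proof to compare your proposal against.

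On the substance of your sketch: the outline is sound and is essentially the strategy of Kumagai--Hayashi (reduce to $\nu>1$ by duality, exploit the log-concavity of $g=\Phi'/\Phi'_{\mu,\nu}$, anchor $s=t$ at the crossing $\alpha_{\mu,\nu}$ of $g$ with $h_1$, push $t'$ far to the right, and absorb the tail remainders into $\zeta$). However, you explicitly defer the one nontrivial geometric fact --- that $\alpha_{\mu,\nu}$ lies on the decreasing branch of $g$, i.e.\ $\alpha_{\mu,\nu}\ge x^\star$ --- back to Ref.~\cite{kumagai2017second}. Since that fact is exactly what drives the existence of $s\le t\le t'\le s'$ with $g$ strictly decreasing on $(s,s')$, your proposal is a roadmap rather than a self-contained proof. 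If independence from \cite{kumagai2017second} is the goal, that is the step you still need to supply; it follows from analysing the sign of $(\Phi/\Phi_{\mu,\nu})'$ via the identity $(\Phi/\Phi_{\mu,\nu})'=\Phi'_{\mu,\nu}\,\Phi_{\mu,\nu}^{-2}\,(g\,\Phi_{\mu,\nu}-\Phi)$ together with the unimodality of $g$ and the tail asymptotics you already mention. A minor point: you evaluate the third block of $\mathcal{F}_{t,t'}$ with the second factor $1-\Phi_{\mu,\nu}(\alpha_{\mu,\nu})$, matching the asymmetric lower limit $t$ (not $t'$) in the lemma's definition; just be aware that this asymmetry is exactly what makes the $t'\to\infty$ estimate work, and that the data-processing inequality $\mathcal{F}_{t,t'}\ge\mathcal{F}$ you invoke at the end does \emph{not} hold for this asymmetric coarse-graining --- but that inequality is not needed for the present lemma, which only claims an upper bound on $\mathcal{F}_{t,t'}$.
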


Notice that $\Phi$ and $\Phi_{\mu,\nu}$ are exactly the distributions that appear in central limit theorem, \cref{lem:clt}. Thus, we would like to relate the fidelity $F(\bm P^n,\bmtilde{P}^n)$ back to the Rayleigh-normal distribution via this coarse-grained fidelity between Gaussians. To be able to argue this for \emph{any} $\bmtilde P^n\succ \bm Q^n$, we will first give a sufficient condition for a distribution $\bm a$ to have the highest possible fidelity with respect to a second distribution $\bm b$ among all distributions satisfying a majorisation-like condition.

\begin{lem}[Lemma~15 of Ref.~\cite{kumagai2017second}]
	\label{lem:ratio}
	Let $\bm a$ and $\bm b$ be probability distributions such that $a_i/b_i$ is strictly decreasing for all $i$. Then, for any distribution $\bm c$ such that
	\begin{align}
		\sum_{i=1}^{k}a_i\leq \sum_{i=1}^{k}c_i\quad\forall k,
	\end{align}
	we have
	\begin{align}
	F({\bm c},{\bm b})\leq F({\bm a},{\bm b}),
	\end{align}
	with equality if and only if $\bm c=\bm a$.
\end{lem}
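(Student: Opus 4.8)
The plan is to unwrap the definition of the fidelity: since $F(\bm c,\bm b)=\bigl(\sum_i\sqrt{c_ib_i}\bigr)^2$ and $F(\bm a,\bm b)=\bigl(\sum_i\sqrt{a_ib_i}\bigr)^2$ with both Bhattacharyya sums non-negative, it suffices to prove $\sum_i\sqrt{c_ib_i}\le\sum_i\sqrt{a_ib_i}$, with equality only when $\bm c=\bm a$. One may assume $a_i>0$ for every index $i$ entering the sums: if $a_i=0$, then combining the hypothesis $\sum_{j\le k}a_j\le\sum_{j\le k}c_j$ with $\sum_j a_j=\sum_j c_j=1$ forces $c_i=0$ as well, so that coordinate can be discarded (and indices with $b_i=0$ contribute nothing to either side and can likewise be dropped). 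Set $g_i:=\sqrt{b_i/a_i}$; the assumption that $a_i/b_i$ is strictly decreasing says exactly that $g_i$ is strictly increasing in $i$.

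The first step is a pointwise tangent-line bound from concavity of $t\mapsto\sqrt t$: for $a_i>0$,
\begin{align}
\sqrt{c_i}\le\sqrt{a_i}+\frac{c_i-a_i}{2\sqrt{a_i}},
\end{align}
with equality if and only if $c_i=a_i$. Multiplying by $\sqrt{b_i}\ge0$ and summing gives
\begin{align}
\sum_i\sqrt{c_ib_i}\le\sum_i\sqrt{a_ib_i}+\frac12\sum_i(c_i-a_i)\,g_i .
\end{align}
The second step is to show that the correction term is non-positive, via summation by parts. Writing $d_i:=c_i-a_i$ and $D_k:=\sum_{i=1}^k d_i$, the hypothesis gives $D_k\ge0$ for every $k$, while $D_N=0$ for the largest index $N$ (both $\bm a,\bm c$ are normalised), so Abel's identity yields
\begin{align}
\sum_{i=1}^N d_ig_i=\sum_{i=1}^{N-1}D_i\,(g_i-g_{i+1})\le 0 ,
\end{align}
since $D_i\ge0$ and $g_i-g_{i+1}<0$. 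Combining the last two displays proves $F(\bm c,\bm b)\le F(\bm a,\bm b)$.

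For the equality clause, if $F(\bm c,\bm b)=F(\bm a,\bm b)$ then both inequalities above are saturated; in particular $\sum_i d_ig_i=0$, and since each term $D_i(g_i-g_{i+1})$ is $\le0$ with $g_i-g_{i+1}$ \emph{strictly} negative, we obtain $D_i=0$ for all $i$, hence $d_i=D_i-D_{i-1}=0$ and $\bm c=\bm a$; the converse is immediate. I do not expect any real obstacle here — the argument is just concavity plus one application of summation by parts. The only mildly delicate point is the reduction to the case $a_i>0$ and the handling of null coordinates, and this is precisely where one must be careful not to wave hands.
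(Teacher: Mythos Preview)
The paper does not supply its own proof of this lemma; it is quoted verbatim as Lemma~15 of Kumagai--Hayashi and used as a black box in the optimality argument of \cref{subsubsec:gen optimality}. There is therefore nothing in the present paper to compare your argument against.

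That said, your proof is correct. The two ingredients --- the tangent-line bound from strict concavity of $\sqrt{\,\cdot\,}$ and Abel summation against the strictly increasing weights $g_i=\sqrt{b_i/a_i}$ --- combine cleanly, and the equality analysis is sound: equality in $\sum_i\sqrt{c_ib_i}\le\sum_i\sqrt{a_ib_i}$ forces both displayed inequalities to be tight, and the strict monotonicity of $g_i$ then pins down $D_i=0$ for every $i$. The only place worth tightening the exposition is the null-coordinate reduction: the hypothesis that $a_i/b_i$ is \emph{strictly} decreasing already forces $b_i>0$ everywhere and allows $a_i=0$ at most at the final index $N$; your partial-sum argument then indeed gives $c_N=0$, so the reduction is justified, but you might say this explicitly rather than leave the reader to reconstruct it.
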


We are now ready for the optimality proof.
\begin{proof}[Proof of \cref{prop:gen} (optimality)]
	To prove optimality we need to show that for any $\bmtilde{P}^n\succ \bm Q^n$, the infidelity between $\bmtilde{P}^n$ and $\bm{P}^n$ can be lower bounded by the Rayleigh-normal distribution. This, through Lemma~\ref{lem:pre_post_equiv}, will yield a lower bound on the infidelity between any final state $\bmtilde Q^n$ (i.e., any distribution majorised by the total initial state $\bm P^n$) and the total target state $\bm Q^n$. We will start by using the monotonicity of fidelity under coarse-graining to bound the fidelity between $\bmtilde{P}^n$ and $\bm{P}^n$ by the fidelity between their coarse-grained versions (with coarse-graining over particularly chosen bins of indices). We will then use \cref{lem:ratio} (along with the monotonicity properties of \cref{lem:monotonicity}) to bound the fidelity between coarse-grained versions of $\bm P^n$ and $\bmtilde{P}^n$ by the fidelity between coarse-grained versions of $\bm P^n$ and $\bm Q^n$. Next, by applying \cref{lem:clt}, we will show that this coarse-grained fidelity asymptotes to a fidelity between Gaussians. We will conclude by returning to the last part of \cref{lem:monotonicity}, which will allow us to give a final bound in terms of the Rayleigh-normal distribution.
	
	Fix $\zeta>0$ and $I\in \mathbb{N}$. Let $t,t'\in \mathbb{R}$ be those given by \cref{lem:monotonicity} and introduce 
	\begin{align}
	z_i:=t(1-i/I)+t'(i/I)
	\end{align}
	for \mbox{$0\leq i\leq I$}. Define a set of bins
	\begin{align}
	B_i&:=\lbrace K_n(z_i), \dots, K_n(z_{i+1})-1\rbrace
	\end{align}
	for $0\leq i<I$, as well as two end bins
	\begin{subequations}
	\begin{align}
	B_{-1}&:=\lbrace 1, \dots, K_n(t)-1\rbrace,\\
	B_I&:=\lbrace K_n(t'),\dots,\infty\rbrace,
	\end{align}
	\end{subequations}
	so that $\lbrace B_i\rbrace$ now gives a partition of all indices. Using the monotonicity of fidelity under coarse-graining we get
	\begin{align}
	\sqrt{F(\bmtilde P^n,\bm P^{n})}&=\sum_{i=-1}^I\sum_{j\in B_i} \sqrt{\tilde P^{n\downarrow}_jP^{n\downarrow}_j}\nonumber\\
	&\leq \sum_{i=-1}^I \sqrt{\sum_{j\in B_i}\tilde P^{n\downarrow}_j\sum_{j\in B_i}P^{n\downarrow}_j}=:r_n.\label{eqn:coarse}
	\end{align}

	We now define the limiting coarse-grained versions of total initial and target distributions
	\begin{subequations}
	\begin{align}
	a_i&:=\lim\limits_{n\to\infty}\sum_{j\in B_i}Q^{n\downarrow}_j,\\
	b_i&:=\lim\limits_{n\to\infty}\sum_{j\in B_i}P^{n\downarrow}_j,
	\end{align}
	\end{subequations}
	for $-1\leq i\leq I$. We note that the above limits all exist by \cref{lem:clt}, specifically 
	\begin{subequations}
	\begin{align}
	a_{-1}&=\Phi(t),\\
	a_{i}&=\Phi(z_{i+1})-\Phi(z_{i}),\\
	a_{I}&=1-\Phi(t'),\\
	b_{-1}&=\Phi_{\mu,\nu}(t),\\
	b_{i}&=\Phi_{\mu,\nu}(z_{i+1})-\Phi_{\mu,\nu}(z_{i}),\\
	b_{I}&=1-\Phi_{\mu,\nu}(t').
	\end{align}
	\end{subequations}
	We would also like to analogously define a distribution $\bm c$ that would be a coarse-grained version of $\bmtilde P^n$, but we have no guarantees that the corresponding limits exist. In lieu of this, we will use the vectorial Bolzano-Weirestrass Theorem\footnote{Start with a subset of indices $\lbrace m_l^{(-2)}\rbrace_l\subseteq \mathbb N$ on which $\lbrace r_{n}\rbrace_n$ converges to its limit superior. The scalar version of Bolzano-Weirestrass gives a subset $\lbrace m^{(i)}_l\rbrace_l\subseteq\lbrace m^{(i-1)}_l\rbrace_l$ on which $c_i$ exists. Applying this for each $-1\leq i\leq I$, one obtains a set of indices $\lbrace m_l\rbrace_l:=\lbrace m_l^{(I)}\rbrace_l$ with the desired property.}, which gives that there exists a strictly increasing set of indices $\lbrace m_l\rbrace_l\subset \mathbb{N}$ such that $\lbrace r_n\rbrace$ limits to its limit superior 
	\begin{align}
	\lim\limits_{l\to \infty} r_{m_l}=\limsup\limits_{n\to \infty} r_{n},
	\end{align}	
	\emph{and} that all the limits
	\begin{align}
	c_i&:=\lim\limits_{l\to\infty}\sum_{j\in B_i}\tilde P^{m_l\downarrow}_j
	\end{align}
	exist, for all $-1\leq i\leq I$.	
	
	We now want to apply \cref{lem:ratio} to bound the fidelity $F(\bm c,\bm b)$ with $F(\bm a,\bm b)$, but first we must show that $a_i/b_i$ is strictly decreasing. \cref{lem:monotonicity} allows us to relate this ratio at the ends, $i=-1$ and $i=I$, to the ratio of Gaussian derivatives,
	\begin{subequations}
	\begin{align}
	&\frac{a_{-1}}{b_{-1}}=\frac{\Phi(t)}{\Phi_{\mu,\nu}(t)}=\frac{\Phi'(s)}{\Phi'_{\mu,\nu}(s)},\\
	&\frac{a_{I}}{b_{I}}=\frac{1-\Phi(t')}{1-\Phi_{\mu,\nu}(t')}=\frac{\Phi'(s')}{\Phi'_{\mu,\nu}(s')}.
	\end{align}
	\end{subequations}	
	For $0\leq i<I$ we can apply Cauchy's mean value theorem, which gives that there exists some $s_i\in(z_i,z_{i+1})$ the ratio of finite differences is given by a ratio of derivatives
	\begin{align}
	\frac{a_i}{b_i}=\frac{\Phi(z_{i+1})-\Phi(z_i)}{\Phi_{\mu,\nu}(z_{i+1})-\Phi_{\mu,\nu}(z_i)}=\frac{\Phi'(s_i)}{\Phi'_{\mu,\nu}(s_i)}.
	\end{align}
	Given that $\lbrace s,s_0,\dots,s_{I-1},s'\rbrace$ is a strictly increasing sequence, and that $\Phi'(x)/\Phi_{\mu,\nu}'(x)$ is strictly decreasing on $(s,s')$ by \cref{lem:monotonicity}, we therefore have that $a_i/b_i$ is strictly decreasing as required.
	
	Now that we have shown that $a_i/b_i$ is strictly decreasing, we can apply \cref{lem:ratio}. This gives us 
	\begin{align}
	\limsup\limits_{n\to\infty}F(\bmtilde P^n,\bm P^{n})\leq F(\bm c,\bm b)&\leq F(\bm a,\bm b).
	\end{align}
	Expanding this out, we have
	\begin{align}
	&\limsup\limits_{n\to\infty}\sqrt{F(\bmtilde P^n,\bm P^{n})}\notag\\
	&\quad\leq \sqrt{\Phi(z_{0})\Phi_{\mu,\nu}(z_{0})}\notag\\
	&\qquad+\sum_{i=0}^{I-1}\sqrt{\Phi(z_{i+1})-\Phi(z_{i})}\sqrt{\Phi_{\mu,\nu}(z_{i+1})-\Phi_{\mu,\nu}(z_{i})}\notag \\
	&\qquad+\sqrt{\left(\vphantom{\frac{num}{den}}1-\Phi(z_{I})\right)\left(1-\Phi_{\mu,\nu}(z_{I})\vphantom{\frac{num}{den}}\right)}.
	\end{align}
	If we take $I\to\infty$, these finite differences above approach derivatives, and we get a bound in terms of a coarse-grained fidelity
	\begin{align}
	&\limsup\limits_{n\to\infty}\sqrt{F(\bmtilde P^n,\bm P^{n})}
	\notag\\&\qquad\qquad
	\leq\sqrt{\Phi(t)\Phi_{\mu,\nu}(t)}\notag\\
	&\qquad\qquad~\quad+\int_{t}^{t'}\sqrt{\Phi'(x)\Phi_{\mu,\nu}'(x)}~\mathrm{d}x\notag\\
	&\qquad\qquad~\quad+\sqrt{\left(\vphantom{\frac{num}{den}}1-\Phi(t')\right)\left(1-\Phi_{\mu,\nu}(t')\vphantom{\frac{num}{den}}\right)}.
	\end{align}
	Finally, we apply the last part of \cref{lem:monotonicity}, which allows us to bound this in terms of the Rayleigh-normal distribution, giving
	\begin{align}
	\liminf\limits_{n\to\infty}\delta(\bmtilde P^n,\bm P^{n})\geq {Z_\nu(\mu)}-\zeta.
	\end{align}
	Taking $\zeta\searrow0$, we can once more use the equivalence between pre- and post-majorisation, Lemma~\ref{lem:pre_post_equiv}, to conclude that
	\begin{align}
	\liminf\limits_{n\to\infty}\epsilon^*_0(n,R_\mu)\geq Z_\nu(\mu).
	\end{align}
	Using the above together with the duality property of Rayleigh-normal distributions, Eq.17~\eqref{eq:duality}, we obtain the upper bound on the optimal rate 
	\begin{align}
		R^*_0(n,\epsilon)
		\lesssim
		\frac{1}{D(\bm q)}\Biggl[D(\bm p)
		+\sqrt{\frac{V(\bm p)}{n}}Z^{-1}_{1/\nu}(\epsilon)
		\Biggr].
	\end{align}
\end{proof}

\section{Outlook}
\label{sec:outlook}

In this paper we have derived the exact second-order asymptotics of state interconversion under thermal operations between any two energy-incoherent states. It is then natural to ask whether such a characterisation is also possible for general, not necessarily energy-incoherent, states. Due to the fact that thermal operations are time-translation covariant, such that coherence and athermality form independent resources~\cite{lostaglio2015description,lostaglio2015quantum}, it seems unlikely that the current approach can be easily generalised. Instead, one would need to rely on the full power of Gibbs-preserving maps~\cite{faist2015gibbs,korzekwa2017structure} that form a superset of the thermal operations. For such maps, we believe that a reasonable conjecture is in fact given by Eqs.~\eqref{eq:intro/general} and~\eqref{eq:intro/creation}, with the relative entropy and relative entropy variance replaced by their fully quantum analogues given in Refs.~\cite{umegaki62} and~\cite{li12,tomamichel12}, respectively. 

We also provided a physical interpretation of our main result by considering several thermodynamic scenarios and explaining how our work can be employed to rigorously address the problem of thermodynamic irreversibility. We derived optimal values of distillable work and work of formation, and related them to the infidelity of these processes. This could potentially be used to clarify the notion of imperfect work~\cite{aberg2013truly,woods2015maximum,ng2017surpassing}, and to construct a comparison platform allowing one to continuously distinguish between work-like and heat-like forms of energy. We also discussed thermodynamic processes with finite-size working bodies, focusing particularly on the optimal performance of heat engines. We have shown that there are non-trivial conditions under which an engine can operate at Carnot efficiency and extract perfect work. This opens the possibility of engineering finite heat-baths and working bodies in order to minimise undesirable dissipation of free energy. Moreover, our formalism is general enough to address other interesting problems involving finite-size baths, like fluctuation theorems, Landauers' principle or the third law of thermodynamics~\cite{campisi2009finite,reeb2014improved,scharlau2018quantum}. 

A number of natural technical extensions to our result suggest themselves. We have used the infidelity as our error measure, and conjecture that \cref{thm:sec-ord thermal} will also hold when $\epsilon$ is a bound on the total variational distance. Our second-order expansion falls into a larger class of results known as \emph{small deviation} bounds, in which we consider a fixed error threshold $\epsilon$. Two natural extensions are to the regime of \emph{large deviations}~\cite{hayashi06}, in which a fixed rate is considered, and \emph{moderate deviations}~\cite{Chubb2017,cheng17}, in which the rate approaches its optimum \emph{and} the error vanishes. Last, but not least, we expect that our treatment of approximate majorisation can be extended to cover other distance measures.

\subsection*{Acknowledgements} 

We want to thank Nelly Ng for insightful discussions on the interplay between smoothing and thermomajorisation, as well as for the comments on the manuscript; Masahito Hayashi for discussions and feedback on early technical notes; Matteo Lostaglio for useful discussions on thermodynamic interpretations of second-order rates; and Joe Renes for helpful comments on the manuscript. CC and KK acknowledge support from the ARC via the Centre of Excellence in Engineered Quantum Systems (EQuS), project number CE110001013. CC also acknowledges support from the Australian Institute for Nanoscale Science and Technology Postgraduate Scholarship (John Makepeace Bennett Gift). MT is funded by an ARC Discovery Early Career Researcher Award, project number DE160100821.

\appendix

\section{Work extraction with Carnot efficiency}
\label{app:optimal_engine}

Here we consider a thermodynamic process involving a heat engine with a finite working body. Suppose the working body starts at the cold temperature, $T_c$. Then, the heat flowing from the hot bath will steadily increase the temperature of the working body, from $T_{\mathrm{c}}$ to $T_{\mathrm{c'}}$. We assume that thermalisation happens on a much shorter time-scale than the heat flow, so that the working body is at all times in thermal equilibrium. We are now interested in the amount of work that can be extracted if the engine at each time step operates with the maximal allowed Carnot efficiency (with respect to the background hot temperature and the  instantaneous temperature of the working body).

By definition, the efficiency of an infinitesimal process involving the flow of heat $dQ_{\mathrm{in}}$ from the hot bath to the engine, $dQ_{\mathrm{out}}$ of which flows out to the cold bath, while the remaining energy is converted into work $dW$, is given by
\begin{equation}
\eta:=\frac{\mathrm dW}{\mathrm dQ_{\mathrm{in}}}=\frac{\mathrm dW}{\mathrm dW+\mathrm dQ_{\mathrm{out}}},
\end{equation}
where the equality comes from the conservation of energy. Hence, the work extracted during the infinitesimal flow of heat $dQ_{\mathrm{out}}$ into the cold bath at temperature $T_{\mathrm{x}}$ that heats it up by $dT_{\mathrm{x}}$,
\begin{equation}
\mathrm dQ_{\mathrm{out}}=\frac{\mathrm d\langle E\rangle_{\bm \gamma_{\mathrm{x}}}}{\mathrm dT_{\mathrm{x}}}\,\mathrm dT_{\mathrm{x}},
\end{equation}
is given by
\begin{equation}
\mathrm dW=\frac{\eta}{1-\eta}\,\mathrm dQ_{\mathrm{out}}.
\end{equation}
The Carnot efficiency of an engine acting between the hot bath at fixed temperature $T_{\mathrm{h}}$ and the working body at varying temperature $T_{\mathrm{x}}$ is given by
\begin{equation}
\eta_{\mathrm{c}}(T_{\mathrm{x}})=1-\frac{T_{\mathrm{x}}}{T_{\mathrm{h}}}.
\end{equation}
Thus, the work produced by an engine working at maximum allowed efficiency is given by
\begin{align}
\mathrm dW=\left(\frac{T_{\mathrm{h}}}{T_{\mathrm{x}}}-1\right)\frac{\mathrm d\langle E\rangle_{\v{\gamma}_{\mathrm{x}}}}{\mathrm dT_{\mathrm{x}}}\,\mathrm dT_{\mathrm{x}}.
\end{align}

We now want to calculate the total work $W$ extracted by such an optimal engine while the temperature of the working body changes from $T_{\mathrm{c}}$ to $T_{\mathrm{c'}}$,
\begin{align}
W=\int_{T_{\mathrm{c}}}^{T_{\mathrm{c'}}}\mathrm dW.
\end{align}
By simply integrating by parts we get 
\begin{align}
\!\!W\!=\!\left(\frac{T_{\mathrm{h}}}{T_{\mathrm{x}}}-1\right)\langle E\rangle_{\bm \gamma_{\mathrm{x}}}\Biggr|_{T_{\mathrm{x}}=T_{\mathrm{c}}}^{T_{\mathrm{x}}=T_{\mathrm{c'}}}\!+T_{\mathrm{h}}\int_{T_{\mathrm{c}}}^{T_{\mathrm{c'}}}\frac{\langle E\rangle_{\bm \gamma_{\mathrm{x}}}}{T_{\mathrm{x}}^2}\,\mathrm dT_{\mathrm{x}}.\!
\end{align}
The second term can be calculated by switching from temperature $T_{\mathrm{x}}$ to inverse temperature $\beta_{\mathrm{x}}$ and recalling that the average energy is a negative derivative of $\log\Z_{\mathrm{x}}$ over $\beta_{\mathrm{x}}$:
\begin{align}
\nonumber
T_{\mathrm{h}}\int_{T_{\mathrm{c}}}^{T_{\mathrm{c'}}}\frac{\langle E\rangle_{\bm \gamma_{\mathrm{x}}}}{T_{\mathrm{x}}^2}\,\mathrm dT_{\mathrm{x}}=&\frac{1}{\beta_{\mathrm{h}}}\int_{\beta_{\mathrm{c}}}^{\beta_{\mathrm{c'}}}\frac{\mathrm d\log \Z_{\mathrm{x}}}{\mathrm d\beta_{\mathrm{x}}}\,\mathrm d\beta_{\mathrm{x}}\\
=&\frac{1}{\beta_{\mathrm{h}}}\left(\log\Z_{\mathrm{c'}}-\log\Z_{\mathrm{c}}\right).
\end{align}
By noting that the entropy of a thermal equilibrium state is given by
\begin{align}
H(\v{\gamma}_{\mathrm{x}})=\beta_{\mathrm{x}}\langle E\rangle_{\v{\gamma}_{\mathrm{x}}}+\log \Z_{\mathrm{x}},
\end{align}
we thus have
\begin{align}
W=\left(\langle E\rangle_{\v{\gamma}_{\mathrm{c}}}-\frac{H(\v{\gamma}_{\mathrm{c}})}{\beta_{\mathrm{h}}}\right) - \left( \langle E\rangle_{\v{\gamma}_{\mathrm{c'}}}-\frac{H(\v{\gamma}_{\mathrm{c'}})}{\beta_{\mathrm{h}}}\right).
\end{align}
Finally, comparing the above with Eq.~\eqref{eq:rel_ent_gibbs} we arrive at
\begin{align}
W=k_B T_{\mathrm{h}}\bigl(D(\v{\gamma}_{\mathrm{c}}||\v{\gamma}_{\mathrm{h}})-D(\v{\gamma}_{\mathrm{c'}}||\v{\gamma}_{\mathrm{h}})\bigr),
\end{align}
which is equal to the change of free energy of the finite bath.

\section{Proof of Lemma~\ref{lemma:fidelity}}
\label{app:lemma_fidelity}

\subsection{Preliminaries}

In Ref.~\cite{vidal2000approximate} an explicit construction the solution $\tilde{\v{p}}^{\star}$ to the following maximisation problem,
\begin{equation}
\tilde{\v{p}}^{\star}=\argmax_{\tilde{\v{p}}:~\tilde{\v{p}}\succ\v{q}}F(\v{p},\tilde{\v{p}}),
\end{equation}
was given. We will now describe the construction of this optimal distribution, as it is crucial for our proof, the second part which will very closely follow the reasoning presented in Ref.~\cite{vidal2000approximate}. As explained in Section~\ref{sec:approximate}, without loss of generality we will assume that all the distributions are non-increasingly ordered.

First, for any distribution $\v{a}$ define
\begin{equation}
\label{eq:E_Delta}
E^{\v{a}}_k:=\sum_{i=k}^d a_i,\quad \Delta_{k_1}^{k_2}(\v{a}):=\sum_{i=k_1}^{k_2-1} a_i=E^{\v{a}}_{k_1}-E^{\v{a}}_{k_2}.
\end{equation}
Note that $\v{p}\succ\v{q}$ is equivalent to \mbox{$E^{\v{p}}_k\leq E^{\v{q}}_k$} for all $k$. Now, for a given $\v{p}$ and $\v{q}$ the construction of $\tilde{\v{p}}^\star$ is given by the following iterative procedure. Set $l_0=d+1$ and define
\begin{equation}
\label{eq:l_r}
l_{j}:=\argmin_{k<l_{j-1}}\frac{\Delta_k^{l_{j-1}}(\v{q})}{\Delta_k^{l_{j-1}}(\v{p})},\quad r_j:=\frac{\Delta_{l_j}^{l_{j-1}}(\v{q})}{\Delta_{l_j}^{l_{j-1}}(\v{p})}.
\end{equation}
If the minimisation defining $l_j$ does not have a unique solution then $l_j$ is chosen to be the smallest possible. We will also denote by $N$ an index for which $l_N=1$. The $i$-th entry of the optimal vector for $i\in\{l_{j},\dots,l_{j-1}-1\}$ is then given by
\begin{equation}
\tilde{p}^{\star}_i=r_jp_i.
\end{equation}

It is straightforward to verify that $\tilde{\v{p}}^\star$ is normalised, and $\tilde{\v{p}}^\star\succ\v{q}$ as the construction guarantees that \mbox{$E^{\v{p}}_k\leq E^{\v{q}}_k$} for all $k$. Moreover, the optimal fidelity between $\v{p}$ and a distribution that majorises $\v{q}$ is given by
\begin{eqnarray}
\sqrt{F(\v{p},\tilde{\v{p}}^{\star})}&=&\sum_{i=1}^d\sqrt{p_i\tilde{p}_i^{\star}}=\sum_{j=1}^{N}\sum_{i=l_j}^{l_{j-1}-1} \sqrt{p_i\tilde{p}^\star_i} \nonumber\\
&=&\sum_{j=1}^{N} \left(\frac{\Delta_{l_{j}}^{l_{j-1}}(\v{q})}{\Delta_{l_j}^{l_{j-1}}(\v{p})}\right)^{\frac{1}{2}}\sum_{i=l_j}^{l_{j-1}-1} p_i
\nonumber\\
&=&\sum_{j=1}^{N} \left(\Delta_{l_{j}}^{l_{j-1}}(\v{q})\Delta_{l_j}^{l_{j-1}}(\v{p})\right)^{\frac{1}{2}}.
\end{eqnarray}
The crucial observation in proving the optimality of the above, which we will also need in our proof, is that for all $j$ we have
\begin{equation}
r_j< r_{j+1}.
\end{equation}
This follows from the definition of $l_j$, $r_j$ and the fact that for $a,b,c,d> 0$ one has (see Ref.~\cite{vidal2000approximate} for details)
\begin{equation}
\frac{a}{b}\leq\frac{a+c}{b+d}\quad\Longleftrightarrow\quad \frac{a}{b}<\frac{c}{d}.
\end{equation}

\subsection{Proper proof}

\begin{proof}
	We will prove the equality in Eq.~\eqref{eq:opt_fidelity} by showing that the following two inequalities hold
	\begin{subequations}
		\begin{eqnarray}
		\label{eq:lemma3_1}
		\max_{\tilde{\v{p}}:~\tilde{\v{p}}\succ\v{q}}F(\v{p},\tilde{\v{p}})&\leq&\max_{\tilde{\v{q}}:~\v{p}\succ\tilde{\v{q}}}F(\v{q},\tilde{\v{q}}),\\
		\label{eq:lemma3_2}
		\max_{\tilde{\v{p}}:~\tilde{\v{p}}\succ\v{q}}F(\v{p},\tilde{\v{p}})&\geq&\max_{\tilde{\v{q}}:~\v{p}\succ\tilde{\v{q}}}F(\v{q},\tilde{\v{q}}).
		\end{eqnarray}
	\end{subequations}
	We start with the easier part, Eq.~\eqref{eq:lemma3_1}. It is enough to show that the inequality holds for any $\tilde{\v{p}}$ within the constraints. Let us then take any $\tilde{\v{p}}$ such that it majorises $\v{q}$. Due to Theorem~\ref{thm:major} this is equivalent to the existence of a bistochastic matrix $B_0$ such that $B_0\tilde{\v{p}}=\v{q}$. This implies that
	\begin{equation}
	\max_{\tilde{\v{q}}:~\v{p}\succ\tilde{\v{q}}} F(\v{q},\tilde{\v{q}}) = \max_{B} F(\v{q},B\v{p}),
	\end{equation}
	where the maximisation on the right hand side is over all bistochastic matrices $B$. We now observe that
	\begin{subequations}
	\begin{align}
		\max_{B} F(\v{q},B\v{p})&=\max_{B} F(B_0\tilde{\v{p}},B\v{p})\\
		&\geq F(B_0\tilde{\v{p}},B_0\v{p})\\
		&\geq F(\tilde{\v{p}},\v{p}),
	\end{align}
	\end{subequations}
	where in the last step we used the fact that fidelity obeys data processing inequality. We thus have 
	\begin{equation}
	\max_{\tilde{\v{q}}:~\v{p}\succ\tilde{\v{q}}} F(\v{q},\tilde{\v{q}})\geq F(\tilde{\v{p}},\v{p}),
	\end{equation}
	for any $\tilde{\v{p}}$ majorising $\v{q}$, and so Eq.~\eqref{eq:lemma3_1} holds.

	We now proceed to proving Eq.~\eqref{eq:lemma3_2}. It is again enough to show that the inequality holds for any $\tilde{\v{q}}$ within the constraints. Let us then take any $\tilde{\v{q}}$ such that it is majorised by $\v{p}$. We now have
	\begin{eqnarray}
	\sqrt{F(\v{q},\tilde{\v{q}})}&=&\sum_{i=1}^d\sqrt{q_i\tilde{q}_i}=\sum_{j=1}^{N}\sum_{i=l_j}^{l_{j-1}-1} \sqrt{q_i\tilde{q}_i} \nonumber\\
	&\leq & \sum_{j=1}^{N} \left(\sum_{i=l_j}^{l_{j-1}-1}q_i\right)^{\frac{1}{2}} \left(\sum_{i=l_j}^{l_{j-1}-1}\tilde{q}_i\right)^{\frac{1}{2}}\nonumber\\
	&=&\sum_{j=1}^{N}\left(\Delta_{l_j}^{l_{j-1}}(\v{q})\Delta_{l_j}^{l_{j-1}}(\tilde{\v{q}})\right)^{\frac{1}{2}},		
	\end{eqnarray}
where $l_j$ and $\Delta_{k_1}^{k_2}$ are defined as in Eqs.~\eqref{eq:E_Delta}-\eqref{eq:l_r}. We now introduce
\begin{equation}
x_j:=E_{l_j}^{\tilde{\v{q}}}-E_{l_j}^{\v{p}}\geq 0,
\end{equation}
where the inequality holds for every $j$ because $\v{p}\succ\tilde{\v{q}}$. Observing that
\begin{equation}
\Delta_{l_j}^{l_{j-1}}(\tilde{\v{q}})=\Delta_{l_j}^{l_{j-1}}(\v{p})+x_j-x_{j-1}\geq 0
\end{equation}
we arrive at \mbox{$\sqrt{F(\v{q},\tilde{\v{q}})}\leq f(\v{x})$} with

\begin{equation}
\!\!f(\v{x}):=\sum_{j=1}^{N}\left(\Delta_{l_j}^{l_{j-1}}(\v{q})\left(\Delta_{l_j}^{l_{j-1}}(\v{p})+x_j-x_{j-1}\right)\right)^{\frac{1}{2}}\!\!\!\!.\!\!
\end{equation}

We will now show that $f(\v{x})$ achieves its maximum within the positive orthant $x_j\geq 0$ when $\bm x=\bm 0$, which will finish the proof. This is because then
\begin{equation}
F(\v{q},\tilde{\v{q}})\leq F(\v{p},\tilde{\v{p}}^\star)=\max_{\tilde{\v{p}}:~\tilde{\v{p}}\succ\v{q}}F(\v{p},\tilde{\v{p}})
\end{equation}
for any $\tilde{\v{q}}$ majorised by $\v{p}$, which implies Eq.~\eqref{eq:lemma3_2}. First, by direct calculation one can find a matrix $M$ of second derivatives of $f(\v{x})$, i.e., \mbox{$M_{ij}=\frac{\partial^2 f(\v{x})}{\partial x_{i}\partial x_j}$}. Then, using Gershgorin circle theorem, one can verify that $M$ is negative definite in the allowed region of $\v{x}$, so that there are no local extrema and the maximal value must be obtained at the boundary. Finally,
\begin{equation}
\frac{\partial f(\v{x})}{\partial x_j}\Bigg\vert_{x_j=0}= \frac{1}{2}\left(\sqrt{r_j}-\sqrt{r_{j+1}}\right)<0,
\end{equation} 
which means that the maximal value is obtained for \mbox{$\v{x}=[0,\dots,0]$}, which finishes the proof.
\end{proof}

\section{Efficient algorithm for calculating interconversion infidelities}
\label{app:numerics}

The construction given in \cref{app:lemma_fidelity} gives a natural algorithm for calculating $\max_{\bmtilde{p}:\bmtilde{p}\succ\bm q}F(\bm p,\bmtilde p)$. The run-time of this algorithm is $\mathcal{O}(d^2)$, where $d$ is the size of the input distributions. We now want to argue that this algorithm can be adapted for states described by $\bm p^{\otimes n}$, such that the optimal interconversion rates can be numerically calculated in a time which is efficient in $n$, as is done in \cref{fig:numerics2,fig:numerics1}.

The key property we will leverage is that whilst distributions such as $\bm p^{\otimes n}$ have an exponential number of entries, they only possess a polynomial number of \emph{distinct} entries (in this case $\mathcal{O}(n^{d-1})$). As majorisation is invariant under permutations, it is only the distinct entries (and their degeneracies) that are relevant to our calculation. 

Consider taking as input two distributions $\bm P$ and $\bm Q$ which possess $D=\exp(\mathcal{O}(n))$ total entries, but each only $\mathrm{poly}(n)$ distinct entries. This means there exists indices \mbox{$1=i_1<i_2<\dots<i_t<i_{t+1}=D+1$}, where $t=\mathrm{poly}(n)$, such that $P^{\downarrow}_i$ and $Q^{\downarrow}_i$ are constant on each interval $i\in \lbrace i_s,\dots,i_{s+1}-1\rbrace$ for $s\in\lbrace 1,\dots,t\rbrace$.

The main step in the algorithm, and the bottleneck giving an exponential run-time, is calculating the pivot indices $\lbrace l_j\rbrace_j$ used to construct $\bmtilde{P}$. Specifically these take the form
\begin{align}
l_j:=\argmax_{k<l_{j-1}}\frac{\sum_{i=k}^{l_{j-1}-1}Q^{\downarrow}_i}{\sum_{i=k}^{l_{j-1}-1}P^{\downarrow}_i}.
\end{align}
Using the constancy of $P^\downarrow_i$ and $Q^\downarrow_i$ on each of the intervals we can see that, for a fixed $s$, the function being optimised takes the form
\begin{align}
\frac{\sum_{i=k}^{l_{j-1}-1}Q^{\downarrow}_i}{\sum_{i=k}^{l_{j-1}-1}P^{\downarrow}_i}=\frac{\alpha k+\beta}{\gamma k +\delta},
\end{align}
for any $k\in \lbrace i_s,\dots,i_{s+1}-1\rbrace$. As this function is monotonic as a function of $i$, we conclude that the indices $l_j$ must lie on the edges of these intervals, i.e.\ $\lbrace l_j\rbrace_j\subseteq \lbrace i_s\rbrace_s$. This means that we can restrict our attention only to the these `edge indices' without loss of generality, lowering the algorithms run-time down from $\mathcal{O}(D^2)$ to $\mathcal{O}(t^2)$.

Using the above argument, we now have an efficient algorithm for computing $\epsilon^*_0(n,R;\bm p,\bm q)$. Utilising the idea of embedding, this also allows us to calculate the thermal variant of this, $\epsilon^*_\beta(n,R;\bm p,\bm q)$. Finally, by sweeping over $R$, we can use this to calculate $R^*_\beta(n,R;\bm p,\bm q)$. Examples of this are shown in \cref{fig:numerics2,fig:numerics1}.

\section{Approximate thermomajorisation based on total variation distance}
\label{app:total_variation}

Here we show how to modify the proofs of Lemmas~\ref{lem:pre_post_equiv}, \ref{lem:post_thermo_equiv} and \ref{lem:pre_thermo_equiv}, so that they hold for approximate thermomajorisation defined with total variation distance \mbox{$\delta(\v{p},\v{q})=\frac{1}{2}||\v{p}-\v{q}||$}. In all three proofs one needs to replace the statement ``embedding is fidelity-preserving'' with ``embedding preserves the total variation distance'', and ``fidelity is non-decreasing under stochastic maps'' with ``total variation distance is non-increasing under stochastic maps''. This is already enough to prove modified Lemma~\ref{lem:post_thermo_equiv}. In Lemma~\ref{lem:pre_thermo_equiv} we additionally need to replace the expression $\epsilon_0=(3-2\sqrt{2})/6$ with $\epsilon_0=1/6$. Finally, we replace the second part of Lemma~\ref{lem:pre_post_equiv} with the following reasoning.

Assume that $\bm p\succ_\epsilon\bm q$. This means that there exists a $\bm{\tilde q}$ such that $\bm p\succ \bm{\tilde q}$ and $\delta(\bm q,\bm{\tilde q})\leq\epsilon$. Define $M$ to be the largest integer such that
\begin{align}
\sum_{i=1}^{M}p_i^{\downarrow}\leq 1-\epsilon,
\end{align}
and then define $\bm{\tilde p}$ by cutting off the tail of $\bm p$, and placing all of its mass into the largest element
\begin{align}
\tilde p_i:=
\begin{dcases}
p_i^{\downarrow}+\epsilon & \text{for }i=1,\\
p_i^{\downarrow} & \text{for }1<i<M,\\
p_i^{\downarrow}+1-\epsilon-\sum_{i=1}^{M}p_i^{\downarrow} & \text{for }i=M,\\
0 & \text{for }i>M.
\end{dcases}
\end{align}
By definition of $M$ we can see that $\delta(\bm p,\bm{\tilde p})=\epsilon$. We now need to show that $\bm{\tilde p}\succ \bm q$. For $k\geq M$ we have
\begin{align}
\sum_{i=1}^{k} {\tilde p}_i^{\downarrow}=1\geq \sum_{i=1}^{k} q_i^{\downarrow}.
\end{align}
For $k< M$, we can use $\bm p\succ \bm{\tilde q}$ and $\delta(\bm q,\bm{\tilde q})\leq \epsilon$ to give
\begin{align}
\sum_{i=1}^{k} {\tilde p}_i^{\downarrow}
=\sum_{i=1}^{k} p_i^{\downarrow}+\epsilon
\geq \sum_{i=1}^{k}\tilde q_i^{\downarrow}+\epsilon
\geq \sum_{i=1}^{k} q_i^{\downarrow}.
\end{align}
Thus $\bm p \prescript{}{\epsilon}{\succ}~\bm q\impliedby\bm p\succ_\epsilon\bm q$.

~~\\

\section{Proof of Lemma~\ref{lem:clt-magnitude}}
\label{app:lemma_CLT}
\begin{proof}
	Consider a discrete random variable $L=-\log a$, such that $\langle L\rangle_{\v{a}}=H(\bm a)$ and \mbox{$\Var_{\bm a}(L)=V(\bm a)$}, and therefore
	\begin{align}
	\log k_n(x)=n\langle L\rangle_{\v{a}}+x\sqrt{n\Var_{\v{a}}(L)}.
	\end{align}
	If we let $\lbrace L_j\rbrace_{1\leq j\leq n}$ be i.i.d.\ copies of $L$, then we can write the tail bound of $\bm a^{\otimes n}$ in terms of tail bounds on the average of these variables as
	\begin{align}
	&\sum_{i}\left\lbrace 
	\left(\bm a^{\otimes n}\right)_i \middle|
	\left(\bm a^{\otimes n}\right)_i\geq 1/k_n(x)		
	\right\rbrace\\
	&\quad\quad=\sum_{i_1,\dots,i_n}\left\lbrace \prod_{j=1}^{n}a_{i_j} \middle|\prod_{j=1}^{n}a_{i_j}\geq 1/k_n(x)	\right\rbrace\\
	&\quad\quad=\sum_{i_1,\dots,i_n}\left\lbrace \prod_{j=1}^{n}a_{i_j} \middle|-\sum_{j=1}^{n}\log a_{i_j}\geq \log k_n(x)	\right\rbrace\\
	&\quad\quad=\Pr\left[\sum_{j=1}^n L_j\leq n\langle L\rangle_{\v{a}}+ x\sqrt{n\Var_{\v{a}}(L)}\right].
	\end{align}
	By applying the standard central limit theorem
	\begin{align}
	\lim\limits_{n\to\infty}\!\Pr\left[\sum_{j=1}^n \left(\frac{L_j-\langle L\rangle_{\v{a}}}{\sqrt{\Var_{\v{a}}(L)}}\right)\leq x\sqrt{n}\right]=\Phi\left(x\right)\!,
	\end{align}
	we get the desired bound.
\end{proof}

\end{document}